\documentclass[11pt]{article}

\usepackage[letterpaper, margin=1in]{geometry}
\usepackage{amsmath,amssymb,amsthm}
\usepackage{mathtools}
\usepackage{graphicx}
\usepackage{enumitem}
\usepackage{thm-restate}
\usepackage{hyperref}
\usepackage[nameinlink,capitalize]{cleveref}
\usepackage[linesnumbered,ruled,vlined]{algorithm2e}
\SetAlgoHangIndent{0pt}
\SetKwInOut{KwAssume}{Assumption}
\usepackage{lineno}
\usepackage[T1]{fontenc}
\usepackage[utf8]{inputenc}
\usepackage[
  backend=biber,
  doi=true,
  url=false,
  maxnames=99,
]{biblatex}
\usepackage{tabularx}
\usepackage{booktabs}
\usepackage{array}
\usepackage{makecell}
\newcolumntype{C}[1]{>{\centering\arraybackslash}m{#1}}
\newcolumntype{Y}{>{\centering\arraybackslash}X}
\declaretheorem[numberwithin=section]{theorem}
\declaretheorem[sibling=theorem]{lemma}
\declaretheorem[sibling=theorem]{claim}
\declaretheorem[sibling=theorem]{fact}

\declaretheoremstyle[
  bodyfont=\normalfont,
]{definitionstyle}

\declaretheorem[
  sibling=theorem,
  style=definitionstyle,
]{definition}

\declaretheorem[
  sibling=theorem,
  style=definitionstyle,
]{problem}



\crefname{theorem}{Theorem}{Theorems}
\crefname{lemma}{Lemma}{Lemmas}
\crefname{claim}{Claim}{Claims}
\crefname{fact}{Fact}{Facts}
\crefname{definition}{Definition}{Definitions}
\crefname{problem}{Problem}{Problems}
\crefname{appendix}{Appendix}{Appendices}
\crefname{algocf}{Algorithm}{Algorithms}

\mathtoolsset{showonlyrefs=true}

\title{Faster Exact Algorithms for Equal-Subset-Sum}

\author{
Ryosuke Yamano\thanks{
Department of Computer Science, Graduate School of Information Science and Technology, The University of Tokyo, Japan.
ryoyamano15@g.ecc.u-tokyo.ac.jp.
}
\and
Tetsuo Shibuya\thanks{
Division of Medical Data Informatics, Human Genome Center, Institute of Medical Science, The University of Tokyo, Japan.
tshibuya@hgc.jp.
Supported by KAKENHI Grant Number 23K28035.
}}

\date{}

\begin{document}

\maketitle

\begin{abstract}
We study exact algorithms for Equal-Subset-Sum in the worst-case setting:
given a set $S$ of $n$ integers, find two distinct subsets
$A,B\subseteq S$ whose sums are equal.
We establish a new state-of-the-art bound for this problem by improving
the fastest known algorithm, due to Randolph and Węgrzycki (STOC 2026),
from $O^*(1.7067^n)$ time and space to an algorithm that runs in
$O^*(1.6994^n)$ time and uses $O^*(1.5664^n)$ space.
We also improve the best known polynomial-space running time, due to
Mucha, Nederlof, Pawlewicz, and Węgrzycki (ESA 2019), from
$O^*(2.6817^n)$ to $O^*(2.5430^n)$.
Finally, we investigate time-space tradeoffs for this problem and improve
the running times achievable under a broad range of exponential-space
bounds.
\end{abstract}

\thispagestyle{empty}
\clearpage
\setcounter{page}{1}

\section{Introduction}

Subset-Sum is a fundamental NP-hard problem, which can be formulated as follows.

\begin{problem}[Subset-Sum]
Given a multiset $S$ of $n$ integers and a target integer $t$, output a
subset $A\subseteq S$ such that $t=\sum_{a\in A}a$, if such a subset
exists.
\end{problem}

Subset-Sum admits a simple meet-in-the-middle algorithm, due to Horowitz
and Sahni \cite{horowitz1974computing}, that runs in $O^*(2^{n/2})$ time \footnote{The $O^*(\cdot)$ notation suppresses $\mathrm{poly}(n)$ factors.}.
This remains the standard running-time bound for exact algorithms for
Subset-Sum, and the fastest known algorithm
\cite{chen_et_al:LIPIcs.APPROX/RANDOM.2023.39} improves over the
$O(2^{n/2})$ bound only by a polynomial factor.
Whether there exists an algorithm for Subset-Sum that runs in
$O^*(2^{(0.5-\delta)n})$ time for some constant $\delta>0$ remains a
long-standing open question.

Closely related to this open problem, Howgrave-Graham and Joux
\cite{howgrave2010new} broke this meet-in-the-middle barrier in the
\emph{average-case} setting by giving an $O^*(2^{0.337n})$-time
algorithm, which was later improved to $O^*(2^{0.283n})$ by
\cite{becker2011improved,bonnetain2020improved}.
The algorithm of Howgrave-Graham and Joux introduced the representation
technique, which was later also used for Equal-Subset-Sum (ESS), an important variant of Subset-Sum that can be
formulated as follows.

\begin{problem}[Equal-Subset-Sum (ESS) \cite{WOEGINGER1992299}]
Given a set $S$ of $n$ integers, output two distinct subsets
$A,B\subseteq S$ such that $\sum_{a\in A}a=\sum_{b\in B}b$, if such
subsets exist.
\end{problem}

ESS admits a simple meet-in-the-middle algorithm that runs in
$O^*(3^{n/2})$ time.
Surprisingly, the authors of \cite{mucha_et_al:LIPIcs.ESA.2019.73} broke
this meet-in-the-middle barrier for ESS in the \emph{worst-case} setting
by exploiting the representation technique, giving an algorithm that runs
in $O^*(1.7088^n)$ time and space.
In the \emph{average-case} setting, the authors of \cite{chen2022average}
gave an $O^*(3^{0.387n})\le O^*(1.5299^n)$-time algorithm for ESS.
The open question of whether the running time for ESS in the worst-case
setting can be improved beyond $O^*(1.7088^n)$
\cite[Section~5, Question~1]{jin_et_al:LIPIcs.ESA.2025.86}
was answered affirmatively very recently by Randolph and Węgrzycki
\cite{STOC2026.Randolph}, who gave an algorithm that runs in
$O^*(1.7067^n)$ time and space.

\subsection{Our Results}

In this work, we focus on exact algorithms for ESS in the worst-case
setting.
All our algorithms are Monte Carlo algorithms that never return false
positives, and their error probabilities can be reduced to
$2^{-\Omega(n)}$ by repetition.
Our first result establishes a new state-of-the-art algorithm for ESS by further improving the result of \cite{STOC2026.Randolph}.
While the running-time improvement is our primary contribution, the
algorithm also reduces the space usage by a much larger exponential
factor.

\begin{restatable}{theorem}{FastExpSpaceThm}
\label{thm:MainResult-ExpSpace}
There exists a Monte Carlo algorithm for ESS that runs in
$O^*(1.6994^n)$ time and uses $O^*(1.5664^n)$ space.
\end{restatable}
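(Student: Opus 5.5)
We plan to follow the framework of Mucha, Nederlof, Pawlewicz, and Węgrzycki and its refinement by Randolph and Węgrzycki, and to improve its balancing. The proof splits into cases according to the structure of a fixed solution. Fix a minimal solution pair $(A,B)$, with $A\cap B=\emptyset$ without loss of generality, and write $|A\cup B|=m$. We split into two regimes according to how many equal-sum pairs exist relative to $m$.

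\textbf{Regime 1: many solutions.} If $S$ has many equal-sum pairs, or $m$ is far from $n/2$, we use the representation technique. A solution $(A,B)$ with $A\cup B=X$ extends to exponentially many triples $(A\cup C, B\cup C, C)$ with $C\subseteq S\setminus X$. Each such triple corresponds to a vector in $\{-1,0,1\}^n$ split into two halves, and the number of representations grows with $n-m$. We pick a random prime $p$ close to this number of representations and a random residue $r$. We then enumerate all half-vectors $u\in\{-1,0,1\}^{n/2}$, with a prescribed number of nonzero entries, whose weighted sum is $\equiv r \pmod p$. Matching the two sides (for example with a list-merge of Schroeppel–Shamir type) recovers a solution with constant probability. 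The space cost is the size of the filtered lists. Choosing $p$ larger than the number of representations trades time for space, and this is where we gain the $O^*(1.5664^n)$ space bound.

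\textbf{Regime 2: few solutions.} When $m$ is close to the critical value, we use the bound behind the Randolph–Węgrzycki improvement. Either the instance has additive structure, in which case we detect many collisions by a birthday-type argument, sampling subsets of size about $m/2$, or the sums of small subsets are nearly injective. In the injective case, we run a meet-in-the-middle over $\{-1,0,1\}$-vectors restricted to weight profiles and apply the representation filtering modulo random primes. We would make this step more refined by parametrizing the weight of $C$ and the sizes $|A|,|B|$, and by optimizing each case with a numerical program over these parameters (entropy expressions $h(\cdot)$).

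\textbf{Final step and main obstacle.} The last step is to take the maximum over all case parameters of the minimum over algorithm choices, and to check numerically that it is at most $1.6994^n$, while every list stays below $1.5664^n$. We expect the main obstacle to be the critical middle case where $m\approx 0.5n$ to $0.6n$. There, neither the representation count nor the few-solutions injectivity gives enough slack. We would first try a new hybrid that combines partial enumeration with representations over a second random prime, so that the two bounds are balanced, and tune the parameters numerically until the worst case drops below the previous $1.7067$.
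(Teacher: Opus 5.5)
Your proposal is a research plan rather than a proof. The only step that could push the exponent below the known bounds is deferred (``we would first try a new hybrid \dots\ and tune the parameters numerically until the worst case drops below $1.7067$''), and no concrete algorithm or numerical bound is ever produced. Regime~1, as you describe it, is essentially the balanced/unbalanced machinery of Mucha et al., which on its own gives only $O^*(1.7088^n)$. Regime~2 is too unspecified to yield any exponent.

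Two further points are wrong. First, writing $\ell=m$ and $\ell'=\ell/n$, you place the bottleneck at $\ell'\in[0.5,0.6]$. There the balanced algorithm already runs in $O^*(2^{\max\{\ell',1/2\}n})\le O^*(2^{0.6n})$. The real bottleneck is where the balanced bound $2^{\ell' n}$ meets the unbalanced bound $2^{(H(\ell')+\ell')n/2}$, near $\ell'\approx 0.773$. Second, taking $p$ larger than the number of representations cannot lower the running time. The per-trial list shrinks by exactly the factor by which the probability of hitting a good residue drops. In the paper the space bound does not come from such a trade at all; it is simply the list size $|C|\approx 2^{\ell-k}$ at the worst case.

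The missing idea is a concrete way to shrink the minimum solution size at a controlled price. The paper picks $2k$ random elements, pairs them, and replaces each pair $(x_i,y_i)$ by $x_i+y_i$ and $x_i-y_i$. With probability $\binom{\ell}{2k}/\binom{n}{2k}$, all chosen elements lie in a fixed minimum solution $A\cup B$. In that case each pair's contribution is encoded by exactly one of the two new elements, so the modified instance has a solution of size $\ell-k$.

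The balanced algorithm with $p_{\max}=2^{n-(\ell-k)}$ then costs $O^*(2^{\ell-k})$ time and space per trial. Two ingredients make this work:
\begin{itemize}
\item The lower bound $|\Phi|\ge 2^{n-(\ell-k)}$ follows from minimality of $\ell$ in the \emph{original} instance: a collision among subsets of $Z$ would map back to a solution of size at most $n-\ell+2k<\ell$.
\item A parity vector $v_X\in\{0,1\}^k$ with the constraint $v_X+v_Y=\mathbf{1}$ filters out spurious collisions. The found pair then has exactly one new element per replaced pair in $X\Delta Y$, so it can be mapped back.
\end{itemize}

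The resulting time exponent is $\ell'-k'+H(2k')-\ell'H(2k'/\ell')$. The paper uses $k'=\max\{0,\ell'-\alpha_1\}$ on $(1/2,\gamma_1]$, and the unbalanced algorithm (with Schroeppel--Shamir space $2^{(H(\ell')+\ell')n/4}$) elsewhere. This puts the worst case at $\ell'=\gamma_1\approx 0.7916$, with time exponent at most $0.76502$ and space exponent $\alpha_1\le 0.64738$. Without this, or some equally concrete mechanism that improves the time at $\ell'\approx 0.77$--$0.79$, nothing in your outline yields $1.6994^n$ or $1.5664^n$.
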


Our second result improves the polynomial-space running time for ESS.
We improve upon the $O^*(2.6817^n)$-time algorithm of
\cite{mucha_et_al:LIPIcs.ESA.2019.73} by analyzing two approaches suited
to different parameter regimes: one based on recent improvements for
low-space Element Distinctness \cite{chen2022truly}, and the other using
the fast polynomial-space Subset-Sum algorithm of
\cite{Bansal2018-FastPolySpaceSubsetSum} as a subroutine.

\begin{restatable}{theorem}{FastPolySpaceThm}
\label{thm:MainResult-PolySpace}
There exists a Monte Carlo algorithm for ESS that runs in
$O^*(2.5430^n)$ time and uses polynomial space.
\end{restatable}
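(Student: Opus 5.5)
The plan is to follow the two-regime strategy announced before the statement and take the minimum of two polynomial-space algorithms, balanced against a parameter describing the instance. The standard preprocessing of \cite{mucha_et_al:LIPIcs.ESA.2019.73} applies. If $S$ has more than roughly $1.5^n$ or $2^n/\mathrm{poly}(n)$ distinct subset sums relative to the range, then by pigeonhole a collision exists. It can then be found by hashing into a small range and running a polynomial-space collision-finding (Floyd / Pollard-rho style) walk over subsets. Otherwise the instance has few distinct subset sums, and we get additive structure to exploit. We parametrize by the size of a minimal solution: a pair of disjoint sets $A,B$ with $|A|+|B|=\ell$ (we may assume disjointness by removing the intersection). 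Guessing $\ell$ (with $\ell=\alpha n$) costs only a polynomial factor.

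\textbf{Approach 1 (large $\ell$).} Reduce ESS to finding a collision of the map $f\colon (A,B)\mapsto \sum A-\sum B$, restricted to pairs of disjoint subsets of appropriate sizes after hashing. Since a solution corresponds to $\Sigma(A)=\Sigma(B)$, one can equivalently search for two distinct subsets $X\neq Y$ of a half-split with equal sums, which is Element Distinctness on a list of size $N$. The list is implicitly given by a polynomial-time function. The low-space Element Distinctness algorithm of \cite{chen2022truly} runs in time roughly $N^{3/2-\varepsilon}$ (more precisely $\tilde O(N^{3/2})$ improved to truly subquadratic with polynomial space), versus the $N^2$ of naive enumeration. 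To make the instance small, we randomly fix which elements lie in $A\cup B$. We then take $N=\binom{n}{\ell}2^{\ell}$-type counts over representations, and use the representation trick: we only need one of the many representations. Optimizing the exponent gives a bound $T_1(\alpha)$ that is good when $\alpha$ is large, since then there are many representations.

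\textbf{Approach 2 (small $\ell$).} Guess a superset/partition structure and invoke the polynomial-space Subset-Sum algorithm of \cite{Bansal2018-FastPolySpaceSubsetSum}, which runs in $O^*(2^{0.86n})$ time. Concretely, we enumerate a candidate part (say $A$ together with a random split of the remaining elements into sides). For each enumeration, we solve a Subset-Sum instance with target $\Sigma(A)$ on the remaining items, excluding $A$. The cost is roughly $\binom{n}{\ell/2}\cdot 2^{0.86(\cdot)}$, with the Subset-Sum instance restricted to a random subset of size calibrated to $\ell$. This yields $T_2(\alpha)$, which is small when $\alpha$ is small.

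The final step is a numerical optimization of $\max_\alpha \min(T_1(\alpha),T_2(\alpha))$, which should evaluate to $2.5430^n$. I expect the main obstacle to be Approach 1. The Element Distinctness speedup of \cite{chen2022truly} requires a function with suitable randomness and range, and its guarantee on finding a collision must survive the fact that most collisions are spurious: hash collisions, or non-disjoint pairs that still must correspond to a genuine equal-sum pair. My first attempt would be to note that any genuine equal-sum collision, even a non-disjoint one, already yields a valid ESS answer. Spurious hash collisions I would filter by using a hash modulo a random prime of size about $N$. I would then check that the count of spurious collisions is at most comparable to $N$, so that the \cite{chen2022truly} algorithm, which finds collisions in time depending on the number of collisions, still succeeds with good probability after $\mathrm{poly}(n)$ repetitions.
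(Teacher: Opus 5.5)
Your proposal has a genuine gap. You never pin down either exponent function $T_1$ or $T_2$, so the final max--min that is supposed to yield $2.5430^n$ has nothing behind it. The quantitative ingredient that actually drives the bound is also missing.

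\textbf{The Element Distinctness branch.} What you need from \cite{chen2022truly} is not a worst-case bound. That bound is $\tilde O(N^{3/2})$ when only one collision exists, not $N^{3/2-\varepsilon}$. What you need is the collision-sensitive running time $O\bigl(N\sqrt{F_2}/(F_2-N)\cdot\mathrm{polylog}\,N\bigr)$, where $F_2$ is the second frequency moment. Apply it directly to the implicit list of all $N=2^n$ subset sums of $S$. After preprocessing the entries are $2^{O(n)}=\mathrm{poly}(N)$, so no hashing is needed. Every colliding pair is then two distinct subsets with equal sum, i.e., a genuine ESS solution, so the ``spurious collision'' obstacle you anticipate does not arise. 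A minimum solution $(A,B)$ with $|A\cup B|=\ell'n$ gives the colliding pairs $(A\cup Z',B\cup Z')$ for every $Z'\subseteq S\setminus(A\cup B)$. Hence $F_2-N\ge 2^{(1-\ell')n}$, and since $\sqrt{x}/(x-N)$ is decreasing, the time is $O^*\bigl(2^{n}\cdot 2^{n/2}/2^{(1-\ell')n}\bigr)=O^*(2^{(\ell'+1/2)n})$.

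Note the direction: this is good for \emph{small} $\ell'$, because the number of padded collisions shrinks as $\ell'$ grows. Your claim that this branch is the one for large $\ell$ ``since then there are many representations'' is backwards. Your half-split / random-fixing / $\binom{n}{\ell}2^{\ell}$ construction also never says which list is fed to the algorithm or what its $F_2$ is.

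\textbf{The Subset-Sum branch.} As set up, it is too slow exactly where it is needed, around $\ell'\approx 0.85$. Enumerating one side $A$ and solving Subset-Sum with target $\sum(A)$ on the remaining items costs about $\binom{n}{\ell/2}2^{0.86(n-\ell/2)}\approx 2^{1.48n}$ at $\ell'=0.85$ for a balanced solution. Restricting to a random superset $R\supseteq B$ does not rescue it there. The trade-off $2^{0.86|R|}/\binom{|R|}{|B|}$ is decreasing until $|R|\approx 2.23|B|$, which exceeds $n-|A|$ in this regime.

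The right enumeration is over $X=A\cup B$ itself. For each of the $\binom{n}{\ell}$ sets $X$ of size $\ell$ with $\sum(X)$ even, run the polynomial-space algorithm of \cite{Bansal2018-FastPolySpaceSubsetSum} on $X$ with target $\sum(X)/2$. The total is $O^*(2^{(H(\ell')+0.86\ell')n})$. Balancing $\ell'+1/2$ against $H(\ell')+0.86\ell'$ at their crossing $\ell'\approx 0.8465$ gives exponent at most $1.34651$, i.e., $O^*(2.5430^n)$.

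Finally, drop your opening pigeonhole case split. With subset sums up to $2^{O(n)}$, having many distinct subset sums forces nothing; it is having \emph{fewer} than $2^n$ distinct sums that would force a collision. No such case distinction is needed.
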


For Subset-Sum, Schroeppel and Shamir \cite{SchroeppelAndShamir1981} gave a time-space tradeoff $T=O^*(2^n/S^2)$ for any space bound $S\le O^*(2^{n/4})$. This tradeoff was later improved for almost all choices of the tradeoff parameter by \cite{austrin2013space}. In contrast, for ESS, the analogous Schroeppel--Shamir-type tradeoff, which gives $T=O^*(3^n/S^2)$ for any space bound $S\le O^*(3^{n/4})$ \cite[Appendix~D, full version]{mucha_et_al:LIPIcs.ESA.2019.73}, has remained the only general time-space tradeoff known for the problem. As our third result, we give an improved time-space tradeoff curve for ESS, improving the running time under a broad range of exponential-space bounds.

\begin{restatable}{theorem}{TimeSpaceTradeoffThm}
\label{thm:MainResult-TimeSpaceTradeoff}
For every constant $\alpha$ satisfying $0\le\alpha\le1$, there is a
Monte Carlo algorithm for ESS that runs in
$O^*(2^{\mathcal{T}(\alpha)n})$ time and uses
$O^*(2^{\alpha n})$ space, where $\mathcal{T}(\alpha)$ is defined in
\eqref{eq:mathcalT} of \cref{sec:CaseDistinctionOverview}.
\end{restatable}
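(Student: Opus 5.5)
We will obtain the tradeoff curve $\mathcal{T}(\alpha)$ as the pointwise minimum of several algorithms, each parametrized by a space budget $2^{\alpha n}$, selected through a case distinction on the structure of the instance. The structural dichotomy underlying the $O^*(1.6994^n)$ algorithm of \cref{thm:MainResult-ExpSpace} serves as the skeleton.

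We first fix parameters (to be optimized numerically) and classify the instance by how many subset sums collide, in the spirit of \cite{mucha_et_al:LIPIcs.ESA.2019.73}.
\begin{enumerate}[label=(\roman*)]
\item If $S$ has many solutions, or a solution with $|A|+|B|$ far from the typical size, then random sampling or a restriction to a random sub-instance finds a solution cheaply. This step uses essentially no space beyond polynomial, so its cost does not depend on $\alpha$.
\item Otherwise, the number of distinct subset sums is close to $2^n$ (few collisions). In that case we apply the representation technique: guess a random modulus $p$ and residue, and enumerate only the partial sums consistent with that residue.
\end{enumerate}

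To respect the space bound $2^{\alpha n}$, we choose the modulus so that each list has expected size at most $2^{\alpha n}$, and we repeat over all (or sufficiently many) residues. The total time is then (number of residue classes)$\times$(time per class), where the time per class comes from a Schroeppel--Shamir-style merge of the smaller lists.

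When $\alpha$ is small (below roughly the exponent $\log_2 1.5664$ of \cref{thm:MainResult-ExpSpace}), the per-class lists cannot be stored. There we fall back on two interpolations:
\begin{enumerate}[label=(\alph*)]
\item the Schroeppel--Shamir-type tradeoff $T=3^n/S^2$;
\item a low-space collision-finding (Element Distinctness) routine in the spirit of \cite{chen2022truly}, used as in \cref{thm:MainResult-PolySpace}, whose time improves as memory grows.
\end{enumerate}
At $\alpha=0$ this route recovers the $2.5430^n$ bound, and at the top of the range it recovers $1.6994^n$. The function $\mathcal{T}(\alpha)$ is defined as the minimum over the case-distinction parameters of the maximum over cases of the resulting exponents.

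Correctness is the routine part of the argument. Each branch is Monte Carlo with one-sided error, the representation count guarantees that some residue class contains a representation of a solution with constant probability, and repetition boosts the success probability.

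The main obstacle is the per-residue enumeration under a space cap. We must show that the partial-sum lists consistent with a residue can be output in time roughly linear in their size without storing the full $2^{n/2}$-type tables. I would first try a nested modulus scheme (two-level Schroeppel--Shamir with a priority queue over sums modulo $p$). I would then check that the bounded collision count in the "few solutions" case keeps the expected list sizes at their nominal values, so that no branch exceeds $2^{\alpha n}$ memory.
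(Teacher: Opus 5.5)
Your plan does not establish the stated bound. The theorem asserts a specific curve, $\mathcal{T}(\alpha)$ from \eqref{eq:mathcalT}, built from $\tau_{\mathrm{rrb}}$, $\tau_{\mathrm{sbb}}$, $\tau_{\mathrm{sbu}}$ and $\tau_{\mathrm{subset}}$. Defining $\mathcal{T}$ as ``whatever the case distinction yields'' is circular. The paper's proof is the case table \cref{tab:case-distinction}: for every range of $\alpha$ and every $\ell'$ it names a concrete algorithm with space $O^*(2^{\alpha n})$, and it checks that the bottleneck $\ell'$ gives exactly $\mathcal{T}(\alpha)$. Your toolkit lacks most of those algorithms, and it falls short numerically in several ranges:
\begin{itemize}
\item At $\alpha=1/4$, the only tool you offer for large $\ell'$ is $T=3^n/S^2$, with exponent $\log 3-\tfrac{1}{2}\approx 1.085$. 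In the paper, $\mathcal{T}(1/4)\approx 1.04$ comes from crossing two curves: $\tau_{\mathrm{sbb}}$, which is Lyu's Element-Distinctness tradeoff run on a residue class $C$ made randomly accessible by the Fast Subset-Sum Oracle; and $\tau_{\mathrm{sbu}}(\ell',\alpha)=H(\ell')+\ell'-2\alpha$, which is a random split plus Target ESS.
\item At $\alpha=0.1$, nothing in your plan beats the polynomial-space exponent $\approx 1.3465$. In the paper, $\mathcal{T}(0.1)\approx 1.265$ comes from enumerating all size-$\ell$ subsets and running the Austrin et al.\ space-bounded Subset-Sum solver on each ($\tau_{\mathrm{subset}}(0.1)=0.625$).
\item For $\alpha$ just below $\alpha_1$, $\mathcal{T}(\alpha)$ lies below $\max_{\ell'}\min\{\ell',\tfrac{1}{2}(H(\ell')+\ell')\}\approx 0.7729$, which is the best the balanced and unbalanced algorithms give together. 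Going lower requires the random pair replacement of \cref{alg:RandomReplace} with the parity filter $v_X+v_Y=\mathbf{1}$, which you never introduce. Citing \cref{thm:MainResult-ExpSpace} as a black box only covers $\alpha\ge\alpha_1$.
\end{itemize}

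Individual steps are also wrong as stated. Step (i) is false for large $\ell'$. An unbalanced minimum solution is not found in essentially polynomial space at an $\alpha$-independent cost. \cref{lm:UnbalancedESS} needs $2^{\frac{n}{4}(H(\ell')+\ell')}$ space, and under a smaller budget its cost $\tau_{\mathrm{sbu}}$ grows as $\alpha$ shrinks; it enters $\mathcal{T}$ on all of $(\alpha_3,2-\log 3]$. Step (ii) misidentifies what drives the modular step. It does not rely on collisions being few. It relies on the minimum solution and its complement $Z$ forcing many distinct colliding sums, $|\Phi|\ge 2^{n-\ell}$ (resp.\ $\binom{n-\ell}{\lfloor(\ell-1)/2\rfloor}$), so that a random residue hits one. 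The cost is governed by $\ell'$, which is why the paper splits cases on $\ell'$. Sweeping all $p$ residues costs at least $2^n$, since the classes partition all subsets; you need only about $p/|\Phi|$ random residues. With that fix, your larger-modulus step is sound for $\ell'>\tfrac{1}{2}$ and $\tfrac{1}{4}\le\alpha\le\ell'$, with total time $2^{(\ell'-\alpha+\max\{\alpha,1/2\})n}$. It still cannot replace the missing pieces. Modular Schroeppel--Shamir enumeration needs $2^{n/4}$ space, so it is unavailable for $\alpha<1/4$. Without pair replacement it cannot beat $2^{0.7729n}$, which is too slow near $\alpha_1$. Finally, the obstacle you single out, small-space enumeration of a residue class, is already \cref{lm:Enum-C-time-space}. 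The substance of the proof is the regime-specific algorithms and checking the bottleneck $\ell'$ in each row.
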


In \cref{fig:timespace-tradeoff-curves}, we compare $\mathcal{T}(\alpha)$ with the best bound previously obtainable for each space exponent $\alpha$. This previous bound is obtained by taking the best among the Schroeppel--Shamir-type tradeoff, the polynomial-space $O^*(2^{1.42312n})\le O^*(2.6817^n)$-time algorithm of \cite{mucha_et_al:LIPIcs.ESA.2019.73}, and the $O^*(2^{0.77117n})\le O^*(1.7067^n)$-time and space algorithm of \cite{STOC2026.Randolph}.

\begin{figure}[t]
    \centering
    \includegraphics[width=0.9\linewidth]{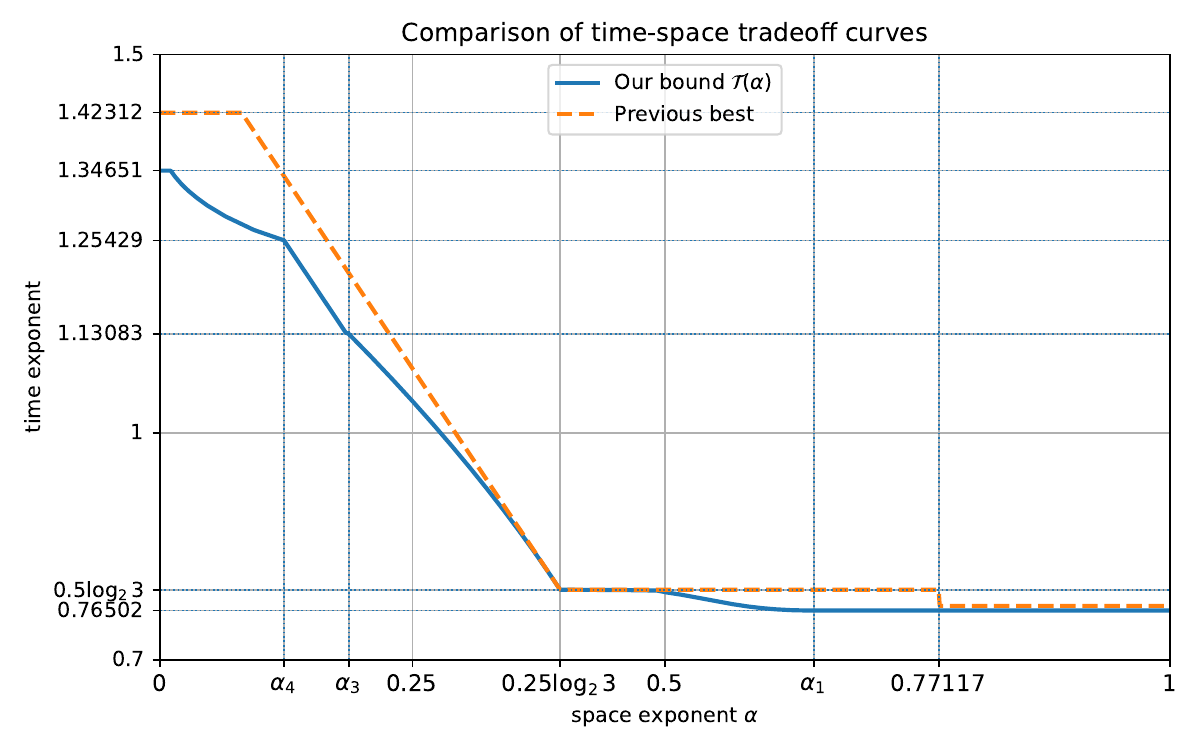}
\caption{
For each constant $\alpha\in[0,1]$, the figure plots the time exponent
$f(\alpha)$ of the running time $O^*(2^{f(\alpha)n})$ achievable under
the space bound $O^*(2^{\alpha n})$.
We compare our tradeoff curve $\mathcal{T}(\alpha)$ with the previous
best tradeoff curve.
The blue solid curve represents our bound $\mathcal{T}(\alpha)$, and the
orange dotted curve represents the previous best bound.
Here, $\alpha_1\approx0.64738$ is defined in \eqref{eq:alpha_1}, and
$\alpha_3\approx0.18687$ and $\alpha_4\approx0.12286$ are defined in
\cref{sec:CaseDistinctionOverview}.
}
    \label{fig:timespace-tradeoff-curves}
\end{figure}

\subsection{Our Main Techniques}

The algorithms of \cite{mucha_et_al:LIPIcs.ESA.2019.73} focus on the
solution size $\ell=|A|+|B|$, where $A$ and $B$ form a disjoint solution
of ESS.
They gave an algorithm that runs in $O^*(2^\ell)$ time and space for
$\ell>n/2$, which is fast when $\ell$ is small.
In our algorithm, we introduce a simple random modification to the input
instance; see \cref{alg:RandomReplace} in
\cref{sec:FasterExpSpaceAlgorithms}.
This modification reduces the solution size $\ell$ with
inverse-exponential probability.
Although we need to pay an additional exponential cost to successfully
reduce $\ell$, the exponential speedup obtained from the reduction in
$\ell$ dominates this cost in the bottleneck case
$\ell\approx 0.773n$ of \cite{mucha_et_al:LIPIcs.ESA.2019.73}.
Thus, by combining this input modification with the $O^*(2^\ell)$-time
and space algorithm of \cite{mucha_et_al:LIPIcs.ESA.2019.73}, we obtain
\cref{thm:MainResult-ExpSpace}.
Since the modified instance may have ESS solutions that do not correspond
to any solution of the original instance, we carefully filter the
candidates to avoid returning such spurious solutions.

Our modification can also be viewed as a time-space tradeoff.
It pays an additional exponential cost in time to reduce $\ell$; since
the underlying algorithm uses $O^*(2^\ell)$ space, this reduction in
$\ell$ translates directly into an exponential saving in space.
We exploit this idea in our time-space tradeoff algorithms in
\cref{sec:5-2}.
By choosing the size of the modification appropriately, our algorithm can
be used under a space bound of $O^*(2^{\alpha n})$ for a constant
$\alpha$, especially when $\alpha > 1/2$.
For more restrictive space bounds, we modify the algorithm of
\cite{mucha_et_al:LIPIcs.ESA.2019.73} by replacing the explicit
enumeration of all candidates with random access to the candidate set,
implemented via the \emph{Fast Subset-Sum Oracle} technique of
\cite{allcock_et_al:LIPIcs.ESA.2022.6}.
This allows us to apply the time-space tradeoff for Element Distinctness
\cite{lyu2023time}; see \cref{sec:5-3} for further details.
By combining these algorithms with several other algorithms
tailored to different parameter regimes, we obtain the full time-space tradeoff curve stated in \cref{thm:MainResult-TimeSpaceTradeoff}.
The detailed case distinction is given later in \cref{tab:case-distinction} of \cref{sec:CaseDistinctionOverview}.

\subsection{Related Works}
The recent attention to exact algorithms for ESS has also led to improvements
for the closely related Pigeonhole ESS problem \cite{PAPADIMITRIOU1994498}, a constrained variant of ESS in
which all input integers are positive and their sum is at most $2^n-2$.
By the pigeonhole principle, a solution is guaranteed to exist.
Thus, Pigeonhole ESS is a \emph{total search problem}, and it has also
been studied in the TFNP context \cite{BAN201948,8555101}.
Pigeonhole ESS admits a meet-in-the-middle-based algorithm that runs in $O^*(2^{n/2})$ time, which was
recently improved to $O^*(2^{n/3})$ time
\cite{jin_et_al:LIPIcs.ICALP.2024.94,jin_et_al:LIPIcs.ESA.2025.86}.

A natural optimization variant of ESS is the Subset-Sum-Ratio problem
\cite{WOEGINGER1992299}, in which the task is to find two disjoint
subsets whose sums have ratio as close to $1$ as possible.
A $1.324$-approximation algorithm was given in
\cite{WOEGINGER1992299}, and the first FPTAS for the problem was given in
\cite{BAZGAN2002160}.
This FPTAS was later simplified in \cite{nanongkai2013simple}.
Subsequent works improved the running time
\cite{melissinos2018faster,alonistiotis2024approximating,bringmann2024approximating}.

ESS also has connections to computational biology
\cite{cieliebak2003noisy,cieliebak2004measurement}.
The authors of \cite{cieliebak2003composing} considered $k$-ESS, in
which the task is to find $k$ disjoint subsets whose sums are all equal.
Other variants of ESS were studied in \cite{cieliebak2008complexity},
which also established their NP-hardness.

\section{Preliminaries} \label{sec:prelim}
We use the asymptotic notation $O^*(\cdot)$, $\Omega^*(\cdot)$, and
$\Theta^*(\cdot)$ to suppress polynomial factors in $n$, where $n$
denotes the number of integers in an ESS instance.
More precisely, $f(n)=O^*(g(n))$ if $f(n)\le g(n)\cdot n^{O(1)}$, and
$f(n)=\Omega^*(g(n))$ if $f(n)\ge g(n)\cdot n^{-O(1)}$.
We write $f(n)=\Theta^*(g(n))$ if both $f(n)=O^*(g(n))$ and
$f(n)=\Omega^*(g(n))$ hold.
Throughout the paper, all logarithms are base~2.
The binary entropy function is defined by
$H(x) = -x \log x - (1-x)\log(1-x)$ for $x \in (0,1)$,
with $H(0) = H(1) = 0$.
We write $[n]$ to denote the set $\{1, \dots, n\}$.
For a set of integers $T$, we write
$\sum(T) = \sum_{t \in T} t$ for its sum.
For integers $p$, $u$, and $v$, we write
$u \equiv_p v$ to denote $u \equiv v \pmod{p}$.
We denote by $X \Delta Y$ the symmetric difference of two sets $X$ and $Y$; that is, $X \Delta Y = (X \setminus Y) \cup (Y \setminus X)$.

By Monte Carlo algorithms, we mean randomized algorithms
that may produce only false negatives with constant probability.
The error probability can be reduced to $2^{-\Omega(n)}$
by repeating the algorithm $\mathrm{poly}(n)$ times,
without affecting the running times expressed using
$O^*(\cdot)$ notation.
We frequently use the following well-known inequalities.

\begin{fact}
For any integers $n$ and $x$ with $0\le x\le n$, we have $\frac{1}{n+1}2^{H(x/n)n} \le \binom{n}{x} \le 2^{H(x/n)n}$.
In particular, $\binom{n}{x}=\Theta^*(2^{H(x/n)n})$.
\end{fact}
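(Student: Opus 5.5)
The plan is to prove the two bounds separately. Both follow from the entropy approximation of the binomial coefficient, via an integral (or sum) comparison for $\log\binom{n}{x}$ and a probabilistic argument. The cases $x=0$ and $x=n$ are trivial, since then $\binom{n}{x}=1$ and $H(x/n)=0$, so both inequalities hold. From now on assume $0<x<n$ and write $p=x/n\in(0,1)$.

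For the upper bound, I would use the binomial theorem:
\[
1=(p+(1-p))^n=\sum_{k=0}^n\binom{n}{k}p^k(1-p)^{n-k}\ \ge\ \binom{n}{x}p^x(1-p)^{n-x}.
\]
A direct calculation gives $p^x(1-p)^{n-x}=2^{x\log p+(n-x)\log(1-p)}=2^{-nH(p)}$. Rearranging yields $\binom{n}{x}\le 2^{H(x/n)n}$.

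For the lower bound, I would use the same expansion with the same $p=x/n$. The key claim is that the term $t_k=\binom{n}{k}p^k(1-p)^{n-k}$ is maximized at $k=x$. To show this, look at the ratio
\[
\frac{t_{k+1}}{t_k}=\frac{n-k}{k+1}\cdot\frac{p}{1-p}=\frac{(n-k)\,x}{(k+1)(n-x)}.
\]
This ratio is at least $1$ exactly when $(n-k)x\ge (k+1)(n-x)$, that is, when $k\le x-(n-x)/n$, i.e.\ $k\le x-1+x/n$. Hence the sequence $t_k$ is nondecreasing up to $k=x$ and nonincreasing afterwards, so $t_x$ is the maximum. There are $n+1$ terms summing to $1$, so $t_x\ge 1/(n+1)$. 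Combined with $t_x=\binom{n}{x}2^{-nH(p)}$, this gives $\binom{n}{x}\ge \frac{1}{n+1}2^{H(x/n)n}$.

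The $\Theta^*$ statement follows immediately, since $1/(n+1)=n^{-O(1)}$. The only step needing care is the unimodality check at the boundary index: one must confirm that the inequality $t_{k+1}\ge t_k$ holds for every $k\le x-1$ and fails (or gives equality) for every $k\ge x$. This holds because $x-1<x-1+x/n<x$ when $0<x<n$. Everything else is routine.
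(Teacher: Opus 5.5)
Your proof is correct and complete. The binomial-theorem bound with $p=x/n$ gives $\binom{n}{x}\le 2^{H(x/n)n}$. Your ratio computation correctly places the mode of the $\mathrm{Bin}(n,x/n)$ distribution at $k=x$, since $x-1<x-1+x/n<x$ when $0<x<n$. That gives $t_x\ge 1/(n+1)$ and hence the lower bound.

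The paper gives no proof of this fact; it simply quotes it as a well-known inequality. Your argument is the standard textbook proof. Two small blemishes: your opening sentence mentions an integral comparison that you never use, and you implicitly need $n\ge 1$ for $x/n$ to be defined.
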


We may also use the following fact to remove floors and ceilings in the
$O^*(\cdot)$ notation.

\begin{fact}[{\cite[eq.(4)]{zhang2007estimating}}]
\label{fact:H(x+1/n)-H(x)}
For any $x,y\in[0,1]$, we have $|H(x)-H(y)|\le H(|x-y|)$.
Consequently, if $x,y\in[0,1]$ satisfy $|x-y|\le O(1/n)$, then $|H(x)-H(y)| \le O((\log n)/n)$.
\end{fact}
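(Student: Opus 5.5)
The plan is to derive the inequality from two elementary properties of $H$: it is concave on $[0,1]$ with $H(0)=0$, and it is symmetric, $H(t)=H(1-t)$. Concavity together with $H(0)=0$ gives subadditivity, and the symmetry supplies the second direction of the absolute value.

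\emph{Step 1 (subadditivity).} Let $a,b\ge 0$ with $a+b\le 1$. We may assume $a+b>0$. Concavity on the segment from $0$ to $a+b$ gives
\[
H(a)\ \ge\ \tfrac{a}{a+b}H(a+b)+\tfrac{b}{a+b}H(0)\ =\ \tfrac{a}{a+b}H(a+b).
\]
In the same way, $H(b)\ge \tfrac{b}{a+b}H(a+b)$. Adding the two inequalities yields $H(a+b)\le H(a)+H(b)$. Concavity of $H$ itself is standard: for $t\in(0,1)$, $H''(t)=-\frac{1}{t(1-t)\ln 2}<0$, and $H$ is continuous at the endpoints.

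\emph{Step 2 (both directions).} By symmetry of the statement in $x$ and $y$, assume $x\le y$ and set $d=y-x\in[0,1]$.
\begin{itemize}
\item Writing $y=x+d$, Step 1 gives $H(y)\le H(x)+H(d)$.
\item For the reverse bound, write $1-x=(1-y)+d$. Step 1 with the symmetry of $H$ gives $H(x)=H(1-x)\le H(1-y)+H(d)=H(y)+H(d)$.
\end{itemize}
Together these give $|H(x)-H(y)|\le H(d)=H(|x-y|)$. This is the main (though light) point: subadditivity alone only controls $H(y)-H(x)$, and the reflection $t\mapsto 1-t$ is what handles the other sign.

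\emph{Step 3 (consequence).} Suppose $|x-y|\le c/n$ for a constant $c$, and let $n$ be large enough that $c/n\le 1/2$. Since $H$ is increasing on $[0,1/2]$, we have $H(|x-y|)\le H(c/n)$. Now
\[
H(t)=t\log(1/t)+(1-t)\log\tfrac{1}{1-t}.
\]
The second term is at most $O(t)$ for $t\le 1/2$, because $\log\frac{1}{1-t}\le 2t$ there. With $t=c/n$ this gives
\[
H(c/n)\le \tfrac{c}{n}\log\tfrac{n}{c}+O(1/n)=O\!\left(\tfrac{\log n}{n}\right).
\]
Combining with Step 2 yields the claimed bound.
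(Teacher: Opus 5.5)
Your proof is correct. The paper gives no proof of this fact. It imports the inequality $|H(x)-H(y)|\le H(|x-y|)$ from eq.~(4) of \cite{zhang2007estimating} and treats the $O((\log n)/n)$ consequence as evident. So your argument is a self-contained replacement for a citation, not a variant of an in-paper proof.

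Your route is the standard elementary one, and every step checks:
\begin{itemize}
\item Concavity of $H$ together with $H(0)=0$ gives subadditivity $H(a+b)\le H(a)+H(b)$, which bounds $H(y)-H(x)$.
\item The reflection $H(t)=H(1-t)$, applied to $1-x=(1-y)+d$, handles the opposite sign. Here $a=1-y$ and $b=d$ satisfy $a+b=1-x\le 1$, so subadditivity applies.
\item The bound $\log\frac{1}{1-t}\le 2t$ on $[0,1/2]$ holds, because the left side is convex and agrees with $2t$ at $t=0$ and $t=1/2$.
\end{itemize}

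Your version buys independence from the external reference. It also makes explicit the mild conditions behind the second sentence: a fixed constant $c$, $n$ large enough that $c/n\le 1/2$, and monotonicity of $H$ on $[0,1/2]$. This is exactly what the paper relies on when it turns $O(1/n)$ perturbations of binomial-coefficient ratios into $2^{O(\log n)}=\mathrm{poly}(n)$ factors. The citation buys only brevity.
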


We adopt the input assumptions from
\cite{mucha_et_al:LIPIcs.ESA.2019.73}: the input set $S$ consists of
positive integers and satisfies $\sum(S)\le 2^{O(n)}$.
We enforce these assumptions using the following preprocessing lemma,
which is based on
\cite[Appendix~A of the full version]{mucha_et_al:LIPIcs.ESA.2019.73}.
For completeness, we provide a proof in Appendix~\ref{app:A} of this paper.

\begin{restatable}{lemma}{PreprocessInputs}
\label{lm:preprocess-inputs}
Given a set $S$ of $n$ integers satisfying
$|s|\le 2^m$ for all $s\in S$, the following holds.
If $0\in S$ or $m\ge 2^n$, then ESS on $S$ can be solved in
$O(\mathrm{poly}(n,m))$ time.
Otherwise, with probability at least $1-O(2^{-n})$, we can construct, in
$O(\mathrm{poly}(n,m))$ time, a set $S'$ of
$n+\lceil\log n\rceil$ positive integers satisfying
$0<s'\le 2^{O(n)}$ for all $s'\in S'$, and satisfying the following
properties:
if $S'$ has no ESS solution, then neither does $S$; conversely, from any
ESS solution of $S'$, we can reconstruct a corresponding ESS solution of
$S$ in $O(\mathrm{poly}(n,m))$ time.
\end{restatable}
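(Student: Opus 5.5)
We handle the trivial cases first and then reduce the magnitudes by hashing modulo a random prime. If $0\in S$, then $A=\{0\}$ and $B=\emptyset$ form a solution. If $m\ge 2^n$, the input length is already at least $2^n$, so we can enumerate all $3^n\le\mathrm{poly}(m)$ signed combinations and compare sums directly. In the same spirit, if some $s$ and $-s$ both lie in $S$, then $\{s,-s\}$ and $\emptyset$ form a solution. Two reductions remain: removing negative numbers, and shrinking the numbers to $2^{O(n)}$.

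\textbf{Negatives.} Replace every $s$ by $|s|$. A solution $A,B$ of the new set pulls back to a solution of $S$: move each element whose sign was flipped to the opposite side. Concretely, a disjoint pair with $\sum(A)=\sum(B)$ is a nonzero vector $\epsilon\in\{-1,0,1\}^n$ with $\sum_i\epsilon_i|s_i|=0$, and flipping $\epsilon_i$ wherever $s_i<0$ gives a valid $\epsilon'$ for $S$. The converse holds by the same map. Distinctness of absolute values is guaranteed by the $\pm s$ check above, and we may always assume solutions are disjoint by removing $A\cap B$.

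\textbf{Shrinking magnitudes.} Pick a random prime $p$ in $[2^{3n},2^{3n+1}]$ and set $s_i'=s_i\bmod p$. Any relation $\sum\epsilon_i s_i=0$ gives $\sum\epsilon_i s_i'\equiv_p 0$, but not necessarily $=0$. To repair this, add $k=\lceil\log n\rceil$ new elements $p,2p,4p,\dots,2^{k-1}p$. Since $\sum\epsilon_i s_i'$ is a multiple of $p$ with absolute value below $np$, the correcting multiple $cp$ with $|c|<n$ can be written with signed binary digits, as a $\{-1,0,1\}$-combination of these powers. Thus solutions of $S$ map to solutions of $S'$. All elements are positive after also handling $s_i'=0$: this happens with negligible probability, since $p\mid s_i$ for at most $m/(3n)$ primes out of about $2^{3n}/n$.

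\textbf{Converse.} Take a solution $\epsilon'$ of $S'$. Reducing modulo $p$ gives $\sum_{i\le n}\epsilon_i s_i\equiv_p 0$. The original-index part $\epsilon|_{[n]}$ is nonzero: otherwise we would have a nontrivial signed sum of distinct powers of two times $p$ equal to zero, which is impossible. The danger is that $\sum\epsilon_i s_i$ is nonzero yet divisible by $p$. Union bound over all $3^n$ vectors $\epsilon$: each nonzero value $\sum\epsilon_i s_i$ has absolute value at most $n2^m$, so it has at most $(m+\log n)/(3n)$ prime divisors of size about $2^{3n}$. The failure probability is therefore at most $3^n\cdot\mathrm{poly}(m)\cdot n/2^{3n}$, which is $O(2^{-n})$ because $m<2^n$ gives $\mathrm{poly}(m)\le 2^{O(n)}$. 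Choosing the prime range as $[2^{cn},2^{cn+1}]$ for a suitable constant $c$ absorbs that polynomial. On this good event every solution of $S'$ restricts to an exact relation on $S$, which is reconstructed in polynomial time. Sampling a prime takes $\mathrm{poly}(n)$ time by random sampling plus primality testing.

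The main obstacle is tuning the union bound so that the $\mathrm{poly}(m)$ factor, with $m$ up to $2^n$, is still absorbed. This is exactly why we need the $m\ge 2^n$ case split, and why $c$ must be taken large enough, say $c=4$.
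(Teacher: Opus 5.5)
Your proposal is correct and is essentially the paper's proof. Both reduce modulo a random prime $p=2^{\Theta(n)}$, append $p,2p,\dots,2^{\lceil\log n\rceil-1}p$ to absorb the leftover multiple of $p$ (of absolute value below $np$), and union-bound over all nonzero relations $\sum_i\epsilon_i s_i$, using $m<2^n$. Your separate absolute-value step is harmless but unnecessary, since residues in $[0,p)$ are already nonnegative.

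Two small fixes. First, the initial range $[2^{3n},2^{3n+1}]$ gives only an $O((3/4)^n)$ failure bound, so the sentence claiming $O(2^{-n})$ there is false as written. Your final choice $c=4$ does give $O((3/8)^n)=O(2^{-n})$; the paper uses exponent $b\ge 6$ with a cruder divisor count. Second, you never state that the $s_i'$ are pairwise distinct. The same good event, applied to the nonzero differences $s_i-s_j$, guarantees this, so $S'$ really has $n+\lceil\log n\rceil$ elements.
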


\Cref{lm:preprocess-inputs} increases the input size by only
$O(\log n)$.
Thus, a running time of the form $O^*(2^{\Theta(n)})$ changes only by a
factor of $2^{O(\log n)}=\mathrm{poly}(n)$, which is absorbed in the
$O^*(\cdot)$ notation.

For simplicity, we also assume that $n$ is divisible by $12$.
This can be achieved by setting $M=\sum(S)+1$ and adding $O(1)$ elements
of the form $M,2M,4M,8M,\ldots$ to $S$.
We define a minimum solution as follows.
\begin{definition}[Minimum solution]
For an ESS instance with input set $S$, a solution
$A, B \subseteq S$ is called a minimum solution if
$|A| + |B|$ is minimized.
We refer to the value $|A| + |B|$ as the minimum solution size,
denoted by $\ell$,
and define the minimum solution ratio as
$\ell' = \ell / n$.
\end{definition}

By trying all possible values $\ell \in [n]$, we incur only a polynomial overhead.
If no execution finds a solution, then the algorithm reports that the instance is a No instance.
Thus, in the following, we may assume that the minimum solution size $\ell$ is given in advance.

When analyzing the bounds in terms of the minimum solution ratio $\ell'$,
we may upper-bound the running time and space usage by maximizing over
the continuous interval $(0,1]$, instead of the discrete set
$\{1/n,2/n,\ldots,(n-1)/n,1\}$ of possible values of $\ell'$.

\section{Faster Exponential-Space Algorithm} \label{sec:FasterExpSpaceAlgorithms}

\begin{algorithm}[htbp]
\caption{RandomReplace}
\DontPrintSemicolon

\KwIn{Input set $S$ and replacement count $k$}
\KwOut{Modified set $S'$ and the set of new elements 
$E = \{e_1, \dots, e_{2k}\}$}

Randomly select a subset $D \subseteq S$ consisting of $2k$ elements,
and partition it arbitrarily into $k$ disjoint pairs.
Denote the $i$-th pair by $(x_i, y_i)$, where we relabel the elements
so that $x_i > y_i$ for each $i \in [k]$.\;

For each $i \in [k]$, define
$e_{2i-1} = x_i + y_i$,
and
$e_{2i} = x_i - y_i$.
Let $E = \{e_1, \ldots, e_{2k}\}$.\;

Construct the modified set
$S' = (S \setminus D) \cup E$.\;

\label{alg:RandomReplace}
\end{algorithm}

We first describe a modification of the input instance that may reduce
the number of integers required to represent a solution.
The algorithm randomly selects a subset of the input integers,
partitions them into disjoint pairs, and replaces each pair with its
sum and difference, as shown in \cref{alg:RandomReplace}.
In \cref{lm:Recover-Original-Solution}, we show that an ESS solution for
the modified instance can be transformed into a valid ESS solution for
the original instance under a certain condition.

\begin{lemma}\label{lm:Recover-Original-Solution}
Let $S'$ be the modified instance produced by
\cref{alg:RandomReplace}, and let $(x_i, y_i)$ denote the $i$-th pair
of elements obtained from $S$ by the algorithm.
Suppose that $S'$ admits an ESS solution
$A', B' \subseteq S'$ with $A' \cap B' = \emptyset$.
If, for every obtained pair $(x_i, y_i)$,
at most one of the generated values
$x_i+y_i$ and $x_i-y_i$ belongs to $A' \cup B'$,
then the solution $(A', B')$ for $S'$ can be transformed in
polynomial time into a valid ESS solution for the original instance $S$.
\end{lemma}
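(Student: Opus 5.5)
Starting from $(A',B')$, we replace each new element by the original elements it stands for and then cancel what appears on both sides. The sum-equality $\sum(A')=\sum(B')$ is an identity among integers, and each $e\in E$ is a signed combination of original elements. Concretely, set $A^+ = A'\setminus E$ and $B^+=B'\setminus E$; these are original elements of $S\setminus D$. For each $i$ with $x_i+y_i\in A'$ (resp.\ $B'$), add both $x_i$ and $y_i$ to the $A$-side (resp.\ $B$-side). For each $i$ with $x_i-y_i\in A'$, add $x_i$ to the $A$-side and move $y_i$ to the $B$-side; symmetrically if $x_i-y_i\in B'$, add $x_i$ to the $B$-side and $y_i$ to the $A$-side. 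Call the resulting sets $A$ and $B$.

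First I check that $A,B\subseteq S$ are disjoint. The sets $A^+, B^+$ are disjoint subsets of $S\setminus D$. Each pair index $i$ contributes elements $x_i,y_i\in D$ only through the unique element of $\{x_i+y_i,x_i-y_i\}$ lying in $A'\cup B'$; this uniqueness is exactly the hypothesis. The pairs are disjoint, and $D$ is disjoint from $S\setminus D$. Hence every original element is placed at most once, so $A\cap B=\emptyset$ and both are genuine sets, not multisets.

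Second, the sums agree. Moving $-y_i$ from one side to $+y_i$ on the other preserves the difference of sums, so
\[
\sum(A)-\sum(B)=\sum(A')-\sum(B')=0.
\]

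Third, and this is the step requiring care, I show $(A,B)$ is a valid, i.e.\ distinct, solution; since they are disjoint, this just means $A\cup B\neq\emptyset$. Since $A'\neq B'$ and they are disjoint, $A'\cup B'$ is nonempty. Any element of $A'\cup B'$ contributes at least one original element ($x_i$ always appears), so $A\cup B\neq\emptyset$. All the original integers are positive by \cref{lm:preprocess-inputs}, though nonemptiness alone suffices. One subtlety: after preprocessing, $x_i>y_i>0$ gives $x_i-y_i>0$. Also, $E$ might collide in value with elements of $S\setminus D$, in which case $S'$ would be a multiset. I would treat $S'$ as a multiset indexed by origin, so the transformation is well defined regardless.

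The construction is clearly polynomial time. The only real obstacle is the bookkeeping that prevents an original element from being used twice, which is precisely what the ``at most one per pair'' hypothesis guarantees.
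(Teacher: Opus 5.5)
Your proposal is correct and uses the same construction as the paper. Each $x_i+y_i$ is replaced by $x_i,y_i$ on the same side, and each $x_i-y_i$ by $x_i$ on the same side and $y_i$ on the opposite side, which preserves $\sum(A)-\sum(B)$. You also spell out two points the paper's proof leaves implicit: disjointness of $A$ and $B$, which is exactly where the ``at most one per pair'' hypothesis is used, and nonemptiness of $A\cup B$, which gives distinctness.
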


\begin{proof}
For each obtained pair $(x_i, y_i)$, we reconstruct the original
elements as follows.
If $x_i+y_i$ belongs to the solution, then we replace it by
$x_i$ and $y_i$ in the same subset.
If $x_i-y_i$ belongs to the solution, then we replace it by
$x_i$ in the same subset and place $y_i$ in the other subset.
Observe that these transformations preserve the equality of the subset
sums.
Applying this transformation independently to every obtained pair
yields a valid ESS solution for the original instance $S$.
Since the number of obtained pairs is at most $n/2$ and each pair is
processed using only a constant number of arithmetic operations, the
reconstruction procedure runs in polynomial time.
\end{proof}

By \cref{lm:preprocess-inputs}, all input integers are positive, and
thus $x_i+y_i > x_i-y_i$ for every obtained pair $(x_i, y_i)$.
During this modification process, if a generated value coincides with
another element of the modified instance, then these equal values
cannot originate from the same pair.
In this case, the corresponding elements immediately yield a valid ESS
solution for the original instance by
\cref{lm:Recover-Original-Solution}, and thus the algorithm may
terminate early.
Therefore, in the following, we assume that no value collisions occur
in the modified instance, and hence $|S| = |S'|$.

Moreover, the input assumption imposed by
\cref{lm:preprocess-inputs} continues to hold, since all generated
integers remain positive and are at most twice the maximum value in
the original input set $S$.

In \cref{lm:Obtain-Smaller-Solution}, we analyze the probability that
\cref{alg:RandomReplace} produces a modified instance with a solution of size $\ell-k$ that can be transformed into a valid solution for the original instance.

\begin{lemma}\label{lm:Obtain-Smaller-Solution}
Let $(S',E)$ be the output of \cref{alg:RandomReplace} applied to the input set $S$ with replacement count $k$, 
where $E=\{e_1,\dots,e_{2k}\}$.
Let $\ell$ denote the minimum solution size, and suppose that $2k \le \ell$.
Then, with probability at least $\binom{\ell}{2k} / \binom{n}{2k}$, 
there exists an ESS solution
$(A', B')$ for the modified instance $S'$ such that
$A' \cap B' = \emptyset$,
$|A'| + |B'| = \ell - k$, and
for every $i \in [k]$, the set $A' \cup B'$ contains exactly one of
$e_{2i-1}$ and $e_{2i}$.
Moreover, this solution can be transformed into a valid ESS solution
for the original instance $S$ by
\cref{lm:Recover-Original-Solution}.
\end{lemma}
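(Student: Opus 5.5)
Fix a minimum solution $(A,B)$ of $S$ with $|A|+|B|=\ell$. We may take $A\cap B=\emptyset$: removing common elements from two distinct subsets with equal sums keeps them distinct with equal sums and does not increase the size. The plan is to show that whenever the random set $D$ chosen by \cref{alg:RandomReplace} satisfies $D\subseteq A\cup B$, we can build the desired solution $(A',B')$ of $S'$ directly. The probability bound then follows by counting. $D$ is a uniformly random $2k$-subset of $S$, so $\Pr[D\subseteq A\cup B]=\binom{\ell}{2k}/\binom{n}{2k}$; this is valid because $2k\le\ell$. The arbitrary pairing of $D$ is fixed after $D$ is drawn, so the construction has to work for every pairing of $D$.

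For the construction, condition on $D\subseteq A\cup B$ and consider each pair $(x_i,y_i)$ with $x_i>y_i$. Since $x_i,y_i$ lie in the disjoint sets $A$ and $B$, there are two cases. If both lie on the same side, say in $A$, put $e_{2i-1}=x_i+y_i$ in $A'$; this keeps that side's sum unchanged. If they lie on different sides, put $e_{2i}=x_i-y_i$ on the side containing $x_i$. For example, if $x_i\in A$ and $y_i\in B$, then removing $x_i$ from $A$ and $y_i$ from $B$ lowers both sums by different amounts, and adding $x_i-y_i$ to $A$ restores equality: $\sum A - x_i + (x_i-y_i) = \sum A - y_i = \sum B - y_i$. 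The elements of $(A\cup B)\setminus D$ stay on their original sides.

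Each pair of two elements is replaced by one element, so $|A'|+|B'|=\ell-2k+k=\ell-k$. By construction, exactly one of $e_{2i-1},e_{2i}$ is used for each $i$. Disjointness of $A'$ and $B'$ follows from the disjointness of $A$ and $B$ together with the standing assumption that no value collisions occur in $S'$, so all elements of $E$ and of $S\setminus D$ are distinct. The final claim then follows immediately from \cref{lm:Recover-Original-Solution}, since its hypothesis (at most one generated value per pair) holds.

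The main obstacle is making sure $(A',B')$ is a genuine ESS solution, namely that $A'\neq B'$, i.e.\ that the two sets are not both empty. This holds because $|A'|+|B'|=\ell-k\ge k\ge 1$ when $k\ge1$; when $k=0$ the statement is trivial. A further check is that the sums are preserved pair by pair, so the modifications compose: each pair's replacement changes both sides by the same amount, independently of the other pairs.
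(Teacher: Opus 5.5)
Your proposal is correct and follows essentially the same route as the paper's proof. Both condition on the event $D\subseteq A\cup B$, which has probability $\binom{\ell}{2k}/\binom{n}{2k}$, and encode each pair by exactly one of $x_i+y_i$ or $x_i-y_i$ via the reverse of \cref{lm:Recover-Original-Solution}. You additionally spell out the case analysis, the role of the no-collision assumption in ensuring $A'\cap B'=\emptyset$, and the non-emptiness check, all of which the paper leaves implicit.
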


\begin{proof}
Let $(A_{\min},B_{\min})$ be a minimum ESS solution for $S$ with
$|A_{\min}|+|B_{\min}|=\ell$.
Let $D \subseteq S$ be the random subset of size $2k$ chosen by
\cref{alg:RandomReplace}.
Consider the event that
$D \subseteq A_{\min}\cup B_{\min}$.
Since $A_{\min} \cap B_{\min} = \emptyset$, and $|A_{\min}\cup B_{\min}|=\ell$, this event occurs with probability
$\binom{\ell}{2k} / \binom{n}{2k}$.
We show that, conditioned on this event, the desired solution exists.

Assume that this event occurs.
For every $i \in [k]$, we have
$x_i, y_i \in A_{\min}\cup B_{\min}$,
where $(x_i,y_i)$ denotes the $i$-th pair obtained by partitioning $D$
in \cref{alg:RandomReplace}.
Hence, by applying the reverse transformation of
\cref{lm:Recover-Original-Solution},
the contribution of the pair $(x_i,y_i)$ to the solution can be encoded using exactly one of
$e_{2i-1}=x_i+y_i$ and $e_{2i}=x_i-y_i$.
Applying this transformation independently to all $k$ pairs yields an
ESS solution $(A',B')$ for $S'$ such that
$A'\cap B'=\emptyset$ and
$|A'|+|B'|=\ell-k$.
Moreover, for every $i\in[k]$, the set $A'\cup B'$ contains exactly one
of $e_{2i-1}$ and $e_{2i}$.
\end{proof}

\begin{algorithm}[htbp]
\caption{RandomReplacedBalancedESS}
\DontPrintSemicolon

\KwIn{Input set $S$, minimum solution size $\ell$, replacement count $k$, and prime bound $p_{\max}$}
\KwOut{An ESS solution for $S$ if one exists; otherwise, NO}

Apply \cref{alg:RandomReplace} to the input set $S$ with replacement count $k$ to obtain a modified input set $S'$ and the set of new elements $E = \{e_1, \ldots, e_{2k}\}$.\;

Pick a random prime $p$ in $[p_{\max}, 2p_{\max}]$, and a random residue $r$ in $[0, p-1]$.\;

Let $C=\{X\subseteq S' \mid \sum(X)\equiv_p r\}$.
Enumerate the elements of $C$ and store them one by one.
If the number of stored elements exceeds
$n^d \cdot \frac{2^n}{p_{\max}}$ for some fixed constant $d>2$,
halt and return \textnormal{NO}.

For each $X\in C$, let $v_X\in\{0,1\}^k$ be the vector defined by
\[
    (v_X)_i = |X\cap\{e_{2i-1},e_{2i}\}|\bmod 2
    \quad\text{for every } i\in[k].
\]
Let $C'=\{(X,v_X)\mid X\in C\}$.
Enumerate and store all elements of $C'$.\;

Find $(X,v_X),(Y,v_Y)\in C'$ such that
$\sum(X)=\sum(Y)$ and $v_X+v_Y=\mathbf{1}$, where vector addition is
componentwise and $\mathbf{1}$ denotes the all-one vector of length $k$;
if no such pairs exist, return NO.\;

Apply \cref{lm:Recover-Original-Solution} to the ESS solution $(X \setminus Y, Y \setminus X)$ for $S'$ to recover an ESS solution for the original input set $S$.

\label{alg:FastExpSpace}
\end{algorithm}

We then combine \cref{alg:RandomReplace} with the
\emph{BalancedEqualSubsetSum} algorithm of
\cite{mucha_et_al:LIPIcs.ESA.2019.73}.
Unlike the original algorithm, our modified algorithm first applies
\cref{alg:RandomReplace} to modify the input set.
It also constructs the additional set $C'$ in Line~4 of
\cref{alg:FastExpSpace}.
Even if $\sum(X)=\sum(Y)$ holds, it may be impossible to recover a solution for the original input set $S$ unless the condition in \cref{lm:Recover-Original-Solution} is satisfied.
We construct $C'$ in order to find a recoverable solution $A', B' \subseteq S'$
satisfying the condition guaranteed by \cref{lm:Obtain-Smaller-Solution};
namely, for every $i \in [k]$, the set $A' \cup B'$ contains exactly one of
$e_{2i-1}$ and $e_{2i}$. Since the algorithm outputs
$A' = X \setminus Y$ and $B' = Y \setminus X$, we have
$A' \cup B' = X \Delta Y$. Thus, it suffices to ensure that, for every
$i \in [k]$,
\[
\left(|X\cap\{e_{2i-1},e_{2i}\}|\bmod 2\right)
+
\left(|Y\cap\{e_{2i-1},e_{2i}\}|\bmod 2\right)
=1.
\]
The vector $v_X \in \{0,1\}^k$ stored in $C'$ records these parities for all
pairs $\{e_{2i-1},e_{2i}\}$. Hence, the constraint
$v_X+v_Y=\mathbf{1}$ enforces the above condition for every $i\in[k]$.
In \cref{lm:Size-of-Phi,lm:FastAlgSuccessProb}, we lower bound the probability that \cref{alg:FastExpSpace} outputs a solution.

\begin{lemma} \label{lm:Size-of-Phi}
Assume that the solution described in \cref{lm:Obtain-Smaller-Solution} exists, and let
\begin{equation} \label{eq:Phi-Definition}
\Phi =
\left\{
\sum(X) \;\middle|\;
\begin{aligned}
&X \subseteq S',\ \exists Y \subseteq S' \text{ such that }
X \neq Y,\ \sum(X)=\sum(Y),\\
&X \Delta Y \text{ contains exactly one of } e_{2i-1}
\text{ and } e_{2i}
\text{ for every } i \in [k]
\end{aligned}
\right\}.
\end{equation}
Then we have
\[
|\Phi| \ge
\begin{cases}
2^{n - (\ell - k)} & \text{if } \ell - k > \frac{n}{2}, \\
\binom{n - (\ell - k)}{\lfloor \frac{\ell - k - 1}{2}\rfloor} & \text{otherwise}.
\end{cases}
\]
\end{lemma}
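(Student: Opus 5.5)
Fix the solution $(A',B')$ guaranteed by \cref{lm:Obtain-Smaller-Solution}, write $L=A'\cup B'$ with $|L|=\ell-k$, and let $R=S'\setminus L$, so $|R|=n-(\ell-k)$. The plan mirrors the lower bound on the number of ``solution sums'' in the BalancedEqualSubsetSum analysis of \cite{mucha_et_al:LIPIcs.ESA.2019.73}. The building block: for any $Z\subseteq R$, the pair $X=A'\cup Z$, $Y=B'\cup Z$ satisfies $X\neq Y$ (since $A'\cup B'\neq\emptyset$ and $A'\cap B'=\emptyset$), $\sum(X)=\sum(Y)$, and $X\Delta Y=A'\cup B'$, which contains exactly one of $e_{2i-1},e_{2i}$ for every $i$. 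So $\sum(A')+\sum(Z)\in\Phi$ for every $Z\subseteq R$, and $|\Phi|\ge|\{\sum(Z): Z\subseteq R\}|$. It remains to count the distinct subset sums of $R$ (or of a suitable subfamily).

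\textbf{Case $\ell-k>n/2$.} We need all $2^{|R|}$ subsets of $R$ to have distinct sums. Suppose $Z_1\neq Z_2\subseteq R$ with $\sum(Z_1)=\sum(Z_2)$. Then $(Z_1\setminus Z_2,\,Z_2\setminus Z_1)$ is an ESS solution of $S'$ of size at most $|R|=n-(\ell-k)<n/2<\ell-k$. We then derive a contradiction with minimality. The cleanest route is to argue that $\ell-k$ is the minimum solution size of $S'$ among solutions meeting the recoverability condition. For solutions of $S'$ in general, I would instead compare with the original instance: an arbitrary small solution of $S'$ may use both $e_{2i-1}$ and $e_{2i}$. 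However, the paper's early-termination convention (no value collisions, and small solutions yield original solutions) lets us assume we are in a situation where it doesn't.

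This is the main obstacle. A solution inside $R$ might use both $e_{2i-1}$ and $e_{2i}$ for pairs whose elements are in $R$; pairs in $L$ are excluded. In that case I would translate it back: replacing $e_{2i-1},e_{2i}$ together gives $2x_i$ or $2y_i$ contributions. To avoid this, I would rather use $Z$ ranging only over subsets of $R$ and note that, if two such sums coincide, the difference solution $(Z_1\setminus Z_2, Z_2\setminus Z_1)$ can be combined with $(A',B')$. Specifically, $(A'\cup(Z_1\setminus Z_2), B'\cup(Z_2\setminus Z_1))$ is a solution whose restriction to $L$ still satisfies the parity condition, while the original-instance translation has fewer than $\ell$ elements. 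If that fails, I would accept the weaker fallback used in \cite{mucha_et_al:LIPIcs.ESA.2019.73}: define $\ell$-minimality relative to $S'$ directly.

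\textbf{Case $\ell-k\le n/2$.} Restrict to $Z\subseteq R$ with $|Z|=\lfloor(\ell-k-1)/2\rfloor$. If two distinct such $Z_1,Z_2$ have equal sum, then $|Z_1\Delta Z_2|\le 2\lfloor(\ell-k-1)/2\rfloor\le\ell-k-1<\ell-k$. This gives a smaller equal-sum pair, and the same minimality contradiction as above applies. Hence these $\binom{n-(\ell-k)}{\lfloor(\ell-k-1)/2\rfloor}$ sets give distinct sums, and therefore distinct elements of $\Phi$. The crux in both cases is justifying that no equal-sum pair of size below $\ell-k$ exists in $S'$; I would handle it uniformly by the minimality argument sketched above.
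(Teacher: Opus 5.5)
Your building block (the sets $A'\cup Z$, $B'\cup Z$) and your two counting inequalities match the paper. However, the step you flag as the ``main obstacle'' is left open, and the ways you propose to close it do not work.

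The missing observation is simple. By \cref{lm:Obtain-Smaller-Solution}, $L=A'\cup B'$ contains exactly one of $e_{2i-1},e_{2i}$ for every $i\in[k]$, and $S'$ contains both. So $R=S'\setminus L$ also contains exactly one of them for every $i$. Your worry that a solution inside $R$ might use both $e_{2i-1}$ and $e_{2i}$ therefore cannot arise.

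It follows that for distinct $Z_1,Z_2\subseteq R$ with equal sums, the pair $(Z_1\setminus Z_2,Z_2\setminus Z_1)$ automatically satisfies the hypothesis of \cref{lm:Recover-Original-Solution}. Translating it back to $S$ adds one element per $e$-value used, i.e.\ at most $k$ elements. This gives a solution of $S$ of size at most $|Z_1\Delta Z_2|+k$. That is $<\ell$ precisely by your own inequalities: $|Z_1\Delta Z_2|\le n-(\ell-k)<\ell-k$ in the first case, and $|Z_1\Delta Z_2|\le 2\lfloor(\ell-k-1)/2\rfloor<\ell-k$ in the second. This contradicts the minimality of $\ell$ in the original instance. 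That is exactly the paper's argument, and it is your own ``cleanest route'' once the complementary-pair observation is in place.

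The fallbacks you offer instead are not valid:
\begin{itemize}
\item Combining with $(A',B')$ gives a solution of size $\ell-k+|Z_1\Delta Z_2|$, which is larger, not smaller. It also generally contains both $e_{2i-1}$ and $e_{2i}$ (one from $L$, one from $R$), so it is not even recoverable.
\item Redefining minimality relative to $S'$ changes the lemma, since $\ell$ is the minimum solution size of $S$. $S'$ can have genuinely smaller spurious solutions, e.g.\ ones using both $e_{2i-1}$ and $e_{2i}$, which encode $2x_i$ or $2y_i$. So the statement you call the crux, ``no equal-sum pair of size below $\ell-k$ exists in $S'$'', is false in general. What is true and needed is only the version restricted to subsets of $R$.
\item The early-termination convention only rules out value collisions in $S'$, not small solutions, so it cannot justify the step either.
\end{itemize}
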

\begin{proof}
Let $A', B' \subseteq S'$ be the solution described in
\cref{lm:Obtain-Smaller-Solution}, and let $Z = S' \setminus (A' \cup B')$.
For any subset $Z' \subseteq Z$, define $X = A' \cup Z'$ and
$Y = B' \cup Z'$. Then $X \neq Y$ and
$\sum(X) = \sum(A') + \sum(Z') = \sum(B') + \sum(Z') = \sum(Y)$.
Moreover, since the same set $Z'$ is added to both $A'$ and $B'$, we have
$X \Delta Y = A' \cup B'$. By the property of the solution in
\cref{lm:Obtain-Smaller-Solution}, $A' \cup B'$ contains exactly one of
$e_{2i-1}$ and $e_{2i}$ for every $i \in [k]$. Hence, $X$ and $Y$ satisfy
the defining conditions of $\Phi$.

Therefore, every distinct value of $\sum(Z')$ with $Z' \subseteq Z$ gives rise
to a distinct element of $\Phi$, shifted by the fixed value $\sum(A')$.
Consequently,
$|\Phi| \ge \left|\left\{\sum(Z') \mid Z' \subseteq Z\right\}\right|$.
It remains to lower bound the number of distinct subset sums of $Z$.

First, consider the case $\ell - k > n/2$. We claim that all subset sums of $Z$ are
distinct. Suppose, for contradiction, that there exist distinct subsets
$Z_1,Z_2 \subseteq Z$ such that $\sum(Z_1)=\sum(Z_2)$. Then
$(Z_1 \setminus Z_2, Z_2 \setminus Z_1)$ is an ESS solution for $S'$.
Since $A' \cup B'$ contains exactly one of $e_{2i-1}$ and $e_{2i}$ for every
$i \in [k]$, the set $Z$ also contains exactly one of $e_{2i-1}$ and
$e_{2i}$ for every $i \in [k]$. Thus, by
\cref{lm:Recover-Original-Solution}, this ESS solution for $S'$ can be
converted into an ESS solution for the original set $S$ by increasing its size
by at most $k$. Its size is at most $|Z|+k = n-\ell+2k$, which is strictly
smaller than $\ell$ because $\ell-k > n/2$. This contradicts the assumption
that the minimum solution size for $S$ is $\ell$.
Therefore, all subset sums of $Z$ are distinct, and hence
$\left|\left\{\sum(Z') \mid Z' \subseteq Z\right\}\right| = 2^{|Z|}
= 2^{n-(\ell-k)}$, which gives the desired lower bound in this case.

Second, consider the remaining case. Let
$j=\left\lfloor \frac{\ell-k-1}{2} \right\rfloor$. We claim that all subsets of $Z$ of size $j$ have
distinct sums. Suppose, for contradiction, that there exist distinct subsets
$Z_1,Z_2 \subseteq Z$ such that $|Z_1|=|Z_2|=j$ and
$\sum(Z_1)=\sum(Z_2)$. Then
$(Z_1 \setminus Z_2, Z_2 \setminus Z_1)$ is an ESS solution for $S'$.
Again, by \cref{lm:Recover-Original-Solution}, this solution can be converted
into an ESS solution for $S$ by increasing its size by at most $k$. The
resulting solution has size at most $2j+k < \ell$, contradicting the
minimality of $\ell$. Therefore, all subsets of $Z$ of size $j$ have distinct
sums, and hence $\left|\left\{\sum(Z') \mid Z' \subseteq Z\right\}\right| \ge \binom{|Z|}{j}$,
which gives the desired lower bound.
\end{proof}

\begin{lemma} \label{lm:FastAlgSuccessProb}
Assume that $0 \le 2k \le \ell$ and
$\epsilon n \le \ell \le (1-\epsilon)n$ for some constant $\epsilon > 0$
(say, $\epsilon = 1/100$).
Set the prime-bound parameter $p_{\max}$ in \cref{alg:FastExpSpace} to
$2^{n-(\ell-k)}$ if $\ell-k > n/2$, and to
$\binom{n-(\ell-k)}{\lfloor(\ell-k-1)/2\rfloor}$ otherwise.
Then, with probability at least
$\Omega^*\left(\binom{\ell}{2k}/\binom{n}{2k}\right)$,
\cref{alg:FastExpSpace} outputs a solution.
\end{lemma}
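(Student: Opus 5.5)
The argument has two stages. First, by \cref{lm:Obtain-Smaller-Solution}, with probability at least $\binom{\ell}{2k}/\binom{n}{2k}$ the modified instance $S'$ admits a solution $(A',B')$ of size $\ell-k$ that contains exactly one of $e_{2i-1},e_{2i}$ for every $i$. Second, conditioned on this event, we show that the remainder of \cref{alg:FastExpSpace} (lines 2--6) succeeds with probability $\Omega^*(1)$. Multiplying the two gives the claim. Correctness of the output is automatic: any pair found in line 5 gives $A'=X\setminus Y$, $B'=Y\setminus X$ with $A'\cup B'=X\Delta Y$, and the condition $v_X+v_Y=\mathbf 1$ says exactly that $X\Delta Y$ contains exactly one of $e_{2i-1},e_{2i}$ for each $i$. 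Hence \cref{lm:Recover-Original-Solution} applies, and the algorithm never returns a spurious solution.

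For the second stage, call $(X,Y)$ with $X\ne Y$, $\sum(X)=\sum(Y)$ and the parity condition a \emph{witness pair}. The algorithm succeeds if (a) some witness pair has $\sum(X)\equiv_p r$, and (b) the enumeration of $C$ does not exceed the threshold $n^d 2^n/p_{\max}$.

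For (a), use $\Phi$ from \cref{lm:Size-of-Phi}. The choice of $p_{\max}$ gives $|\Phi|\ge p_{\max}$, and we want some $\phi\in\Phi$ with $\phi\equiv_p r$. Keep an arbitrary subset $\Phi_0\subseteq\Phi$ of size exactly $p_{\max}$ (rounding as needed). Since $\sum(S')\le 2^{O(n)}$, two distinct $\phi,\phi'\in\Phi_0$ collide mod $p$ only if $p\mid \phi-\phi'$. The difference is a nonzero integer of size $2^{O(n)}$, so it has $O(n)$ prime divisors. By the prime number theorem, $[p_{\max},2p_{\max}]$ contains $\Omega(p_{\max}/\log p_{\max})=\Omega^*(p_{\max})$ primes, because $p_{\max}\ge 2^{\Omega(n)}$ when $\epsilon n\le\ell\le(1-\epsilon)n$ and $2k\le\ell$.

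This last point needs a check in both regimes. In the first, $n-(\ell-k)\ge n-\ell\ge\epsilon n$. In the second, $\ell-k\ge \ell/2\ge\epsilon n/2$ and $n-(\ell-k)\ge \epsilon n$, so the binomial is $2^{\Omega(n)}$.

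Therefore $\mathbb E_p[\#\text{colliding pairs in }\Phi_0]\le p_{\max}^2\cdot O(n)/\Omega^*(p_{\max})=O^*(p_{\max})$. This alone is not enough, so I plan to count distinct residues directly. Write $N_a$ for the number of elements of $\Phi_0$ in residue class $a$. By Cauchy--Schwarz, the number of distinct residues is at least $|\Phi_0|^2/\sum_a N_a^2$. Since $\sum_a N_a^2=|\Phi_0|+2\#\text{collisions}$, in expectation this is $O^*(p_{\max})$. By Markov the bound holds with constant probability, so $\Omega^*(p_{\max})$ residues out of $p\le 2p_{\max}$ are hit. A uniformly random $r$ then hits one with probability $\Omega^*(1)$. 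When $r$ is hit, both $X$ and $Y$ lie in $C$ and give the desired pair.

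For (b), bound $\mathbb E|C|=\sum_{X\subseteq S'}\Pr[\sum(X)\equiv_p r]$. For a fixed $p$, averaging over $r$ gives exactly $2^n/p\le 2^n/p_{\max}$, so $\mathbb E|C|\le 2^n/p_{\max}$. By Markov, $|C|>n^d2^n/p_{\max}$ happens with probability at most $n^{-d}$. This is $o$ of the $\Omega^*(1)$ success probability from (a) provided the polynomial loss in (a) is smaller than $n^{d}$. Since $d>2$ and the loss in (a) is $O(n\log n)$-ish, a union bound leaves probability $\Omega^*(1)$.

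I expect the main obstacle to be making the $\Omega^*(1)$ bound in (a) rigorous while handling the dependence between $p$ and $r$ and the exact polynomial factors. The key points are bounding the number of primes dividing differences, the prime density in $[p_{\max},2p_{\max}]$, and checking that the polynomial loss there is dominated by the $n^{-d}$ failure probability in (b). If the constants do not line up, I would instead condition on $(p,r)$ in (a) and bound $\Pr[\text{overflow}\mid \text{success in (a)}]$ by applying Markov to $|C|$ restricted to the event; the $n^d$ slack is exactly what absorbs the conditioning.
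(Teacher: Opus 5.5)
Your proposal is correct and follows essentially the same route as the paper. You condition on the event of \cref{lm:Obtain-Smaller-Solution}, show via the random-prime collision count and Cauchy--Schwarz (exactly \cref{clm:Lower-Bound-Phi-mod-p}) that $r$ lands in $\{a \bmod p \mid a\in\Phi\}$ with probability $\Omega(1/n^2)$, and subtract the Markov bound $n^{-d}$ on the overflow of $|C|$.

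There is one bookkeeping slip. With $O(n)$ prime divisors per difference and only $\Omega(p_{\max}/n)$ primes in $[p_{\max},2p_{\max}]$, your own bounds give a hit probability of $\Omega(1/n^2)$, not roughly $1/(n\log n)$. This is why the strict inequality $d>2$ is needed, and it still suffices. Your truncation to $\Phi_0$ is harmless but unnecessary.
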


\begin{proof}
Let $\mathcal{F}$ denote the event that the solution described in
\cref{lm:Obtain-Smaller-Solution} exists.
We analyze the conditional probability of success given that
$\mathcal{F}$ occurs.
By \cref{lm:Obtain-Smaller-Solution}, we have
$\mathbb{P}[\mathcal{F}]
\ge \binom{\ell}{2k}/\binom{n}{2k}$.
It remains to show that, conditioned on $\mathcal{F}$,
\cref{alg:FastExpSpace} outputs a solution with probability $\Omega^*(1)$.

Let $\mathcal{G}$ denote the event that \cref{alg:FastExpSpace} does not
halt in Line~3, i.e., the event that
$|C|\le n^d \frac{2^n}{p_{\max}}$ occurs, where $d>2$ is the fixed
constant used in Line~3 of \cref{alg:FastExpSpace}.
When $\mathcal{G}$ occurs, the algorithm reaches Line~5.

Let $\mathcal{H}$ denote the event that there exist
$(X,v_X),(Y,v_Y)\in C'$ satisfying
$\sum(X)=\sum(Y)$ and $v_X+v_Y=\mathbf{1}$.
When $\mathcal{H}$ occurs, the algorithm finds such a pair in Line~5.

Therefore, the conditional probability that \cref{alg:FastExpSpace}
outputs a solution can be expressed as
$\mathbb{P}[(\mathcal{G}\cap\mathcal{H})\mid\mathcal{F}]$.
To lower-bound $\mathbb{P}[\mathcal{H}\mid\mathcal{F}]$, it suffices to
lower-bound the probability that there exists an $a\in\Phi$ such that
$a\equiv_p r$, where $\Phi$ is defined in \eqref{eq:Phi-Definition} and
$r$ is the random residue chosen in Line~2 of \cref{alg:FastExpSpace}.

\begin{restatable}{claim}{lowerboundphimodp}
\label{clm:Lower-Bound-Phi-mod-p}
Suppose that $|\Phi|\ge p_{\max}$ and $p_{\max}=2^{\Theta(n)}$.
Then, with constant probability, we have
$|\{a \bmod p \mid a\in\Phi\}| \ge \Omega(p/n^2)$.
\end{restatable}

\begin{proof}
The proof is identical to that of
\cite[Lemma~3.6]{mucha_et_al:LIPIcs.ESA.2019.73}.
For completeness, we provide the details in
Appendix~\ref{app:Repeat-Proof-Of-ESA-Lemma3.6}.
\end{proof}

By \cref{lm:Size-of-Phi}, conditioned on $\mathcal{F}$, we have
$|\Phi|\ge p_{\max}$.
Moreover, from the assumptions on $\ell$ and $k$, we have
$\ell-k=\Theta(n)$ and $n-(\ell-k)=\Theta(n)$.
Thus, in either case of the definition of $p_{\max}$, we have
$p_{\max}=2^{\Theta(n)}$.
Hence, by \cref{clm:Lower-Bound-Phi-mod-p}, with constant probability,
$|\{a\bmod p\mid a\in\Phi\}|\ge \Omega(p/n^2)$.
Conditioned on this event, since $r$ is chosen uniformly at random from
$[0,p-1]$, the probability that
$r\in\{a\bmod p\mid a\in\Phi\}$ is $\Omega(1/n^2)$.
Therefore,
\begin{equation} \label{eq:H-and-F-Prob}
    \mathbb{P}[\mathcal{H}\mid\mathcal{F}] \ge \Omega(1/n^2).
\end{equation}
We next upper-bound the expected value of $|C|$.

\begin{claim} \label{clm:Exp-Size-Of-C} For any fixed $S'$, we have $\mathbb{E}[|C|]\le 2^n/p_{\max}$. \end{claim}

\begin{proof}
For a fixed prime $p$, each subset $X\subseteq S'$ falls into $C$ with probability $1/p$, since the residue $r$ is chosen uniformly at random from $[0,p-1]$.
Since $p\ge p_{\max}$, this probability is at most $1/p_{\max}$ for any $p$.
Thus, the result follows by linearity of expectation.
\end{proof}

Since \cref{clm:Exp-Size-Of-C} holds for every fixed outcome of $S'$, it
also holds for the conditional expectation given $\mathcal{F}$.
Hence, $\mathbb{E}[|C|\mid\mathcal{F}]\le 2^n/p_{\max}$.
Consider the complement of $\mathcal{G}$, namely the event
$\mathcal{G}^c$ that \cref{alg:FastExpSpace} halts in Line~3, i.e., the
event that $|C|>n^d \frac{2^n}{p_{\max}}$ occurs.
By Markov's inequality, we have
\begin{equation} \label{eq:G^c-and-F-prob}
\mathbb{P}[\mathcal{G}^c\mid\mathcal{F}]
\le
\frac{\mathbb{E}[|C|\mid\mathcal{F}]}
{n^d \cdot 2^n/p_{\max}}
\le
\frac{1}{n^d}.
\end{equation}
We are now ready to prove
$\mathbb{P}[(\mathcal{G}\cap\mathcal{H})\mid\mathcal{F}]
\ge \Omega^*(1)$:
\begin{align}
\mathbb{P}[(\mathcal{G}\cap\mathcal{H})\mid\mathcal{F}]
&=
\mathbb{P}[\mathcal{H}\mid\mathcal{F}]
-
\mathbb{P}[(\mathcal{G}^c\cap\mathcal{H})\mid\mathcal{F}] \\
&\ge
\Omega(1/n^2)-1/n^d
\quad\text{by \eqref{eq:H-and-F-Prob} and \eqref{eq:G^c-and-F-prob}} \\
&=
\Omega(1/n^2) = \Omega^*(1),
\quad\text{since $d>2$.}
\end{align}
This completes the proof of \cref{lm:FastAlgSuccessProb}.
\end{proof}

We now turn to the time and space complexity analysis.

\begin{lemma} \label{lm:Enum-C-time-space}
The set $C$ can be enumerated in
$O^*(\max\{|C|, 2^{n/2}\})$ time and
$O^*(\max\{|C|, 2^{n/4}\})$ space.
\end{lemma}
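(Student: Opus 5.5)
We enumerate $C=\{X\subseteq S'\mid \sum(X)\equiv_p r\}$ with the Schroeppel--Shamir style modular enumeration, as in the BalancedEqualSubsetSum algorithm of \cite{mucha_et_al:LIPIcs.ESA.2019.73}. We split $S'$ into four parts $S_1,S_2,S_3,S_4$ of size $n/4$ each; recall that $12\mid n$. For each $i$ we list the $2^{n/4}$ subset sums of $S_i$ modulo $p$, which takes $O^*(2^{n/4})$ time and space. A subset $X=X_1\cup X_2\cup X_3\cup X_4$ with $X_i\subseteq S_i$ lies in $C$ exactly when $\sum(X_1)+\sum(X_2)+\sum(X_3)+\sum(X_4)\equiv_p r$.

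The core is a 4-list enumeration over all pair-sum residues, organised as follows.
\begin{itemize}
\item We sort the lists for $S_1,S_2$ so that the pairs $(X_1,X_2)$ can be produced in increasing order of $\sum(X_1)+\sum(X_2)$. We use a priority queue holding one current pointer into the sorted $S_2$-list for each $X_1$; this takes $O^*(2^{n/4})$ space. We do the same for the pairs $(X_3,X_4)$.
\item Because we work modulo $p$, we first reduce everything to integer sums. Each sum $\sum(X_1)+\sum(X_2) \bmod p$ lies in $[0,2p)$. Enumerating the combined sums in sorted order, with the residue of the right half taken as $r-(\cdot)$ modulo $p$ in both possible wrap-around cases, reduces the task to a merge of two sorted streams of the target values. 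This is the standard two-pointer Schroeppel--Shamir merge, one stream ascending and the other descending.
\item Every matching pair of a left element and a right element yields a member of $C$. We output it and store it, since the algorithm stores $C$.
\end{itemize}

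\textbf{Running time.} The streams have length $2^{n/2}$ each, so the merge costs $O^*(2^{n/2})$ time for the scanning. Outputting the matches costs $O^*(|C|)$ in addition. When a block of equal values occurs on both sides, we output all combinations; the block sizes are controlled because each emitted pair is an element of $C$. The total is therefore $O^*(2^{n/2}+|C|)$.

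\textbf{Space.} The heaps use $O^*(2^{n/4})$ space and the stored output uses $O^*(|C|)$.

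The main obstacle is that equal-value blocks have to be buffered. With the heap orderings we cannot rewind a stream. We therefore buffer the current block of one side, and that buffer can be large. We bound it as follows: if a block contains a single element on the other side, the buffer's elements were not all useful, so the buffered block size is not automatically at most $|C|$. To avoid this, we break ties by processing each residue value $v$ as follows. We collect all left pairs with value $v$, then all right pairs with value $r-v$. If both collections are nonempty, their product is at least as large as each factor, so the buffer is at most $|C|$. If one collection is empty, we instead simply skip the other. Skipping can be done by streaming without buffering, because we buffer only the side that is consumed first and discard it when the other side turns out to be empty. Doing this needs buffer size at most the block size, which is bounded by $2^{n/4}$ times the multiplicity of the pair set. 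This is where I expect care is needed. If it fails, the fallback is to randomize with an additional modulus, as in \cite{mucha_et_al:LIPIcs.ESA.2019.73}, whose proof we can simply cite, since our $C$ is defined exactly as there on the instance $S'$.
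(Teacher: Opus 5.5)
Handling the modulus differently from the paper is fine. The paper builds a modular generator: for each $a_i$ it walks a cyclic rotation of the sorted list $B$, starting at the first $b_j\ge p-a_i$, so each half's subset sums come out sorted by residue modulo $p$. It then runs the Becker--Coron--Joux modular merge. You reduce only the quarter sums modulo $p$, feed these residues to the ordinary integer heap generator, and use that the total lies in $[0,4p)$. Then $\sum(X)\equiv_p r$ becomes one of the integer equations $L+R=r+jp$ with $j\in\{0,1,2,3\}$. That is four targets, not ``both'' wrap-around cases, since each half's value lies in $[0,2p)$. Each target is handled by a standard ascending/descending merge, and each $X\in C$ is found in exactly one pass. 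This avoids the rotation trick at a constant-factor cost.

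The gap is the buffering step, which you leave open and, as literally described, get wrong. You first collect all left pairs of value $v$ and only then look for right pairs of value $t-v$, so unmatched blocks are buffered too. Such a block is bounded only by $2^{n/4}$ times the multiplicity of a residue among subset sums of $S_2$. In general that is bounded by neither $|C|$ nor $2^{n/4}$: if $S_1\cup S_2$ consists of small integers, a single value class contains $2^{n/2}/\mathrm{poly}(n)$ pairs. So the $O^*(\max\{|C|,2^{n/4}\})$ space bound is not established. Your remark that a single partner on the other side leaves the buffer ``not all useful'' is also backwards: every buffered element then yields an output.

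The missing observation is that the two-pointer merge never needs to buffer a non-matching block. The heap generators let you read the current heads $v$ (left, ascending) and $w$ (right, descending) without consuming them. If $v+w\ne t$, you advance one stream by a single element and store nothing. You start buffering the left block only when $v+w=t$, at which point both blocks are certified nonempty. The buffer then has size $b_L\le b_Lb_R$, which is the number of distinct elements of $C$ emitted from that block. So the extra space is at most $|C|$, and the time stays $O^*(2^{n/2}+|C|)$.

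Your fallback of citing the enumeration of Mucha et al.\ would not obviously rescue the space bound. As the paper notes, the straightforward modular scan sorts all $2^{n/2}$ half-sums and uses $O^*(2^{n/2})$ space. The $2^{n/4}$ term is exactly what the later results with space exponent below $1/2$ rely on.
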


\begin{proof}
This is the modular variant of the Subset-Sum problem, and the technique of
Schroeppel and Shamir~\cite{SchroeppelAndShamir1981} can compute it in
$O^*(2^{n/2})$ time and $O^*(2^{n/4})$ space, excluding the output size.
See \cite[Section~3.2]{becker2011improved} for pseudocode of the modular version.
The pseudocode requires scanning the subset sums of the two halves of $S'$
in ascending and descending order with respect to their values modulo $p$.
A straightforward implementation would enumerate and sort all $O^*(2^{n/2})$
sums, using $O^*(2^{n/2})$ space. In Appendix \ref{app:Modulo-Schroeppel-Shamir}, we explain that a minor
modification of the technique of Schroeppel and
Shamir~\cite{SchroeppelAndShamir1981} performs the same scan using only
$O^*(2^{n/4})$ space.
\end{proof}

Since we enumerate all elements of $C'$, Line 5 of \cref{alg:FastExpSpace}
can be implemented in $O^*(|C|)$ time and space. 
Specifically, we sort the elements of $C'$ by the key
$(\sum(X),v_X)$. Then, for each group of elements with the same value of
$\sum(X)$, we scan the corresponding vectors $v_X$ and search for the
complementary vector $\mathbf{1}-v_X$ by binary search. If such
a vector is found in the same group, then we obtain a pair
$(X,v_X),(Y,v_Y)\in C'$ such that $\sum(X)=\sum(Y)$ and
$v_X+v_Y=\mathbf{1}$.

Both \cref{alg:RandomReplace} and the recovery of the ESS solution for the
original input set $S$ require only polynomial time. Therefore, the
bottleneck in \cref{alg:FastExpSpace} is the enumeration of the set $C$.
Choosing the prime-bound parameter as in \cref{lm:FastAlgSuccessProb} yields \cref{thm:FastExpSpaceComplexity}.

\begin{theorem}
\label{thm:FastExpSpaceComplexity}
Let $\ell$ be the minimum solution size, where
$\epsilon n \le \ell \le (1-\epsilon)n$ for some constant $\epsilon>0$,
and let $\ell'=\ell/n$.
Let $k'$ be a real parameter satisfying $0\le k'\le \ell'/2$, and set
the replacement count to $k=\lfloor k'n\rfloor$.
Define
$\mathrm{repl}(\ell',k')=H(2k')-\ell'H(2k'/\ell')$
and
\begin{equation} \label{eq:phi(x)}
\mathrm{phi}(x)=(1-x)H\left(\frac{x}{2(1-x)}\right).
\end{equation}
Then there is a Monte Carlo algorithm for ESS with time complexity
$O^*(2^{\tau_{\mathrm{rrb}}(\ell',k')n})$ and space complexity
$O^*(2^{\sigma_{\mathrm{rrb}}(\ell',k')n})$, where
\begin{equation}
\label{eq:tau-rrb}
\tau_{\mathrm{rrb}}(\ell',k') =
\begin{cases}
\ell'-k'+\mathrm{repl}(\ell',k')
& \text{if } \ell'-k'>1/2,\\[1mm]
\max\{1-\mathrm{phi}(\ell'-k'),1/2\}
+\mathrm{repl}(\ell',k')
& \text{otherwise.}
\end{cases}
\end{equation}
and
\begin{equation}
\label{eq:sigma-rrb}
\sigma_{\mathrm{rrb}}(\ell',k') =
\begin{cases}
\ell'-k'
& \text{if } \ell'-k'>1/2,\\[1mm]
1-\mathrm{phi}(\ell'-k')
& \text{otherwise.}
\end{cases}
\end{equation}
\end{theorem}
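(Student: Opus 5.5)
The plan is to run \cref{alg:FastExpSpace} with $p_{\max}$ set as in \cref{lm:FastAlgSuccessProb} and repeat it $R=\Theta^*\bigl(\binom{n}{2k}/\binom{\ell}{2k}\bigr)$ times independently. By \cref{lm:FastAlgSuccessProb} each run succeeds with probability $\Omega^*\bigl(\binom{\ell}{2k}/\binom{n}{2k}\bigr)$, so $R$ runs succeed with constant probability. The algorithm never produces false positives: every returned pair satisfies $v_X+v_Y=\mathbf{1}$, so $X\Delta Y$ contains exactly one of $e_{2i-1},e_{2i}$ for every $i$. Hence \cref{lm:Recover-Original-Solution} applies, and its recovered pair is a valid solution for $S$; this pair is distinct and nonempty because the transformation preserves nonemptiness. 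The hypotheses $2k\le\ell$ and $\epsilon n\le\ell\le(1-\epsilon)n$ follow from $k'\le\ell'/2$ and the assumption on $\ell$.

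Next I compute the repetition exponent. By the binomial Fact, $\binom{n}{2k}=\Theta^*(2^{H(2k/n)n})$ and $\binom{\ell}{2k}=\Theta^*(2^{\ell H(2k/\ell)})$. Using \cref{fact:H(x+1/n)-H(x)} to replace $2k/n$ by $2k'$ up to $O(1/n)$ error, we get $R=O^*(2^{\mathrm{repl}(\ell',k')n})$. The same fact lets me replace $\ell-k$ by $(\ell'-k')n$ in all subsequent exponents.

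For a single run, \cref{lm:Enum-C-time-space} together with the cap $|C|\le n^d2^n/p_{\max}$ in Line~3 bounds time by $O^*(\max\{2^n/p_{\max},2^{n/2}\})$ and space by $O^*(\max\{2^n/p_{\max},2^{n/4}\})$. Building $C'$ and performing Line~5 cost $O^*(|C|)$ time and space via sorting, and the remaining steps are polynomial. I then split into the two cases.
\begin{itemize}
\item If $\ell'-k'>1/2$, then $p_{\max}=2^{n-(\ell-k)}$, so $2^n/p_{\max}=2^{(\ell'-k')n}$, which exceeds $2^{n/2}$. Time per run is therefore $2^{(\ell'-k')n}$, and the space is the same.
\item Otherwise, $p_{\max}=\binom{n-(\ell-k)}{\lfloor(\ell-k-1)/2\rfloor}=\Theta^*\bigl(2^{(1-x)nH(x/(2(1-x)))}\bigr)$ with $x=\ell'-k'$, which is $2^{\mathrm{phi}(x)n}$. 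Time per run is then $\max\{1-\mathrm{phi}(x),1/2\}$ in the exponent.
\end{itemize}
Multiplying the per-run time by $R$ gives $\tau_{\mathrm{rrb}}$. Space is not multiplied, since runs are sequential and reuse memory.

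The main obstacle is the space bound in the second case. There the claimed $\sigma_{\mathrm{rrb}}=1-\mathrm{phi}(x)$ requires $1-\mathrm{phi}(x)\ge1/4$, so that the $2^{n/4}$ Schroeppel--Shamir term is dominated. I would verify this by noting that $\mathrm{phi}(x)\le 1-x$ while $H\le 1$, and that $\mathrm{phi}$ is maximized near $x=1/3$, where $\mathrm{phi}(1/3)=\tfrac23H(1/4)\approx0.54<3/4$. A concavity/one-variable check then shows $\mathrm{phi}(x)\le 3/4$ on $(0,1/2]$. I also need $p_{\max}=2^{\Theta(n)}$ as used in \cref{lm:FastAlgSuccessProb}, which holds since $\ell-k=\Theta(n)$.
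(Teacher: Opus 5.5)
Your proposal is correct and follows the paper's proof essentially step for step: with $x=\ell'-k'$ you read off $p_{\max}$ in the two regimes, bound a single trial by \cref{lm:Enum-C-time-space} together with the cap in Line~3, and multiply only the time by the $O^*(2^{\mathrm{repl}(\ell',k')n})$ repetitions supplied by \cref{lm:FastAlgSuccessProb}. Your explicit check that $1-\mathrm{phi}(x)\ge 1/4$ fills in a step the paper asserts without comment. It holds with room to spare, since $\mathrm{phi}(x)\le 1-x\le 3/4$ for $x\ge 1/4$ and $\mathrm{phi}(x)< H(1/6)<3/4$ for $x<1/4$ (the true maximum of $\mathrm{phi}$ is about $0.55$, attained near $x\approx 0.39$ rather than $1/3$, but this does not affect your conclusion).
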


\begin{proof}
Let $x=\ell'-k'$.
As in \cref{lm:FastAlgSuccessProb}, set
$p_{\max}=2^{n-(\ell-k)}=\Theta^*(2^{(1-x)n})$ if $x>1/2$, and
$p_{\max}=\binom{n-(\ell-k)}{\lfloor(\ell-k-1)/2\rfloor}
=\Theta^*(2^{\mathrm{phi}(x)n})$ otherwise.
The latter equality follows from
$\left|\frac{\lfloor(\ell-k-1)/2\rfloor}{n-(\ell-k)}
-\frac{x}{2(1-x)}\right| \le O(1/n)$, together with
\cref{fact:H(x+1/n)-H(x)}.
Here, we use the assumptions on $\ell$ and $k'$, which imply
$\ell-k=\Theta(n)$ and $n-(\ell-k)=\Theta(n)$.

By \cref{lm:Enum-C-time-space}, one trial of \cref{alg:FastExpSpace} runs in
$O^*(\max\{2^{xn},2^{n/2}\})=O^*(2^{xn})$ time and uses
$O^*(\max\{2^{xn},2^{n/4}\})=O^*(2^{xn})$ space when $x>1/2$.
In the remaining case, one trial runs in
$O^*(\max\{2^{(1-\mathrm{phi}(x))n},2^{n/2}\})$ time and uses
$O^*(\max\{2^{(1-\mathrm{phi}(x))n},2^{n/4}\})
=O^*(2^{(1-\mathrm{phi}(x))n})$ space.
By \cref{lm:FastAlgSuccessProb}, one trial succeeds with probability at least
$\Omega^*\left(\binom{\ell}{2k}/\binom{n}{2k}\right)$.
Hence, repeating the algorithm
$O^*(\binom{n}{2k}/\binom{\ell}{2k})
=O^*(2^{\mathrm{repl}(\ell',k')n})$ times amplifies the success
probability to a constant, where the bound follows from
$\left|2k/n-2k'\right|\le O(1/n)$ and
$\left|2k/\ell-2k'/\ell'\right| \le O(1/\ell)$, together with
\cref{fact:H(x+1/n)-H(x)}.

Therefore, when $x>1/2$, the total running time is
$O^*(2^{(x+\mathrm{repl}(\ell',k'))n})$, and the space usage is
$O^*(2^{xn})$.
When $x\le 1/2$, the total running time is
$O^*\left(2^{(\max\{1-\mathrm{phi}(x),1/2\}
+\mathrm{repl}(\ell',k'))n}\right)$,
and the space usage is
$O^*(2^{(1-\mathrm{phi}(x))n})$.
Since $x=\ell'-k'$, these bounds are exactly
$O^*(2^{\tau_{\mathrm{rrb}}(\ell',k')n})$ time and
$O^*(2^{\sigma_{\mathrm{rrb}}(\ell',k')n})$ space, as claimed.
\end{proof}

In the remainder of this section, we mainly focus on the case
$\ell'-k'>1/2$. The case $\ell'-k'\le 1/2$ will be used later when deriving
the time-space tradeoff in \cref{sec:time-space-tradeoff}.
The authors of \cite{mucha_et_al:LIPIcs.ESA.2019.73} presented the
\emph{UnbalancedEqualSubsetSum} algorithm, which is particularly useful when
$\ell'$ is close to $0$ or $1$.

\begin{lemma}[{\cite[Theorem 3.3]{mucha_et_al:LIPIcs.ESA.2019.73}}] \label{lm:UnbalancedESS}
Let $\ell' = \ell / n$ be the minimum solution size ratio.
There is a Monte Carlo algorithm for ESS that runs in
$O^*(2^{\frac{n}{2}(H(\ell')+\ell')})$ time and uses
$O^*(2^{\frac{n}{4}(H(\ell')+\ell')})$
space.
\end{lemma}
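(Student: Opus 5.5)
The statement is \cite[Theorem~3.3]{mucha_et_al:LIPIcs.ESA.2019.73}. I would reprove it with a \emph{minimum-solution-based} reduction to a modular 4-list problem, enumerated by the Schroeppel--Shamir technique of \cref{lm:Enum-C-time-space}.

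The plan is as follows. Write $\ell=|A|+|B|$ for a minimum (disjoint) solution $(A,B)$. I would encode it as a vector $z\in\{-1,0,1\}^n$ with exactly $\ell$ nonzeros such that $\sum_i z_i s_i=0$. The number of such vectors is $\binom{n}{\ell}2^\ell=2^{(H(\ell')+\ell')n}$ up to polynomial factors.

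To split the search, I would randomly partition $[n]$ into two halves $L,R$. With inverse-polynomial probability, each half receives exactly $\ell/2$ of the support of $z$. Then $z=z_L+z_R$, where each part ranges over a family $\mathcal{Z}_L$ (respectively $\mathcal{Z}_R$) of size about $2^{\frac{n}{2}(H(\ell')+\ell')}$ (recomputed for $n/2$ coordinates and $\ell/2$ nonzeros). The problem is now to find $u\in\mathcal{Z}_L$ and $w\in\mathcal{Z}_R$ with $\langle u,s\rangle=-\langle w,s\rangle$. This is a plain meet-in-the-middle, which already gives the time bound, but its space is also $2^{\frac{n}{2}(H(\ell')+\ell')}$.

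To reach the square-root space, I would split each half again into quarters, giving four lists $\mathcal{Z}_1,\dots,\mathcal{Z}_4$ of size about $N^{1/4}$ each, where $N=2^{(H(\ell')+\ell')n}$. Again the support splits evenly with polynomial probability. I would then run the Schroeppel--Shamir merge: pick a random prime $p\approx N^{1/4}$ and a residue $r$, and for every residue class enumerate the pairs of $\mathcal{Z}_1\times\mathcal{Z}_2$ with sum $\equiv_p r$ and the pairs of $\mathcal{Z}_3\times\mathcal{Z}_4$ with sum $\equiv_p -r$. This uses the priority-queue or modular sorted-scan technique from Appendix~\ref{app:Modulo-Schroeppel-Shamir}, with space $O^*(N^{1/4})$. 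Matching the two streams exactly then checks for a zero-sum vector. Iterating over all $r$ costs $p\cdot N^{1/2}/p=N^{1/2}$ time in total.

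Correctness follows from the observation that the true $z$ is found when we reach $r=\langle z_1+z_2,s\rangle \bmod p$. Any vector $z\neq 0$ with zero sum yields a valid ESS solution by taking $A=\{i:z_i=1\}$ and $B=\{i:z_i=-1\}$, so there are no false positives.

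The main obstacle is controlling the total output size of the merged lists, i.e.\ showing it is $O^*(N^{1/2})$ summed over residues, which holds deterministically. There is also a subtlety in making the per-residue streaming truly $O^*(N^{1/4})$ space. I would handle the latter exactly as in the modular Schroeppel--Shamir of \cite{becker2011improved} combined with Appendix~\ref{app:Modulo-Schroeppel-Shamir}, treating each quarter list as explicitly stored.
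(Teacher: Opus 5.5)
Your skeleton matches the paper's. You use a random four-way split so that each quarter receives $\ell/4$ of the support, which happens with probability $1/\mathrm{poly}(n)$. You store four quarter lists of size $O^*(N^{1/4})$ explicitly, where $N=2^{(H(\ell')+\ell')n}$, and you finish with a Schroeppel--Shamir merge in $O^*(N^{1/2})$ time. You diverge only in the modular layer, and that is exactly where your space bound is not established.

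Fix a residue $r$. You must decide whether some pair in $L_{12}(r)=\{(u_1,u_2)\in\mathcal{Z}_1\times\mathcal{Z}_2 : \langle u_1+u_2,s\rangle\equiv_p r\}$ and some pair in $L_{34}(-r)$ have exactly opposite \emph{integer} values. The generator of \cref{alg:ModuloSmallSpaceGenerator} discards the integer values and emits these blocks ordered by residue only. Inside a block the values are therefore unsorted, and you must store or hash one of the two blocks.

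Only the \emph{average} block size is $N^{1/2}/p\approx N^{1/4}$. For a worst-case input, nothing in your argument bounds $\max_r|L_{12}(r)|$:
\begin{itemize}
\item Pairs with equal integer value fall into the same class for every choice of $p$.
\item For distinct values, the random-prime estimate (collision probability $O(n^2/p)$) controls only $\sum_r|L_{12}(r)|^2$. By Markov, this gives $\max_r|L_{12}(r)|\le O^*(N^{3/8})$, not $O^*(N^{1/4})$, even with no exact coincidences.
\end{itemize}
The list-size bound in \cite{becker2011improved} is an average-case, heuristic statement, so deferring to it does not close the gap. The quantity you single out as the main obstacle, the total output $\sum_r|L_{12}(r)|=|\mathcal{Z}_1||\mathcal{Z}_2|$, is immediate. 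The space used per residue is the real issue.

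The repair is to drop the prime altogether, as the paper does in \cref{alg:SmallSpaceUnbalancedESS}:
\begin{itemize}
\item Use the heap-based generator of \cref{lm:SmallSpaceGeneration} to stream the sums from $\mathcal{Z}_1\times\mathcal{Z}_2$ in ascending integer order.
\item Stream the sums from $\mathcal{Z}_3\times\mathcal{Z}_4$ in descending integer order.
\item Run a two-pointer scan for a zero total and return at the first hit.
\end{itemize}
Repeated values are then harmless. Working memory is deterministically the four quarter lists plus two heaps, i.e.\ $O^*(N^{1/4})$, and the time stays $O^*(N^{1/2})$. Alternatively, you could break ties inside each residue class by integer value. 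That also works, but the modulus then merely relabels a total order on the integers and buys nothing. Your remaining steps go through once the merge is done this way: the even split of the support, the correspondence between zero-sum $\{-1,0,1\}$-vectors and ESS solutions, and the absence of false positives.
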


The authors of \cite{mucha_et_al:LIPIcs.ESA.2019.73} actually give an algorithm
using $O^*(2^{\frac{n}{2}(H(\ell')+\ell')})$ space. This space usage can be
reduced to $O^*(2^{\frac{n}{4}(H(\ell')+\ell')})$ by applying the technique of
Schroeppel and Shamir~\cite{SchroeppelAndShamir1981}; we provide the details
in Appendix~\ref{app:SmallSpaceUnbalancedESS}.
We now restate and prove \cref{thm:MainResult-ExpSpace}.

\FastExpSpaceThm*

\begin{proof}
Let $\gamma_1$ be the value of $\ell'$ in the interval
$(1/\sqrt{2},1-\epsilon]$ satisfying
\begin{equation} \label{eq:gamma_1}
\frac{H(\ell')+\ell'}{2}
=
\tau_{\mathrm{rrb}}\left(
    \ell',
    \frac{\sqrt{2}\ell'-1}{2(\sqrt{2}-1)}
\right).
\end{equation}
We define $\alpha_1$ by
\begin{equation} \label{eq:alpha_1}
    \alpha_1
    =
    \gamma_1-\frac{\sqrt{2}\gamma_1-1}{2(\sqrt{2}-1)}.
\end{equation}
Numerically, these values satisfy
$0.79158<\gamma_1<0.79159$ and $0.64737<\alpha_1<0.64738$.

For $\ell'\in(0,1/2]\cup(\gamma_1,1]$, we apply
\cref{lm:UnbalancedESS}.
For $\ell'\in(1/2,\gamma_1]$, we apply
\cref{thm:FastExpSpaceComplexity} with
$k'=\max\{0,\ell'-\alpha_1\}$ for each value of $\ell'$.
Then the worst-case time exponent, namely the constant $c$ such that the
running time is $O^*(2^{cn})$, can be expressed as
\[
\max\left\{
\max_{\ell'\in(0,1/2]\cup(\gamma_1,1]}
    \frac{H(\ell')+\ell'}{2},
\quad
\max_{\ell'\in(1/2,\gamma_1]}
    \tau_{\mathrm{rrb}}\left(
        \ell',
        \max\{0,\ell'-\alpha_1\}
    \right)
\right\},
\]
and the worst-case space exponent, defined analogously, can be expressed as
\[
\max\left\{
\max_{\ell'\in(0,1/2]\cup(\gamma_1,1]}
    \frac{H(\ell')+\ell'}{4},
\quad
\max_{\ell'\in(1/2,\gamma_1]}
    \sigma_{\mathrm{rrb}}\left(
        \ell',
        \max\{0,\ell'-\alpha_1\}
    \right)
\right\}.
\]
Both are maximized at $\ell'=\gamma_1$.
At this point, the time exponent is upper-bounded by
$\frac{H(\gamma_1)+\gamma_1}{2}
=
\tau_{\mathrm{rrb}}\left(
    \gamma_1,
    \gamma_1-\alpha_1
\right)
\le 0.76502$,
and the space exponent is upper-bounded by
$\sigma_{\mathrm{rrb}}\left(
    \gamma_1,
    \gamma_1-\alpha_1
\right)
= \alpha_1 \le 0.64738$.
Therefore, the worst-case time complexity is
$O^*(2^{0.76502n})\le O^*(1.6994^n)$, and the worst-case space complexity is
$O^*(2^{0.64738n})\le O^*(1.5664^n)$.
\end{proof}

\section{Faster Polynomial-Space Algorithm} \label{sec:FasterPolySpaceAlgorithm}
For algorithms restricted to $\mathrm{poly}(n)$ space, we can employ the
low-space element distinctness algorithm. This algorithm was introduced in
\cite{FOCS2013-ElementDistinctness}, generalized in
\cite{Bansal2018-FastPolySpaceSubsetSum}, and later freed from non-standard
assumptions in \cite{chen2022truly,lyu2023time}.
\cref{lm:LowSpaceElementDistinct-ManySolutions} can be inferred from
\cite[Section~4.2, proof of Theorem~1.1]{chen2022truly}.
The same extracted lemma is also stated in
\cite[Theorem~12]{jin_et_al:LIPIcs.ICALP.2024.94}.

\begin{lemma}[\cite{chen2022truly}]
\label{lm:LowSpaceElementDistinct-ManySolutions}
Given random access to an array of $N$ nonnegative integers
$a_1,\ldots,a_N$, with $a_i \le \mathrm{poly}(N)$ for all $i\in[N]$,
suppose that the array contains at least one colliding pair, that is, a
pair $(i,j)$ with $i\neq j$ and $a_i=a_j$.
Then there is a Monte Carlo algorithm that reports such a pair using only
$\mathrm{polylog}(N)$ space and
\[
  O\left(
    \frac{N\sqrt{F_2}}{F_2-N}\,\mathrm{polylog}(N)
  \right)
\]
time, where
$F_2=\sum_{i=1}^N\sum_{j=1}^N \mathbf{1}[a_i=a_j]$
is the second frequency moment.
In particular, under the above assumption, $F_2\in[N+2,N^2]$.
\end{lemma}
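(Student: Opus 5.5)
The plan is to follow the random-walk collision-finding approach of Beame et al., in the form made unconditional in \cite{chen2022truly}. We only need to re-derive the time bound $\frac{N\sqrt{F_2}}{F_2-N}\,\mathrm{polylog}(N)$ from that analysis. Fix a hash function $h$ from the value range $[\mathrm{poly}(N)]$ to $[N]$, and define the map $g:[N]\to[N]$ by $g(i)=h(a_i)$. Starting from a random index $x_0$, iterate $x_{t+1}=g(x_t)$ and detect the first repetition with Floyd's (or Brent's) cycle-finding algorithm. This uses $O(1)$ words beyond the description of $h$. The two indices $i\neq j$ with $g(i)=g(j)$ at the entrance of the rho are recovered in time linear in the walk length.

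Each such pair is of one of two kinds. It is \emph{genuine} if $a_i=a_j$, and then we report it; since $i\neq j$ this is a colliding pair of the array. Otherwise it is a \emph{false} hash collision $h(a_i)=h(a_j)$ with $a_i\neq a_j$, and we discard it, pick a fresh $h$ and start point, and repeat. The cost of a trial is governed by the collision probability of $g$ on a random pair of indices:
\[
\Pr_{i,j}[g(i)=g(j)] \approx \frac{F_2}{N^2}+\frac{1}{N}\le \frac{2F_2}{N^2}.
\]
Here the first term comes from equal values and the second from hash collisions of distinct values. A birthday-type argument then shows that the rho has expected length $L=O(N/\sqrt{F_2})$, which is the time of one trial.

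The remaining step is to show that the first collision of a trial is genuine with probability $\Omega\bigl((F_2-N)/F_2\bigr)$. This matches the share of the collision mass contributed by ordered pairs $i\neq j$ with $a_i=a_j$, which is $(F_2-N)/N^2$ out of roughly $(F_2+N)/N^2$. Multiplying the $O\bigl(F_2/(F_2-N)\bigr)$ expected trials by the cost $L$ per trial gives
\[
O\!\left(\frac{N}{\sqrt{F_2}}\cdot\frac{F_2}{F_2-N}\right)=O\!\left(\frac{N\sqrt{F_2}}{F_2-N}\right),
\]
up to the polylog factors for evaluating $h$. As a sanity check, when $F_2=N+2$ (a single colliding pair) this recovers the known $\tilde O(N^{3/2})$ bound. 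The final remark $F_2\in[N+2,N^2]$ is immediate: the diagonal contributes exactly $N$, and one colliding pair adds at least the two ordered terms $(i,j)$ and $(j,i)$.

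The main obstacle is justifying this "proportional share" claim rigorously, under two difficulties.
\begin{itemize}
\item The first collision of a walk is biased toward heavy values and toward long in-trees of $g$, so it is not uniformly distributed over colliding pairs.
\item $h$ must be stored in $\mathrm{polylog}(N)$ space, so it cannot be truly random.
\end{itemize}
I would first try the route of \cite{chen2022truly}: take $h$ from a $\mathrm{polylog}(N)$-wise independent family, or a small-seed pseudorandom family that fools the relevant low-degree statistics. Then argue via the second moment that, with constant probability over $h$, the number of genuine colliding pairs reached within the first $L$ steps is a constant fraction of the expected number of collisions. I would treat this probabilistic analysis as the part to cite rather than redo, since the lemma is stated as extracted from \cite[Section~4.2]{chen2022truly} (see also \cite[Theorem~12]{jin_et_al:LIPIcs.ICALP.2024.94}).
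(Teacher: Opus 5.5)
Your proposal is correct and matches the paper in substance. The paper does not re-derive this bound; it imports it from the proof of Theorem~1.1 in \cite[Section~4.2]{chen2022truly} (also stated as \cite[Theorem~12]{jin_et_al:LIPIcs.ICALP.2024.94}), so your heuristic decomposition, $O(N/\sqrt{F_2})$ steps per walk times $O(F_2/(F_2-N))$ expected trials with the rigorous pseudorandom-hash analysis cited, relies on the same source. The one step the paper makes explicit that you omit is passing from the positive-integer setting of \cite{chen2022truly} to nonnegative entries by adding $1$ to every $a_i$, which preserves all collisions and keeps $a_i\le\mathrm{poly}(N)$; this shift also repairs your hash domain $[\mathrm{poly}(N)]$, which in the paper's notation excludes $0$.
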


Although the statement in \cite{chen2022truly} is phrased for positive
integers, the same bound applies to nonnegative integers by adding $1$ to
every array entry, which preserves all collisions.
We apply \cref{lm:LowSpaceElementDistinct-ManySolutions} to the $2^n$
subset sums of $S$ to obtain \cref{thm:(0.5+l')-time}.

\begin{theorem}
\label{thm:(0.5+l')-time}
Let $\ell'$ be the minimum solution ratio. Then there is a Monte Carlo
algorithm for ESS that runs in $O^*(2^{(\ell'+1/2)n})$ time using
polynomial space.
\end{theorem}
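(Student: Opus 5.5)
The plan is to apply \cref{lm:LowSpaceElementDistinct-ManySolutions} directly to the array of $N=2^n$ subset sums of $S$. Index subsets by $i\in\{0,1\}^n$ and set $a_i=\sum(X_i)$. Random access to $a_i$ takes polynomial time by summing the chosen elements. By \cref{lm:preprocess-inputs} we have $\sum(S)\le 2^{O(n)}$, so $a_i\le\mathrm{poly}(N)$ and the lemma's assumptions hold. A colliding pair $X\neq Y$ with $\sum(X)=\sum(Y)$ yields the ESS solution $(X\setminus Y,Y\setminus X)$, and this output is always verifiable. So no false positives occur, and on No instances we simply report NO after the time budget is spent.

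The main step is lower-bounding $F_2$ in terms of $\ell$. Fix a minimum solution $(A,B)$ with $|A|+|B|=\ell$, disjoint, and let $Z=S\setminus(A\cup B)$, so $|Z|=n-\ell$. For each $Z'\subseteq Z$, the sets $A\cup Z'$ and $B\cup Z'$ are distinct and have equal sums. These give $2^{n-\ell}$ distinct unordered colliding pairs, since $Z'$ is recovered from the pair as the intersection of its two sets. Each unordered pair contributes $2$ to $F_2-N$, so $F_2-N\ge 2^{n-\ell+1}$.

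Now bound the time expression. Since $F_2\le N^2$, we get $\sqrt{F_2}\le N$ only, which is too weak. Instead, use that $x\mapsto \sqrt{x}/(x-N)$ is decreasing for $x>N$, and consider two cases.

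If $F_2\ge 2N$, then $F_2-N\ge F_2/2$. The time is $O^*(N/\sqrt{F_2})\le O^*(N/\sqrt{N})=O^*(2^{n/2})$, which is within $O^*(2^{(\ell'+1/2)n})$.

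If $N<F_2<2N$, then $\sqrt{F_2}\le\sqrt{2N}$ and $F_2-N\ge 2^{n-\ell}$. The time is $O^*\bigl(N\sqrt{N}/2^{n-\ell}\bigr)=O^*(2^{n+n/2-n+\ell})=O^*(2^{(\ell'+1/2)n})$.

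The polylog factors are $\mathrm{poly}(n)$, and space is $\mathrm{polylog}(N)=\mathrm{poly}(n)$.

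Since $\ell$ is unknown, we try all $\ell\in[n]$ with the corresponding time budgets, or equivalently run with doubling budgets, which costs only polynomial overhead. The only delicate point is that the lemma's time bound is an expected or Monte Carlo bound with constant success probability, so we cut the run off at a constant multiple of the bound and repeat $\mathrm{poly}(n)$ times.
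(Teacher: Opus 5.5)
Your proposal is correct and follows the paper's proof. It uses random access to the $2^n$ subset sums, lower-bounds $F_2-N$ via the pairs $A\cup Z'$, $B\cup Z'$ for $Z'\subseteq Z$, and then invokes \cref{lm:LowSpaceElementDistinct-ManySolutions}. The only difference is cosmetic: the paper substitutes $F_2=N+2^{(1-\ell')n}$ directly into the decreasing function $\sqrt{x}/(x-N)$, whereas you split into the cases $F_2\ge 2N$ and $F_2<2N$, which yields the same $O^*(2^{(\ell'+1/2)n})$ bound.
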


\begin{proof}
We provide random access to the list $L$ of all subset sums of
$S=\{w_1,\ldots,w_n\}$ by computing each element on the fly. Specifically,
for each index $i\in[N]$, where $N=2^n$, let
\[
X_i = \{j\in[n] \mid \text{the } j\text{-th least significant bit of the binary representation of } i-1 \text{ is } 1\}.
\]
Then the $i$-th element of $L$ is computed as
$L_i = \sum_{j\in X_i} w_j$.
This gives random access to $L$ using $\mathrm{polylog}(N)$ time and space.
We apply \cref{lm:LowSpaceElementDistinct-ManySolutions} to $L$. A
colliding pair in $L$ corresponds exactly to two distinct
subsets of $S$ with equal sum, and hence yields an ESS solution.

It remains to bound the running time. Let $A,B\subseteq S$ be a minimum
solution, so that $|A\cup B|=\ell'n$. Let
$Z=S\setminus(A\cup B)$; then $|Z|=(1-\ell')n$. For every subset
$Z'\subseteq Z$, the two subsets $A\cup Z'$ and $B\cup Z'$ have equal
sum. Therefore, these pairs contribute at least
$2^{|Z|}=2^{(1-\ell')n}$ nontrivial terms to $F_2$, and hence
$F_2 \ge N + 2^{(1-\ell')n}$.
Since the function $\sqrt{x}/(x-N)$ is monotonically decreasing for
$x\in[N+2,N^2]$, we can upper-bound the running time by setting
$F_2=N+2^{(1-\ell')n}$. Thus,
\[
  O\left(\frac{N\sqrt{F_2}}{F_2-N}\,\mathrm{polylog}(N)\right)
  \le
  O\left(
    \frac{2^n\sqrt{2^n+2^{(1-\ell')n}}}{2^{(1-\ell')n}}
    \,\mathrm{poly}(n)
  \right)
  \le
  O^*(2^{(\ell'+1/2)n}).
\]
Hence, the algorithm runs in $O^*(2^{(\ell'+1/2)n})$ time and uses
polynomial space.
\end{proof}

A fast polynomial-space algorithm for the Subset-Sum problem was given in
\cite{Bansal2018-FastPolySpaceSubsetSum}, and the non-standard assumptions
used in its analysis were later removed in \cite{chen2022truly}.

\begin{lemma}[{\cite{Bansal2018-FastPolySpaceSubsetSum,chen2022truly}}]
\label{lm:Bansal-Fast-PolySpace-SubsetSum}
There is a Monte Carlo algorithm for Subset Sum that runs in
$O^*(2^{0.86n})$ time using polynomial space.
\end{lemma}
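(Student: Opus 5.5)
This lemma is imported from \cite{Bansal2018-FastPolySpaceSubsetSum,chen2022truly} rather than proved here, so the plan is to reconstruct the argument at the level of its main components. The overall structure is a case distinction on the additive structure of the input, followed by a representation-technique reduction to a collision-finding problem. That collision problem is then solved in polynomial space with the low-space Element Distinctness machinery of \cref{lm:LowSpaceElementDistinct-ManySolutions}. The numerical exponent $0.86$ comes from balancing the costs of the two cases.

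First, fix a hypothetical solution $A$ with $|A|=\beta n$; we may assume $\beta$ is known by trying all $n+1$ values. Split $S$ randomly into two halves $S_1,S_2$, and look for pairs $(X_1,X_2)$ with $X_i\subseteq S_i$ and $\sum(X_1)+\sum(X_2)=t$. Here, as in Howgrave-Graham and Joux, we allow many \emph{representations} of $A$ as a sum of two overlapping subsets of prescribed sizes. We then pick a random prime $p$ of size about the number of representations, and a random residue, so that in expectation about one representation survives the modular constraint. Next, express the search as a collision problem between two implicitly given lists: subsets $X_1$ with $\sum(X_1)\equiv_p r$, and subsets $X_2$ with $t-\sum(X_2)\equiv_p r$. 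Each list must be accessible by index in polynomial space and $\mathrm{poly}(n)$ time per query. This is done by enumerating the subsets in a structured way, via the Schroeppel--Shamir style decomposition modulo $p$, or by a random bijection into the residue class.

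The central difficulty is that low-space collision finding only pays off when the relevant lists do not contain too many spurious collisions, i.e., when the second frequency moment $F_2$ is small. We therefore distinguish two cases. \emph{Case 1 (unstructured instances):} the number of distinct subset sums of the relevant sizes is close to the number of subsets. Then $F_2$ of the lists is close to linear, and \cref{lm:LowSpaceElementDistinct-ManySolutions} finds the planted collision in time roughly $N^{3/2}/\sqrt{\text{\#solutions}}$, which is the dominant term. \emph{Case 2 (structured instances):} many subsets share sums. Then the number $|\{\sum(X)\}|$ of distinct sums is exponentially smaller than $2^n$. A polynomial-space algorithm exploiting this, e.g., hashing sums modulo a prime of size about the number of distinct sums combined with a Lokshtanov--Nederlof type counting/DP over residues, runs faster than $2^{n/2}\cdot\text{poly}$ per residue class.

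Optimizing the threshold between the two cases yields $2^{0.86n}$. The main obstacle is the analysis of Case 1 without idealized hash functions. Bansal et al.\ assumed access to random functions for Floyd-style cycle finding. The fix, due to \cite{chen2022truly}, is to replace that subroutine by the truly-low-space Element Distinctness algorithm, whose guarantee depends only on $F_2$ as in \cref{lm:LowSpaceElementDistinct-ManySolutions}. I would first check that the $F_2$ bounds from the case distinction are exactly what that lemma requires, with all values polynomially bounded after reducing modulo $p$.
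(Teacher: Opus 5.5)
In the paper this lemma has no proof: it is imported. \cite{Bansal2018-FastPolySpaceSubsetSum} gives the $O^*(2^{0.86n})$ polynomial-space algorithm assuming read-only random access to random bits, and \cite{chen2022truly} removes that assumption, so the citation is the whole argument. The reconstruction you add would not go through, and it gets the mechanism wrong at its two load-bearing steps.

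The first failing step is list access in Case~1. You need index-based access, in polynomial space and $\mathrm{poly}(n)$ time per query, to the residue class $\{X_1 : \sum(X_1)\equiv_p r\}$, with $p$ of the order of the number of representations. Nothing provides this:
\begin{itemize}
\item the Schroeppel--Shamir enumeration behind \cref{lm:Enum-C-time-space} needs $2^{n/4}$ space and produces the class only in sorted order;
\item a ``random bijection into the residue class'' cannot be evaluated without essentially enumerating the class;
\item the only random-access structure for such a class used in this paper, \cref{lm:Fast-Subset-Sum-Oracle}, costs $O(np)$ space, which is exponential here.
\end{itemize}
In polynomial space the natural option is to index the lists by the subsets themselves and compute sums on the fly, as in the proof of \cref{thm:(0.5+l')-time}. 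Modular filtering then cannot be used to shrink the lists.

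The second and more fundamental problem is that \cref{lm:LowSpaceElementDistinct-ManySolutions} is the wrong primitive for Subset Sum. It reports \emph{some} colliding pair, with no control over which one. Its bound $N\sqrt{F_2}/(F_2-N)$ is the time to find \emph{any} collision; with $K\le N$ colliding pairs it is about $N^{3/2}/K$, not $N^{3/2}/\sqrt{K}$. For ESS every collision is a solution, which is why the paper can use the lemma in \cref{thm:(0.5+l')-time}. For Subset Sum, once the two lists are concatenated, within-list collisions $\sum(X_1)=\sum(X_1')$ are also collisions but are useless. Nothing in the lemma stops it from returning such a pair every time, even when $F_2=O(N)$.

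Handling exactly this issue is the point of the List Disjointness algorithm in \cite{Bansal2018-FastPolySpaceSubsetSum}, whose running time depends on how often values repeat within each list. Likewise, \cite{chen2022truly} gets the Subset-Sum result by showing that its pseudorandom hash family fools that particular cycle-finding procedure, not by substituting the Element-Distinctness lemma.

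Finally, your Case~2 inference from ``many subsets share sums'' to ``few distinct sums'' is not automatic. A single bin of size $\beta$ adds about $\beta^2$ to $F_2$ but removes only about $\beta$ distinct sums. You therefore need a quantitative structural statement linking the Case~1 parameter to the Case~2 parameter. The constant $0.86$ can only come out of that explicit balance, and your sketch never writes it down.
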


We use the polynomial-space algorithm for the Subset Sum problem as a
subroutine to prove \cref{thm:Use-SubsetSum-As-Routine}.

\begin{theorem}
\label{thm:Use-SubsetSum-As-Routine}
Let $\ell'$ be the minimum solution ratio. Then there is a Monte Carlo
algorithm for ESS that runs in
$O^*(2^{(H(\ell')+0.86\ell')n})$ time using polynomial space.
\end{theorem}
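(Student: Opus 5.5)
We plan to guess the support of a minimum solution and then reduce ESS on that support to a single Subset-Sum instance. Let $(A,B)$ be a minimum solution with $|A\cup B|=\ell=\ell' n$. Since $\ell$ is known in advance, we enumerate all $\binom{n}{\ell}=O^*(2^{H(\ell')n})$ subsets $U\subseteq S$ of size $\ell$, one at a time and in polynomial space. Some $U$ equals $A\cup B$, and for that $U$ the instance restricted to $U$ has a solution that uses every element of $U$. It therefore suffices, for each $U$, to decide in $O^*(2^{0.86\ell})$ time and polynomial space whether $U$ can be split into two parts of equal sum. The total time is then $O^*(2^{H(\ell')n}\cdot 2^{0.86\ell' n})$, as claimed.

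For a fixed $U$, a partition $U=A\sqcup B$ with $\sum(A)=\sum(B)$ is exactly a subset $A\subseteq U$ with $\sum(A)=\sum(U)/2$. If $\sum(U)$ is odd we skip $U$. Otherwise we run the Monte Carlo Subset-Sum algorithm of \cref{lm:Bansal-Fast-PolySpace-SubsetSum} on the multiset $U$ with target $t=\sum(U)/2$, using $|U|=\ell$ items. This takes $O^*(2^{0.86\ell})$ time and polynomial space. Every element of $U$ is positive, so $t>0$, and any returned $A$ is a nonempty proper subset of $U$. Hence $(A,U\setminus A)$ is two distinct disjoint subsets of $S$ with equal sums, a genuine ESS solution. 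The algorithm never produces false positives; it outputs only after verifying the sums.

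For correctness on yes-instances, the choice $U=A_{\min}\cup B_{\min}$ makes $t=\sum(A_{\min})$ achievable, so Subset-Sum succeeds with constant probability on that iteration. We do not need every element of $U$ to be used essentially, nor do we need minimality for this step. We need only that some subset reaches $t$, and any such subset yields a valid solution. Repetition boosts the success probability as usual. Iterating over $U$ in lexicographic order needs only polynomial space.

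The main point to check is that the Subset-Sum subroutine applies to our instances: its guarantee must hold in terms of the number of items $\ell$, with the numbers bounded by $2^{O(n)}$ rather than $2^{O(\ell)}$. Since $\ell\ge\epsilon n$ in the relevant regime, or more simply since that algorithm's $\mathrm{poly}$ dependence is on the bit length, the values are polynomial-size in bits and this is absorbed in $O^*$. Beyond that check, the bound $2^{(H(\ell')+0.86\ell')n}$ follows directly.
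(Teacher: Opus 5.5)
Your proposal is correct and follows the paper's proof: enumerate the $\binom{n}{\ell}$ candidate supports $U$ one at a time in polynomial space, skip $U$ when $\sum(U)$ is odd, and otherwise call the $O^*(2^{0.86\ell})$ polynomial-space Subset-Sum algorithm with target $\sum(U)/2$. Your additional checks are sound details that the paper leaves implicit: a returned subset is a nonempty proper subset of $U$ and so gives a genuine solution, and the $2^{O(n)}$ bit length is absorbed into $O^*$.
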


\begin{proof}
Let $\ell=\ell'n$ be the minimum solution size. We iterate over all subsets
$X\subseteq S$ of size $\ell$, keeping only the current subset in memory.
For each such subset $X$, if $\sum(X)$ is odd, then $X$ cannot be
partitioned into two subsets of equal sum, so we skip this case. Otherwise,
we run the polynomial-space Subset-Sum algorithm of
\cref{lm:Bansal-Fast-PolySpace-SubsetSum} on the instance $X$ with target
$\sum(X)/2$.

Since there exists a minimum ESS solution $A,B\subseteq S$ such that
$|A|+|B|=\ell$ and $A\cap B=\emptyset$, the iteration eventually considers
$X=A\cup B$, and the corresponding Subset-Sum instance with target
$\sum(X)/2$ has a solution. The running time is therefore
\[
  O^*\left(\binom{n}{\ell}2^{0.86\ell}\right)
  \le
  O^*(2^{(H(\ell')+0.86\ell')n}).
\]
The space usage remains polynomial, since we iterate over the subsets $X$
one by one while storing only the current subset, and each call to the
Subset-Sum algorithm uses polynomial space.
\end{proof}

We combine \cref{thm:(0.5+l')-time,thm:Use-SubsetSum-As-Routine} to prove
\cref{thm:MainResult-PolySpace}, which we restate here.

\FastPolySpaceThm*

\begin{proof}
Let $\gamma_2$ be the larger value of $\ell'$ at which the functions
$\ell'+1/2$ and $H(\ell')+0.86\ell'$ intersect, namely the solution to
\begin{equation}
\label{eq:gamma_2}
    \ell' + \frac{1}{2}
    =
    H(\ell') + 0.86\ell'
    \quad\text{for } \frac{1}{2}<\ell'<1.
\end{equation}
Numerically, $0.8465<\gamma_2<0.84651$.

For $\ell'\in(0,\gamma_2]$, we apply \cref{thm:(0.5+l')-time}.
For $\ell'\in(\gamma_2,1]$, we apply
\cref{thm:Use-SubsetSum-As-Routine}.
Since \cref{thm:(0.5+l')-time,thm:Use-SubsetSum-As-Routine} both solve
ESS in polynomial space, it remains to bound the running time.
Then the worst-case time exponent, namely the constant $c$ such that the running time is $O^*(2^{cn})$, can be expressed as
\[
\max\left\{
\max_{\ell'\in(0,\gamma_2]}
    \left\{\ell'+\frac{1}{2}\right\},
\quad
\max_{\ell'\in(\gamma_2,1]}
    \left\{H(\ell')+0.86\ell'\right\}
\right\}.
\]
This expression is maximized at $\ell'=\gamma_2$, and its value is at most
$\gamma_2+1/2=H(\gamma_2)+0.86\gamma_2\le 1.34651$.
Thus, the worst-case running time is
$O^*(2^{1.34651n})\le O^*(2.5430^n)$.
\end{proof}

\section{Time-Space Tradeoffs}
\label{sec:time-space-tradeoff}

We consider the setting where the space usage is bounded by
$O^*(2^{\alpha n})$ for a constant $\alpha$, and study the resulting time
complexity.
In \cref{sec:CaseDistinctionOverview}, we give an overview of the
tradeoffs by summarizing, in a table, which lemma or theorem is used for
each range of $\alpha$ and $\ell'$.
In \cref{sec:5-2,sec:5-3,sec:5-4,sec:5-5}, we explain the theorems used
in \cref{sec:CaseDistinctionOverview}.

\subsection{Overview of the tradeoffs} \label{sec:CaseDistinctionOverview}

\begin{table}[htbp]
    \centering
    \begin{tabular}{|C{0.21\textwidth}|C{0.6\textwidth}|C{0.09\textwidth}|}
        \hline
        Range of $\alpha$
        & Range of $\ell'$
        & Worst $\ell'$ \\
        \hline

        $1/2 < \alpha \le \alpha_1$
        &
        \begin{tabularx}{\linewidth}{@{}Y|Y@{}}
            $\ell' \in (0, 1/2] \cup (\ell'_{1}(\alpha), 1]$
            &
            $\ell' \in (1/2, \ell'_{1}(\alpha)]$
            \\
            \cref{lm:UnbalancedESS}
            &
            \cref{thm:RandomReplace-0.5-0.76-space}
        \end{tabularx}
        & $\ell'_{1}(\alpha)$ \\
        \hline

        $\alpha_2 < \alpha \le 1/2$
        &
        \begin{tabularx}{\linewidth}{@{}Y|Y@{}}
            $\ell' \in (0, 1/2] \cup (\ell'_{2}(\alpha), 1]$
            &
            $\ell' \in (1/2, \ell'_{2}(\alpha)]$
            \\
            \cref{lm:UnbalancedESS}
            &
            \cref{thm:RandomReplace-0.45-0.5-space}
        \end{tabularx}
        & $\ell'_{2}(\alpha)$ \\
        \hline

        $2-\log 3 < \alpha \le \alpha_2$
        &
        \begin{tabularx}{\linewidth}{@{}Y|Y@{}}
            $\ell' \in (0, 1/2] \cup (\ell'_{3}(\alpha), 1]$
            &
            $\ell' \in (1/2, \ell'_{3}(\alpha)]$
            \\
            \cref{lm:UnbalancedESS}
            &
            \cref{thm:SpaceBoundedBalancedESS-TimeComplexity}
        \end{tabularx}
        & $\ell'_{3}(\alpha)$ \\
        \hline

        $\frac{2}{3}(\log 3 - 1) < \alpha \le 2-\log 3$
        &
        \begin{tabularx}{\linewidth}{@{}Y@{}}
            $\ell' \in (0, 1]$ \\
            \cref{thm:SpaceBoundedUnbalancedESS}
        \end{tabularx}
        & $2/3$ \\
        \hline

        $\frac{1}{4} < \alpha \le \frac{2}{3}(\log 3 - 1)$
        &
        \begin{tabularx}{\linewidth}{@{}Y|Y@{}}
            $\ell' \in (0,1/2] \cup (\ell'_{4}(\alpha),1]$
            &
            $\ell' \in (1/2,\ell'_{4}(\alpha)]$
            \\
            \cref{thm:SpaceBoundedUnbalancedESS}
            &
            \cref{thm:SpaceBoundedBalancedESS-TimeComplexity}
        \end{tabularx}
        & $\ell'_{4}(\alpha)$ \\
        \hline

        $\alpha_3 < \alpha \le 1/4$
        &
        \begin{tabularx}{\linewidth}{@{}Y|Y|Y@{}}
            $\ell' \in (0,1-2\alpha]$
            &
            $\ell' \in (1-2\alpha,\ell'_{5}(\alpha)]$
            &
            $\ell' \in (\ell'_{5}(\alpha),1]$
            \\
            \cref{thm:(0.5+l')-time}
            &
            \cref{thm:SpaceBoundedBalancedESS-TimeComplexity}
            &
            \cref{thm:SpaceBoundedUnbalancedESS}
        \end{tabularx}
        & $\ell'_{5}(\alpha)$ \\
        \hline

        $\alpha_4 < \alpha \le \alpha_3$
        &
        \begin{tabularx}{\linewidth}{@{}Y|Y|Y@{}}
            $\ell' \in (0,1-2\alpha]$
            &
            $\ell' \in (1-2\alpha,\ell'_{6}(\alpha)]$
            &
            $\ell' \in (\ell'_{6}(\alpha),1]$
            \\
            \cref{thm:(0.5+l')-time}
            &
            \cref{thm:SpaceBoundedBalancedESS-TimeComplexity}
            &
            \cref{thm:SpaceBounded-UseSubset-ESS}
        \end{tabularx}
        & $\ell'_{6}(\alpha)$ \\
        \hline

        $0 \le \alpha \le \alpha_4$
        &
        \begin{tabularx}{\linewidth}{@{}Y|Y@{}}
            $\ell' \in (0,\ell'_{7}(\alpha)]$
            &
            $\ell' \in (\ell'_{7}(\alpha), 1]$
            \\
            \cref{thm:(0.5+l')-time}
            &
            \cref{thm:SpaceBounded-UseSubset-ESS}
        \end{tabularx}
        & $\ell'_{7}(\alpha)$ \\
        \hline
        
    \end{tabular}
    \caption{Case distinction by $\alpha$ and $\ell'$.}
    \label{tab:case-distinction}
\end{table}

We first give an overview of the case distinction in
\cref{tab:case-distinction}.
For each range of the space-bound parameter $\alpha$, the table indicates
which lemma or theorem is applied for each range of the minimum solution
ratio $\ell'$, together with the bottleneck value of $\ell'$ in that case.
By the time-complexity exponent function of a lemma or theorem, we mean
the function $f(\ell',\alpha)$ such that, under the space bound
$O^*(2^{\alpha n})$, the claimed time complexity is
$O^*(2^{f(\ell',\alpha)n})$.
The space complexity of \cref{lm:UnbalancedESS} is bounded by
$O^*(3^{n/4})$.
Thus, when $\frac{\log 3}{4}<2-\log 3\le\alpha$, we have
$O^*(3^{n/4})\le O^*(2^{\alpha n})$, and hence
\cref{lm:UnbalancedESS} satisfies the space bound $O^*(2^{\alpha n})$.
The other theorems used in \cref{tab:case-distinction} explicitly state
their time complexities under the space bound $O^*(2^{\alpha n})$.

Here, $\alpha_1$ is defined by \eqref{eq:alpha_1} and satisfies
$0.64737<\alpha_1<0.64738$.

The value $\alpha_2$ is defined as the value of
$\alpha\in[0.45,1/2]$ for which there exists
$\ell'\in(1/2,0.8)$ such that the three time-complexity exponent
functions from
\cref{lm:UnbalancedESS,thm:RandomReplace-0.45-0.5-space,thm:SpaceBoundedBalancedESS-TimeComplexity}
all take the same value at $\ell'$.
Numerically, we have $0.49211<\alpha_2<0.49212$.

Similarly, $\alpha_3$ is defined as the value of $\alpha<1/4$ for which
there exists $\ell'>1/2$ such that the three time-complexity exponent
functions from
\cref{thm:SpaceBoundedBalancedESS-TimeComplexity,thm:SpaceBoundedUnbalancedESS,thm:SpaceBounded-UseSubset-ESS}
all take the same value at $\ell'$.
Numerically, we have $0.18686<\alpha_3<0.18687$.

Finally, $\alpha_4$ is defined as the value of $\alpha<\alpha_3$ for which
there exists $\ell'$ such that the three time-complexity exponent
functions from
\cref{thm:SpaceBoundedBalancedESS-TimeComplexity,thm:(0.5+l')-time,thm:SpaceBounded-UseSubset-ESS}
all take the same value at $\ell'$.
Numerically, we have $0.12285<\alpha_4<0.12286$.

For each $i\in[7]$, the value $\ell'_i(\alpha)$ in
\cref{tab:case-distinction} denotes the value of $\ell'>1/2$ at which the
two corresponding time-complexity exponent functions intersect.
The corresponding pairs are as follows:
\begin{itemize}
    \item $\ell'_1(\alpha)$: \cref{lm:UnbalancedESS} and
    \cref{thm:RandomReplace-0.5-0.76-space}, for
    $1/2<\alpha\le\alpha_1$.
    \item $\ell'_2(\alpha)$: \cref{lm:UnbalancedESS} and
    \cref{thm:RandomReplace-0.45-0.5-space}, for
    $\alpha_2<\alpha\le1/2$.
    \item $\ell'_3(\alpha)$: \cref{lm:UnbalancedESS} and
    \cref{thm:SpaceBoundedBalancedESS-TimeComplexity}, for
    $2-\log 3<\alpha\le\alpha_2$.
    \item $\ell'_4(\alpha)$:
    \cref{thm:SpaceBoundedBalancedESS-TimeComplexity} and
    \cref{thm:SpaceBoundedUnbalancedESS}, for
    $1/4<\alpha\le\frac{2}{3}(\log 3-1)$.
    \item $\ell'_5(\alpha)$:
    \cref{thm:SpaceBoundedBalancedESS-TimeComplexity} and
    \cref{thm:SpaceBoundedUnbalancedESS}, for
    $\alpha_3<\alpha\le1/4$.
    \item $\ell'_6(\alpha)$:
    \cref{thm:SpaceBoundedBalancedESS-TimeComplexity} and
    \cref{thm:SpaceBounded-UseSubset-ESS}, for
    $\alpha_4<\alpha\le\alpha_3$.
    \item $\ell'_7(\alpha)$: \cref{thm:(0.5+l')-time} and
    \cref{thm:SpaceBounded-UseSubset-ESS}, for
    $0\le\alpha\le\alpha_4$.
\end{itemize}

Setting $\epsilon=1/100$, we have
$\ell'_i(\alpha)<1-\epsilon$ for all $i\in \{1,3,4,5,6\}$ and for every relevant
value of $\alpha$.
Therefore, we can safely apply
\cref{thm:RandomReplace-0.5-0.76-space,thm:SpaceBoundedBalancedESS-TimeComplexity}
as in \cref{tab:case-distinction}.
We also have $1/2<\ell'_2(\alpha)\le\gamma_1<0.8$, where $\gamma_1$ is
defined in \eqref{eq:gamma_1}.
Thus, we can apply \cref{thm:RandomReplace-0.45-0.5-space} as in
\cref{tab:case-distinction}.

Once we have determined which lemma or theorem to apply for each pair of
$\alpha$ and $\ell'$, we can compute, for each $\alpha$, the value of
$\ell'$ that maximizes the running time.
This is the bottleneck value of $\ell'$ shown in the last column of
\cref{tab:case-distinction}.
Using these values, we can express the worst-case running time under the
space bound $O^*(2^{\alpha n})$ as
$O^*(2^{\mathcal{T}(\alpha)n})$, where $\mathcal{T}(\alpha)$ is defined as
follows:
\begin{equation} \label{eq:mathcalT}
\mathcal{T}(\alpha)=
\begin{cases}
\frac{H(\gamma_1)+\gamma_1}{2} = \tau_{\mathrm{rrb}}(\gamma_1, \gamma_1 - \alpha_1) 
& \text{if } \alpha_1 < \alpha,\\[1mm]

\frac{H(\ell'_1(\alpha))+\ell'_1(\alpha)}{2}
=
\tau_{\mathrm{rrb}}(\ell'_1(\alpha),\ell'_1(\alpha)-\alpha)
& \text{if } 1/2<\alpha\le\alpha_1,\\[1mm]

\frac{H(\ell'_2(\alpha))+\ell'_2(\alpha)}{2}
=
\tau_{\mathrm{rrb}}\left(
    \ell'_2(\alpha),
    \ell'_2(\alpha)-\mathrm{phi}^{-1}(1-\alpha)
\right)
& \text{if } \alpha_2<\alpha\le1/2,\\[1mm]

\frac{H(\ell'_3(\alpha))+\ell'_3(\alpha)}{2}
=
\tau_{\mathrm{sbb}}(\ell'_3(\alpha),\alpha)
& \text{if } 2-\log 3<\alpha\le\alpha_2,\\[1mm]

\tau_{\mathrm{sbu}}\left(\frac{2}{3},\alpha\right)
& \text{if } \frac{2}{3}(\log 3-1)<\alpha\le 2-\log 3,\\[1mm]

\tau_{\mathrm{sbu}}(\ell'_4(\alpha),\alpha)
=
\tau_{\mathrm{sbb}}(\ell'_4(\alpha),\alpha)
& \text{if } 1/4<\alpha\le\frac{2}{3}(\log 3-1),\\[1mm]

\tau_{\mathrm{sbu}}(\ell'_5(\alpha),\alpha)
=
\tau_{\mathrm{sbb}}(\ell'_5(\alpha),\alpha)
& \text{if } \alpha_3<\alpha\le1/4,\\[1mm]

\tau_{\mathrm{sbb}}(\ell'_6(\alpha),\alpha)
=
H(\ell'_6(\alpha))+\tau_{\mathrm{subset}}(\alpha)\ell'_6(\alpha)
& \text{if } \alpha_4<\alpha\le\alpha_3,\\[1mm]

\ell'_7(\alpha)+\frac{1}{2}
=
H(\ell'_7(\alpha))+\tau_{\mathrm{subset}}(\alpha)\ell'_7(\alpha)
& \text{if } 0\le\alpha\le\alpha_4.
\end{cases}
\end{equation}
Here, $\gamma_1$, $\tau_{\mathrm{rrb}}$, $\tau_{\mathrm{sbu}}$,
$\tau_{\mathrm{sbb}}$, and $\tau_{\mathrm{subset}}$ are defined in
\eqref{eq:gamma_1}, \eqref{eq:tau-rrb}, \eqref{eq:tau_sbu},
\eqref{eq:tau_sbb}, and \eqref{eq:tau-subset}, respectively.
The function $\mathrm{phi}^{-1}$ is defined in
\cref{thm:RandomReplace-0.45-0.5-space}.
For the case $\alpha>\alpha_1$, we use
\cref{thm:MainResult-ExpSpace}, whose space complexity is bounded by
$O^*(2^{\alpha_1 n})\le O^*(2^{\alpha n})$.
This yields \cref{thm:MainResult-TimeSpaceTradeoff}, restated below.
\TimeSpaceTradeoffThm*

\subsection{Space-Bounded Algorithms via Theorem~\ref{thm:FastExpSpaceComplexity}} \label{sec:5-2}

We use \cref{thm:FastExpSpaceComplexity} to obtain algorithms with
smaller space bounds.

\begin{theorem} \label{thm:RandomReplace-0.5-0.76-space}
Assume that the minimum solution ratio satisfies
$\ell'\in(1/2,1-\epsilon]$ for some constant $\epsilon>0$.
For every constant $\alpha$ satisfying $1/2<\alpha\le\alpha_1$, where
$\alpha_1$ is defined in \eqref{eq:alpha_1}, there is a Monte Carlo
algorithm for ESS that uses $O^*(2^{\alpha n})$ space and runs in
$O^*(2^{\tau_{\mathrm{rrb}}(\ell',\max\{0,\ell'-\alpha\})n})$ time.
\end{theorem}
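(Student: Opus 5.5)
The plan is to apply \cref{thm:FastExpSpaceComplexity} directly with the replacement parameter $k'=\max\{0,\ell'-\alpha\}$, and check that its hypotheses hold and that its space bound is at most $O^*(2^{\alpha n})$. Set $x=\ell'-k'$. We split into two cases according to whether $\ell'\le\alpha$ or $\ell'>\alpha$.

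\emph{Case $\ell'\le\alpha$.} Here $k'=0$ and $x=\ell'>1/2$. Since $k=0$, we have $\mathrm{repl}(\ell',0)=H(0)-\ell'H(0)=0$. So \cref{thm:FastExpSpaceComplexity} (which is then just the balanced algorithm of \cite{mucha_et_al:LIPIcs.ESA.2019.73}) gives space $O^*(2^{\ell' n})\le O^*(2^{\alpha n})$ and time $O^*(2^{\tau_{\mathrm{rrb}}(\ell',0)n})$, as claimed.

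\emph{Case $\ell'>\alpha$.} Here $k'=\ell'-\alpha$ and $x=\alpha>1/2$, so we are in the first branch of \eqref{eq:tau-rrb} and \eqref{eq:sigma-rrb}. We must verify $0\le k'\le\ell'/2$. The lower bound is clear. The upper bound is equivalent to $\ell'\le 2\alpha$, which holds because $\ell'\le 1<2\alpha$ when $\alpha>1/2$. The hypothesis $\epsilon n\le\ell\le(1-\epsilon)n$ holds since $\ell'\in(1/2,1-\epsilon]$. The space is $O^*(2^{\sigma_{\mathrm{rrb}}n})=O^*(2^{xn})=O^*(2^{\alpha n})$, and the time is $O^*(2^{\tau_{\mathrm{rrb}}(\ell',\ell'-\alpha)n})$.

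The one delicate point, and the main thing to check, is the floor in $k=\lfloor k'n\rfloor$. The actual value $x_{\text{actual}}=(\ell-k)/n$ can exceed $\alpha$ by up to $1/n$. This happens exactly near the boundary, where $\ell-k>n/2$ must still hold for the first branch. Two observations handle it:
\begin{itemize}
\item Since $x\ge\alpha>1/2$ is a constant bounded away from $1/2$, rounding never moves us into the second branch.
\item The space $2^{(\ell-k)}\le 2^{\alpha n+1}$ differs from the target only by a constant factor, which is absorbed in $O^*(\cdot)$. The proof of \cref{thm:FastExpSpaceComplexity} already controls the corresponding rounding in $\mathrm{repl}$ via \cref{fact:H(x+1/n)-H(x)}.
\end{itemize}
Note that $\alpha\le\alpha_1$ is not needed for correctness; it only marks the range in which this theorem is used in \cref{tab:case-distinction}.
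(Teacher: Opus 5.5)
Your proposal is correct and matches the paper's proof: set $k'=\max\{0,\ell'-\alpha\}$, check $0\le k'\le\ell'/2$ using $\alpha>1/2$ and $\ell'<1$, and note that $\ell'-k'=\min\{\ell',\alpha\}>1/2$, so the space exponent is $\sigma_{\mathrm{rrb}}(\ell',k')=\min\{\ell',\alpha\}\le\alpha$. Your extra remarks on the floor in $k=\lfloor k'n\rfloor$ are harmless, since $\ell-k\ge(\ell'-k')n$ means rounding only keeps you in the first branch, but the paper already absorbs this rounding inside the proof of \cref{thm:FastExpSpaceComplexity}.
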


\begin{proof}
Set $k'=\max\{0,\ell'-\alpha\}$.
Then $k'\ge0$.
Moreover, since $\alpha>1/2$ and $\ell'<1$, we have $2k' \le \ell'$, and we can apply
\cref{thm:FastExpSpaceComplexity}.
Since $\ell'-k'=\min\{\ell',\alpha\} > 1/2$, the space exponent is
\[
\sigma_{\mathrm{rrb}}(\ell',k')
=
\ell'-k'
=
\min\{\ell',\alpha\}
\le
\alpha.
\]
Therefore, the space complexity is
$O^*(2^{\sigma_{\mathrm{rrb}}(\ell',k')n})\le O^*(2^{\alpha n})$.
The claimed running time follows directly from
\cref{thm:FastExpSpaceComplexity} with the above choice of $k'$.
\end{proof}

\begin{theorem} \label{thm:RandomReplace-0.45-0.5-space}
Assume that the minimum solution ratio satisfies
$\ell'\in(1/2,0.8)$.
For every constant $\alpha$ satisfying $0.45\le \alpha\le 1/2$, there is a
Monte Carlo algorithm for ESS that uses $O^*(2^{\alpha n})$ space and runs in
$O^*(2^{\tau_{\mathrm{rrb}}(\ell',
\ell'-\mathrm{phi}^{-1}(1-\alpha))n})$ time.
Here, $\mathrm{phi}^{-1}$ denotes the inverse of the restriction of
$\mathrm{phi}$ defined in \eqref{eq:phi(x)} to the interval $[0.4,1/2]$,
on which $\mathrm{phi}$ is bijective.
\end{theorem}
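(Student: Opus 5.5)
The plan is to apply \cref{thm:FastExpSpaceComplexity} in its second regime, $\ell'-k'\le 1/2$, where the space exponent is $1-\mathrm{phi}(\ell'-k')$. Set $x^\ast=\mathrm{phi}^{-1}(1-\alpha)\in[0.4,1/2]$ and choose the replacement ratio $k'=\ell'-x^\ast$, so that $\ell'-k'=x^\ast$. This gives $\sigma_{\mathrm{rrb}}(\ell',k')=1-\mathrm{phi}(x^\ast)=\alpha$. The running time is then $\tau_{\mathrm{rrb}}(\ell',\ell'-\mathrm{phi}^{-1}(1-\alpha))$ exactly as claimed, because \eqref{eq:tau-rrb} in this case reads $\max\{1-\mathrm{phi}(x^\ast),1/2\}+\mathrm{repl}$, and this equals $1/2+\mathrm{repl}$ since $\alpha\le 1/2$.

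The main work is verifying the hypotheses of \cref{thm:FastExpSpaceComplexity}.

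First, $\mathrm{phi}^{-1}(1-\alpha)$ must be well defined. Here $\mathrm{phi}(x)=(1-x)H(x/(2(1-x)))$. At $x=1/2$ the argument of $H$ is $1/2$, so $\mathrm{phi}(1/2)=1/2$. At $x=0.4$ the argument is $1/3$, so $\mathrm{phi}(0.4)=0.6\,H(1/3)\approx 0.551$. Hence $1-\alpha\in[1/2,0.55]$ lies in the range of $\mathrm{phi}$ on $[0.4,1/2]$. Since the statement asserts that $\mathrm{phi}$ is bijective there, I take that as given; if needed, monotonicity follows by differentiating and noting that the derivative stays negative on this interval. So $x^\ast\in[0.4,1/2]$.

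Second, we need $0\le k'\le \ell'/2$. Nonnegativity holds since $\ell'>1/2\ge x^\ast$. The condition $2k'\le\ell'$ is equivalent to $\ell'\le 2x^\ast$, and this holds because $2x^\ast\ge 0.8>\ell'$. This is exactly why the range $\ell'<0.8$ and the restriction of $\mathrm{phi}$ to $[0.4,1/2]$ appear in the statement.

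Third, the condition $\epsilon n\le\ell\le(1-\epsilon)n$ holds with $\epsilon=1/100$, since $1/2<\ell'<0.8$.

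Fourth, the theorem uses the integer $k=\lfloor k'n\rfloor$. Then $(\ell-k)/n$ differs from $x^\ast$ by $O(1/n)$, and possibly slightly exceeds $x^\ast$. By \cref{fact:H(x+1/n)-H(x)}, this perturbs $p_{\max}$ and the space bound only by polynomial factors. If $\ell-k$ crossed $n/2$ when $x^\ast=1/2$, I would instead take $k=\lceil k'n\rceil$ to stay in the second case; this keeps $2k\le\ell$ because $\ell'<0.8\le 2x^\ast$ holds strictly.

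I expect this rounding and boundary bookkeeping, together with the monotonicity of $\mathrm{phi}$ on $[0.4,1/2]$, to be the only delicate points. The rest is direct substitution into \cref{thm:FastExpSpaceComplexity}.
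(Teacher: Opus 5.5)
Your proposal is correct and follows the paper's proof essentially verbatim. You set $k'=\ell'-\mathrm{phi}^{-1}(1-\alpha)$, check $0\le k'\le \ell'/2$ using $\ell'<0.8$ and $\mathrm{phi}^{-1}(1-\alpha)\in[0.4,1/2]$, and invoke \cref{thm:FastExpSpaceComplexity} in the regime $\ell'-k'\le 1/2$, where $\sigma_{\mathrm{rrb}}(\ell',k')=1-\mathrm{phi}(\mathrm{phi}^{-1}(1-\alpha))=\alpha$. The paper leaves your extra checks (range and monotonicity of $\mathrm{phi}$ on $[0.4,1/2]$, the $\epsilon$-condition, rounding) implicit. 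Your switch to $\lceil k'n\rceil$ is unnecessary: with $x^\ast=\mathrm{phi}^{-1}(1-\alpha)$ and $k=\lfloor k'n\rfloor$, one has $\ell-k<x^\ast n+1\le n/2+1$, so $\ell-k\le n/2$ because $n$ is even.
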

\begin{proof}
Set $k'=\ell'-\mathrm{phi}^{-1}(1-\alpha)$.
Since $\alpha\in[0.45,1/2]$, we have
$\mathrm{phi}^{-1}(1-\alpha)\in[0.4,1/2]$.
Also, $k'\ge0$ follows from $\ell'>1/2$ and
$\mathrm{phi}^{-1}(1-\alpha)\le1/2$.
Moreover, since $\ell'<0.8$ and
$\mathrm{phi}^{-1}(1-\alpha)\ge0.4$, we have
$2k'=2(\ell'-\mathrm{phi}^{-1}(1-\alpha))\le\ell'$.
Thus, the condition $0\le k'\le \ell'/2$ is satisfied, and we can apply
\cref{thm:FastExpSpaceComplexity}.
Since $\ell'-k'=\mathrm{phi}^{-1}(1-\alpha)\le1/2$, the space exponent is
\[
    \sigma_{\mathrm{rrb}}(\ell',k')
    =
    1-\mathrm{phi}(\ell'-k')
    =
    1-\mathrm{phi}(\mathrm{phi}^{-1}(1-\alpha))
    =
    \alpha.
\]
Therefore, the space complexity is
$O^*(2^{\sigma_{\mathrm{rrb}}(\ell',k')n})=O^*(2^{\alpha n})$.
The claimed running time follows directly from
\cref{thm:FastExpSpaceComplexity} with the above choice of $k'$.
\end{proof}

\subsection{Space-Bounded Modification of the Balanced ESS Algorithm} \label{sec:5-3}
\begin{algorithm}[htbp]
\caption{SpaceBoundedBalancedESS}
\DontPrintSemicolon
\SetKwInput{KwSpace}{Space bound}
\KwIn{Input set $S$ and minimum solution size $\ell$}
\KwOut{An ESS solution for $S$ if one exists; otherwise, NO}
\KwSpace{$O^*(2^{\alpha n})$}

Set
\[
p_{\max}
=
\begin{cases}
\min\{2^{n-\ell},\left\lfloor2^{\alpha n}\right\rfloor\}
& \text{if } \ell>n/2,\\[1mm]
\min\left\{
    \binom{n-\ell}{\left\lfloor(\ell-1)/2\right\rfloor},
    \left\lfloor2^{\alpha n}\right\rfloor
\right\}
& \text{otherwise.}
\end{cases}
\]
Pick a random prime $p$ in $[p_{\max},2p_{\max}]$, and a random residue
$r$ in $[0,p-1]$.\;

Let $C=\{X\subseteq S\mid \sum(X)\equiv_p r\}$.
Prepare the data structure $DS$ of \cref{lm:Fast-Subset-Sum-Oracle} in
$O(np)$ time and space, so that $DS$ supports random access to the
elements of $C$ and stores the value $|C|$.
If $|C|>n^d\cdot 2^n/p_{\max}$ for a fixed constant $d>2$, halt and
return \textnormal{NO}.\;

Consider the list $L$ whose $i$-th element is the sum of the $i$-th
element of $C$.
Using $DS$, each entry of $L$ can be computed on the fly in
$\mathrm{poly}(n)$ time.
Apply \cref{lm:ElementDistinctNess-TimeSpaceTradeoff} to $L$ to find
$X,Y\in C$ such that $\sum(X)=\sum(Y)$, using
$O^*(\min\{|C|,2^{\alpha n}\})$ space.
If no such pair exists, return \textnormal{NO}.\;
Return $(X\setminus Y,\,Y\setminus X)$.\;
\label{alg:SpaceBoundedBalancedESS}
\end{algorithm}
We introduce \cref{alg:SpaceBoundedBalancedESS}, which combines the
\emph{BalancedEqualSubsetSum} algorithm of
\cite{mucha_et_al:LIPIcs.ESA.2019.73} with a low-space algorithm for
Element Distinctness.
We use the following time-space tradeoff for Element Distinctness due to
\cite{lyu2023time}.
Although the statement in \cite{lyu2023time} is phrased for positive
integers, the same bound applies to nonnegative integers by adding $1$ to every array entry, which preserves all collisions.

\begin{lemma}[\cite{lyu2023time}] \label{lm:ElementDistinctNess-TimeSpaceTradeoff}
Given random access to an array of $N$ nonnegative integers
$a_1,\ldots,a_N$, with $a_i \le \mathrm{poly}(N)$ for all $i\in[N]$,
there is a Monte Carlo algorithm that either returns a colliding pair
$(i,j)$ such that $i\neq j$ and $a_i=a_j$, or reports that no such pair
exists.
The algorithm runs in $O(t(N)\mathrm{polylog}(N))$ time and uses
$O(s(N)\mathrm{polylog}(N))$ space, for any complexity bounds
$s,t\colon \mathbb{N}\to\mathbb{N}$ satisfying
$s(N)\le N$ and $s(N)^{1/2}\cdot t(N)\ge N^{3/2}$.
\end{lemma}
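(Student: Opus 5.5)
Since this statement is imported from \cite{lyu2023time}, the plan is to reconstruct the Beame--Clifford--Machmouchi-style collision-finding argument, as made unconditional by that work, rather than to derive it from earlier results of this paper. First I would dispose of the easy regime. If $s(N)=\Theta(N)$, we read all $N$ entries, sort them in $O(N\log N)$ time and $O(N\,\mathrm{polylog}(N))$ space, and scan for equal neighbours. This settles the problem deterministically and is consistent with $s^{1/2}t\ge N^{3/2}$. For smaller $s=s(N)$, the goal is to find a collision using roughly $N^{3/2}/s^{1/2}$ evaluations of the array, with only $O(s\,\mathrm{polylog}(N))$ words of memory.

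The core step is a random-walk construction. Pick a hash function $h\colon[\mathrm{poly}(N)]\to[N]$ from a small-seed family, and define $f\colon[N]\to[N]$ by $f(i)=h(a_i)$. Every true colliding pair $a_i=a_j$ gives $f(i)=f(j)$; the remaining collisions of $f$ are \emph{spurious}, and each one costs $O(1)$ time to detect by comparing $a_i$ with $a_j$. For one round, we launch $s$ walks $i,f(i),f(f(i)),\dots$ from random start points. We store only distinguished points, namely those whose image under a second hash lands in a set of density $\approx (s/N)^{1/2}$. When walks merge, the merge point is located by backtracking from the shared distinguished point, which is the van Oorschot--Wiener technique. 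One round then costs $O((sN)^{1/2})$ evaluations and $O(s)$ space, and it explores a tree of $\Theta((sN)^{1/2})$ nodes. It outputs $\Theta(s)$ collisions of $f$, which are roughly uniform among the collisions reachable in that tree.

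Next we bound the number of rounds. Fix a true colliding pair $(i,j)$. In one round, this pair is discovered when the explored region hits both $i$ and $j$, with the two walks merging there. I expect this to happen with probability roughly $\left((sN)^{1/2}/N\right)^2=s/N$. We re-sample $h$ and the start points independently in every round, so that spurious structure does not repeat. Then $O(N/s)$ rounds succeed with constant probability, for total time $O\left((N/s)\,(sN)^{1/2}\,\mathrm{polylog}(N)\right)=O\left(N^{3/2}s^{-1/2}\,\mathrm{polylog}(N)\right)$, which is any $t$ with $s^{1/2}t\ge N^{3/2}$. If no true collision appears after these rounds, we report that none exists. This errs only by false negatives, and the error can be amplified by repetition.

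The main obstacle is justifying the per-round success probability $\Omega(s/N)$ without assuming truly random $h$, and with the array adversarial (for instance, a single colliding pair among otherwise distinct values). This is exactly where earlier work relied on non-standard random-oracle assumptions. I would first try to show that $O(\log N)$-wise independent hashing, or a Nisan-type pseudorandom generator fooling the space-bounded walk process, preserves the relevant second-moment estimates on tree sizes and hitting probabilities. The polylog seed and evaluation overhead would then be absorbed in the $\mathrm{polylog}(N)$ factors. This derandomized analysis is the technical content of \cite{lyu2023time}, which I would invoke for this step if a self-contained argument proves too long.
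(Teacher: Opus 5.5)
Your proposal rests on the same thing as the paper. The paper gives no argument of its own and simply imports the unconditional $T^2S=\widetilde{O}(N^3)$ tradeoff from \cite{lyu2023time}; your BCM sketch (rounds exploring $\Theta((sN)^{1/2})$ nodes with distinguished points, per-round success $\approx s/N$, $O(N/s)$ rounds) is a reasonable heuristic account of that algorithm, and you likewise defer its only hard step, the derandomized analysis, to \cite{lyu2023time}. Don't lean on your fallback ideas, though: Nisan's generator and $O(\log N)$-wise independence do not apply off the shelf to walks that query $h$ adaptively and repeatedly. The only content the paper adds is the remark that the cited statement is for positive integers and extends to nonnegative entries by adding $1$ to every $a_i$, which preserves all collisions; include that step if you cite the theorem verbatim.
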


We use the \emph{Fast Subset-Sum Oracle} of
\cite[Section~3.1]{allcock_et_al:LIPIcs.ESA.2022.6} to support random access to the
elements of the set $C$ in \cref{alg:SpaceBoundedBalancedESS}.

\begin{lemma}[{\cite[Section~3.1]{allcock_et_al:LIPIcs.ESA.2022.6}}]
\label{lm:Fast-Subset-Sum-Oracle}
Let $\prec$ be the relation on $\{I\mid I\subseteq[n]\}$ defined as
follows: for all $I_1,I_2\subseteq[n]$, we have $I_1\prec I_2$ if and
only if $\max\{i\mid i\in I_1\Delta I_2\}\in I_2$.
Then $\prec$ is a strict total order.
Let $S=\{w_1,\ldots,w_n\}$ be an input set, and let
$C_{\mathrm{idx}}=\{I\subseteq[n]\mid \sum_{i\in I} w_i\equiv_p r\}$.
In $O(np)$ time and space, we can construct a data structure that, given
an index $J\in[|C_{\mathrm{idx}}|]$, returns the $J$-th subset in
$C_{\mathrm{idx}}$ with respect to $\prec$ in $O(n)$ time.
The data structure also stores the value $|C_{\mathrm{idx}}|$.
\end{lemma}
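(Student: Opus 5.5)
The plan has two parts. First show that $\prec$ is a strict total order by identifying it with a familiar order. Then build the data structure as a counting dynamic program over prefixes of $[n]$, and answer index queries by walking down that table from the largest element.

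\emph{Total order.} Map each $I\subseteq[n]$ to the integer $\beta(I)=\sum_{i\in I}2^{i}$; this map is a bijection onto $\{0,\dots,2^{n+1}-2\}$ restricted to even numbers. For $I_1\neq I_2$, let $m=\max(I_1\Delta I_2)$. The two binary expansions agree on all bits above $m$ and differ at bit $m$. Hence $\beta(I_1)<\beta(I_2)$ if and only if $m\in I_2$, that is, if and only if $I_1\prec I_2$. For $I_1=I_2$ the symmetric difference is empty, so the relation does not hold. Therefore $\prec$ is the pullback of $<$ on integers under an injection, and irreflexivity, transitivity and totality are inherited directly.

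\emph{Data structure.} For $j=0,\dots,n$ and $s\in\mathbb{Z}_p$, define $\mathrm{cnt}[j][s]$ as the number of $I\subseteq[j]$ with $\sum_{i\in I}w_i\equiv_p s$. The table is filled by
\[
\mathrm{cnt}[0][s]=\mathbf{1}[s=0],\qquad \mathrm{cnt}[j][s]=\mathrm{cnt}[j-1][s]+\mathrm{cnt}[j-1][(s-w_j)\bmod p].
\]
The recurrence splits subsets of $[j]$ according to whether they contain $j$. This gives $(n+1)p$ entries, each computed with $O(1)$ arithmetic operations, so the construction takes $O(np)$ time and space. We then store $|C_{\mathrm{idx}}|=\mathrm{cnt}[n][r]$.

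\emph{Query.} The key structural fact is the following. Among subsets of $[j]$ with residue $s$, every subset avoiding $j$ precedes every subset containing $j$, because $j$ is the maximum of their symmetric difference. Moreover, inside each of the two groups, the order $\prec$ restricted to the group coincides with the order $\prec$ on $[j-1]$; for the second group this holds after removing $j$ from each subset. The query therefore proceeds as follows, given $J$, starting from $s=r$ and $j=n$:
\begin{itemize}
\item if $J\le\mathrm{cnt}[j-1][s]$, exclude $j$;
\item otherwise include $j$, set $J\leftarrow J-\mathrm{cnt}[j-1][s]$, and set $s\leftarrow(s-w_j)\bmod p$;
\item decrement $j$ and repeat.
\end{itemize}
Correctness follows by induction on $j$ using the structural fact: the invariant is that the remaining choices must form the $J$-th subset of $[j]$ with residue $s$. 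The query performs $n$ iterations, each with $O(1)$ operations, so it runs in $O(n)$ time.

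\emph{Main obstacle.} The only delicate point is the cost model. The counts can be as large as $2^n$, so each is an $n$-bit number. I will state the bounds in the word-RAM model with $\Theta(n)$-bit words, which is consistent with the $\sum(S)\le 2^{O(n)}$ assumption already in force. Under a bit-complexity model, the construction and query bounds incur only a $\mathrm{poly}(n)$ factor, which is absorbed in every later $O^*(\cdot)$ use of the lemma.
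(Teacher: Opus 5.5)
Your proposal is correct and follows essentially the construction of the cited result (the paper gives no proof of its own and defers to Section~3.1 of Allcock et al.): a prefix-counting dynamic program over residues modulo $p$, then an $O(n)$ top-down walk that uses the fact that, under $\prec$, subsets avoiding the current maximum element precede those containing it. Your identification of $\prec$ with the usual order on the integers $\sum_{i\in I}2^i$, and your remark on the cost model (counts up to $2^n$ in $\Theta(n)$-bit words, with any $\mathrm{poly}(n)$ overhead absorbed in the later $O^*(\cdot)$ uses), are both sound.
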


Since the input $S$ is a set, we identify each index set $I\subseteq[n]$
with the corresponding subset $\{w_i\mid i\in I\}$ of $S$.
Under this identification, \cref{lm:Fast-Subset-Sum-Oracle} provides
random access to
$C=\{X\subseteq S\mid \sum(X)\equiv_p r\}$.
The success-probability analysis of
\cref{alg:SpaceBoundedBalancedESS} is essentially the same as that of
\cref{alg:FastExpSpace} with replacement count $k=0$.

\begin{lemma} \label{lm:SpaceBoundedBalancedESS-SuccessProb}
Assume that $\epsilon n\le \ell\le (1-\epsilon)n$ for some constant
$\epsilon>0$, and let $\alpha\in(0,1/2]$ be a constant.
Then \cref{alg:SpaceBoundedBalancedESS} returns a solution with
probability at least $\Omega(1/n^2)$.
\end{lemma}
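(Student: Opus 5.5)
This mirrors the proof of \cref{lm:FastAlgSuccessProb} with $k=0$, so no replacement step is needed and the event $\mathcal{F}$ holds trivially. Let $(A,B)$ be a minimum solution and $Z=S\setminus(A\cup B)$. Define
\[
\Phi=\{\textstyle\sum(X)\mid X\subseteq S,\ \exists Y\subseteq S,\ X\neq Y,\ \sum(X)=\sum(Y)\}.
\]
The argument of \cref{lm:Size-of-Phi} (with $k=0$, where the parity condition is vacuous) gives $|\Phi|\ge 2^{n-\ell}$ if $\ell>n/2$, and $|\Phi|\ge\binom{n-\ell}{\lfloor(\ell-1)/2\rfloor}$ otherwise. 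In either case $|\Phi|$ is at least the uncapped value of $p_{\max}$. Since the algorithm takes a minimum with $\lfloor 2^{\alpha n}\rfloor$, we still have $|\Phi|\ge p_{\max}$. Because $\epsilon n\le\ell\le(1-\epsilon)n$ and $\alpha>0$ is constant, both terms in the minimum are $2^{\Theta(n)}$, so $p_{\max}=2^{\Theta(n)}$.

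\textbf{Collision event.} Applying \cref{clm:Lower-Bound-Phi-mod-p}, with constant probability over $p$ we get $|\{a\bmod p\mid a\in\Phi\}|\ge\Omega(p/n^2)$. Since $r$ is uniform on $[0,p-1]$, some $a\in\Phi$ satisfies $a\equiv_p r$ with probability $\Omega(1/n^2)$. Call this event $\mathcal{H}$. When it occurs, there are $X\neq Y$ with $\sum(X)=\sum(Y)\equiv_p r$, so both lie in $C$. The list $L$ then contains a colliding pair, and \cref{lm:ElementDistinctNess-TimeSpaceTradeoff} returns one, barring its own constant error, which can be made polynomially small by repetition. Any colliding pair $X\neq Y$ in $C$ yields the valid ESS solution $(X\setminus Y,Y\setminus X)$, so no filtering is needed.

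\textbf{Halting event.} As in \cref{clm:Exp-Size-Of-C}, $\mathbb{E}[|C|]\le 2^n/p_{\max}$, because $p\ge p_{\max}$ and $r$ is uniform. By Markov's inequality, the halting event $|C|>n^d2^n/p_{\max}$ has probability at most $n^{-d}$. Therefore
\[
\mathbb{P}[\text{success}]\ge\Omega(1/n^2)-n^{-d}=\Omega(1/n^2),
\]
using $d>2$.

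\textbf{Main obstacle.} The delicate step is checking that the hypotheses of \cref{clm:Lower-Bound-Phi-mod-p} survive the cap at $2^{\alpha n}$. We need $|\Phi|\ge p_{\max}$, which only becomes easier under capping. We also need $p_{\max}=2^{\Theta(n)}$, which requires $\alpha$ bounded away from $0$ and $\ell$ bounded away from $0$ and $n$. Lowering $p_{\max}$ still leaves $\Omega(p/n^2)$ residues, because the claim's proof only uses $|\Phi|\ge p_{\max}$ and $p_{\max}=2^{\Theta(n)}$.
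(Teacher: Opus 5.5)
Your proposal is correct and follows the paper's proof essentially step for step. It uses the $k=0$ specialization of the $|\Phi|$ bound, the check that capping $p_{\max}$ at $\lfloor 2^{\alpha n}\rfloor$ preserves both $|\Phi|\ge p_{\max}$ and $p_{\max}=2^{\Theta(n)}$, the $\Omega(1/n^2)$ bound on $\mathbb{P}[\mathcal{H}]$ via \cref{clm:Lower-Bound-Phi-mod-p}, and the Markov bound on halting combined using $d>2$; the only cosmetic difference is that the paper keeps the constant success probability of \cref{lm:ElementDistinctNess-TimeSpaceTradeoff} as a multiplicative factor rather than amplifying it, which changes nothing.
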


\begin{proof}
We follow the same argument as in the proof of
\cref{lm:FastAlgSuccessProb}, except that we do not modify the input set
using \cref{alg:RandomReplace}.
We first state the special cases of the claims used there corresponding to $k=0$, where no input modification is performed.

\begin{claim} \label{clm:Sec5-Phi-Size}
Let
\[
\Phi
=
\left\{
    \sum(X)
    \mid
    X\subseteq S,\ \exists Y\subseteq S
    \text{ such that } X\neq Y \text{ and } \sum(X)=\sum(Y)
\right\}.
\]
Then
\[
|\Phi| \ge
\begin{cases}
2^{n-\ell}
& \text{if } \ell>n/2,\\[1mm]
\binom{n-\ell}{\left\lfloor(\ell-1)/2\right\rfloor}
& \text{otherwise.}
\end{cases}
\]
\end{claim}
\begin{proof}
The claim follows by applying the proof of \cref{lm:Size-of-Phi} with
$k=0$.
\end{proof}

\begin{claim} \label{clm:Sec5-Phi-mod-p-Size}
Suppose that $|\Phi|\ge p_{\max}$ and $p_{\max}=2^{\Theta(n)}$.
Then, with constant probability, we have
$|\{a\bmod p\mid a\in\Phi\}|\ge \Omega(p/n^2)$.
\end{claim}
\begin{proof}
The claim follows from the proof of \cref{clm:Lower-Bound-Phi-mod-p}.
\end{proof}

\begin{claim} \label{clm:Sec5-Exp-Size-C}
We have $\mathbb{E}[|C|]\le 2^n/p_{\max}$.
\end{claim}
\begin{proof}
The claim follows from the proof of \cref{clm:Exp-Size-Of-C}.
\end{proof}

Let $\mathcal{G}$ denote the event that
\cref{alg:SpaceBoundedBalancedESS} does not halt in Line~2, i.e., the
event that $|C|\le n^d \frac{2^n}{p_{\max}}$ occurs, where $d>2$ is the
fixed constant used in Line~2.
When $\mathcal{G}$ occurs, the algorithm reaches Line~3.
Let $\mathcal{H}$ denote the event that there exist $X,Y\in C$ satisfying
$\sum(X)=\sum(Y)$.
When $\mathcal{H}$ occurs,
\cref{lm:ElementDistinctNess-TimeSpaceTradeoff} finds such a pair in
Line~3 with constant probability.
Therefore, the probability that
\cref{alg:SpaceBoundedBalancedESS} outputs a solution is at least a
constant factor times $\mathbb{P}[\mathcal{G}\cap\mathcal{H}]$.

By Line~1 of \cref{alg:SpaceBoundedBalancedESS} and
\cref{clm:Sec5-Phi-Size}, we have $p_{\max}\le |\Phi|$.
From the assumptions on $\ell$ and $\alpha$, we also have
$p_{\max}=2^{\Theta(n)}$.
Therefore, by \cref{clm:Sec5-Phi-mod-p-Size}, with constant probability,
$|\{a\bmod p\mid a\in\Phi\}|\ge \Omega(p/n^2)$.
Conditioned on this event, the residue $r$ chosen in Line~1 of
\cref{alg:SpaceBoundedBalancedESS} belongs to
$\{a\bmod p\mid a\in\Phi\}$ with probability at least $\Omega(1/n^2)$.
Thus, we have
\begin{equation} \label{eq:Sec5-P[H]}
\mathbb{P}[\mathcal{H}]\ge \Omega(1/n^2).
\end{equation}

Let $\mathcal{G}^c$ denote the complement of $\mathcal{G}$, namely the
event that $|C|>n^d\frac{2^n}{p_{\max}}$ occurs.
By \cref{clm:Sec5-Exp-Size-C} and Markov's inequality, we have
\begin{equation} \label{eq:Sec5-P[G^c]}
\mathbb{P}[\mathcal{G}^c]
\le
\frac{\mathbb{E}[|C|]}{n^d\cdot 2^n/p_{\max}}
\le
\frac{1}{n^d}.
\end{equation}
Combining \eqref{eq:Sec5-P[H]} and \eqref{eq:Sec5-P[G^c]}, we obtain
\[ \mathbb{P}[\mathcal{G}\cap\mathcal{H}] = \mathbb{P}[\mathcal{H}] - \mathbb{P}[\mathcal{G}^c\cap\mathcal{H}] \ge \Omega(1/n^2)-1/n^d = \Omega(1/n^2), \]
since $d>2$.
This completes the proof of
\cref{lm:SpaceBoundedBalancedESS-SuccessProb}.
\end{proof}

We now turn to the time and space complexity analysis of
\cref{alg:SpaceBoundedBalancedESS}.

\begin{theorem} \label{thm:SpaceBoundedBalancedESS-TimeComplexity}
Assume that the minimum solution ratio satisfies
$\ell'\in[\epsilon,1-\epsilon]$ for some constant $\epsilon>0$.
For every constant $\alpha \in (0, 1/2]$, there
is a Monte Carlo algorithm for ESS that uses $O^*(2^{\alpha n})$ space
and runs in $O^*(2^{\tau_{\mathrm{sbb}}(\ell',\alpha)n})$ time.
Here, $\mathrm{phi}$ is defined in \eqref{eq:phi(x)}, and
$\tau_{\mathrm{sbb}}(\ell',\alpha)$ is defined as follows:
\begin{equation} \label{eq:tau_sbb}
    \tau_{\mathrm{sbb}}(\ell',\alpha)
    =
    \begin{cases}
        \frac{3}{2}(1-\min\{\alpha,1-\ell'\})-\frac{\alpha}{2}
        & \text{if } \ell'>\frac{1}{2}, \\[1mm]
        \frac{3}{2}(1-\min\{\alpha,\mathrm{phi}(\ell')\})-\frac{\alpha}{2}
        & \text{otherwise.}
    \end{cases}
\end{equation}
\end{theorem}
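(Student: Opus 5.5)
We run \cref{alg:SpaceBoundedBalancedESS} with the given $\alpha$ and repeat it $O(n^2)$ times. By \cref{lm:SpaceBoundedBalancedESS-SuccessProb}, each trial succeeds with probability $\Omega(1/n^2)$, so the repetitions give constant success probability. What remains is to bound the cost of a single trial. Throughout, write $\beta=1-\ell'$ if $\ell'>1/2$ and $\beta=\mathrm{phi}(\ell')$ otherwise. Exactly as in the proof of \cref{thm:FastExpSpaceComplexity} (with $k=0$, using \cref{fact:H(x+1/n)-H(x)} to remove floors), Line~1 gives $p_{\max}=\Theta^*(2^{\min\{\alpha,\beta\}n})$.

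\emph{Space.} The data structure of \cref{lm:Fast-Subset-Sum-Oracle} uses $O(np)=O^*(p_{\max})\le O^*(2^{\alpha n})$ space. The element distinctness call is run with $s(N)=\min\{N,2^{\alpha n}\}$, so the total space is $O^*(2^{\alpha n})$.

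\emph{Time.} Building the data structure costs $O^*(p_{\max})\le O^*(2^{\alpha n})$. Because of the halting check in Line~2, the list $L$ has length
\[
N=|C|\le n^d\,2^n/p_{\max}=O^*(2^{(1-\min\{\alpha,\beta\})n}).
\]
Each entry of $L$ is computed in $\mathrm{poly}(n)$ time via $DS$, and its values are at most $2^{O(n)}$. To satisfy the hypothesis $a_i\le\mathrm{poly}(N)$ of \cref{lm:ElementDistinctNess-TimeSpaceTradeoff}, I would hash the sums modulo a random prime of size $\mathrm{poly}(N)$ and verify any reported collision against the true sums, discarding false collisions. This needs some care, but it is standard; alternatively, one can note that the lemma tolerates $O(n)$-bit entries up to polynomial factors.

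I then apply \cref{lm:ElementDistinctNess-TimeSpaceTradeoff}. If $N\le 2^{\alpha n}$, we take $s=t=N$, and the time is $O^*(N)$. Otherwise we take $s=2^{\alpha n}$ and $t=N^{3/2}/2^{\alpha n/2}$, which gives time
\[
O^*\!\left(2^{\left(\frac32(1-\min\{\alpha,\beta\})-\frac{\alpha}{2}\right)n}\right)=O^*(2^{\tau_{\mathrm{sbb}}(\ell',\alpha)n}).
\]

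The main obstacle is checking that every other term is dominated by $\tau_{\mathrm{sbb}}$. Write $m=\min\{\alpha,\beta\}$.
\begin{itemize}
\item We need $\alpha\le\tau_{\mathrm{sbb}}$, i.e., $\frac32(1-m)-\frac{\alpha}{2}\ge\alpha$. Since $m\le\alpha\le1/2$, this follows from $\frac32(1-\alpha)\ge\frac32\alpha$.
\item In the case $N\le2^{\alpha n}$, we need $1-m\le\tau_{\mathrm{sbb}}$, i.e., $\frac12(1-m)\ge\frac{\alpha}{2}$. This holds because $1-m\ge1/2\ge\alpha$.
\end{itemize}
Finally, the halting check and the recovery of $(X\setminus Y,Y\setminus X)$ take polynomial time, which completes the proof.
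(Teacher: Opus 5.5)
Your proposal is correct and follows essentially the same argument as the paper. You bound the oracle construction by $O^*(p_{\max})\le O^*(2^{\alpha n})$, bound $|C|\le n^d 2^n/p_{\max}$ via the halting check, apply the element-distinctness tradeoff with space $\min\{|C|,2^{\alpha n}\}$, and plug in $p_{\max}=\Theta^*(2^{\min\{\alpha,\beta\}n})$. The differences are cosmetic: the paper bounds $|C|^{3/2}/(\min\{|C|,2^{\alpha n}\})^{1/2}$ in one step using $2^{\alpha n}\le 2^n/p_{\max}$ instead of your case split, and it does not discuss the $a_i\le\mathrm{poly}(N)$ hypothesis that your hashing remark addresses.
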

\begin{proof}
We show that \cref{alg:SpaceBoundedBalancedESS} satisfies these
complexity bounds.
Line~2 of \cref{alg:SpaceBoundedBalancedESS} takes
$O(np)=O^*(p_{\max})$ time and space.
Since $p_{\max}\le 2^{\alpha n}$ by the definition of $p_{\max}$, the
space usage of Line~2 is bounded by $O^*(2^{\alpha n})$.
Moreover, the space usage of Line~3 is bounded by
$O^*(\min\{|C|,2^{\alpha n}\})\le O^*(2^{\alpha n})$.
Therefore, the overall space complexity is bounded by
$O^*(2^{\alpha n})$.

Let $L$ be the list considered in Line~3.
Since $|L|=|C|\le O^*(2^n/p_{\max})$, we have
$\mathrm{polylog}(|L|)=\mathrm{poly}(n)$.
Thus, the $\mathrm{polylog}(|L|)$ factor can be absorbed into the
$O^*(\cdot)$ notation.

By \cref{lm:ElementDistinctNess-TimeSpaceTradeoff}, Line~3 takes
\[
O^*\left(
    \frac{|C|^{3/2}}
    {(\min\{|C|,2^{\alpha n}\})^{1/2}}
\right)
\le
O^*\left(
    \frac{(2^n/p_{\max})^{3/2}}{2^{\alpha n/2}}
\right)
\]
time.
Here, the inequality follows from $|C|\le O^*(2^n/p_{\max})$ and
$2^{\alpha n}\le 2^n/p_{\max}$, where the latter holds because
$p_{\max}\le 2^{\alpha n}$ and $\alpha\le 1/2$.
Moreover, since $p_{\max}\le 2^n/p_{\max}\le
\frac{(2^n/p_{\max})^{3/2}}{2^{\alpha n/2}}$, the running time of Line~2 is
dominated by that of Line~3.
By \cref{lm:SpaceBoundedBalancedESS-SuccessProb}, repeating the algorithm
$\mathrm{poly}(n)$ times amplifies the success probability to a constant,
and this repetition does not affect the time complexity in
$O^*(\cdot)$ notation.
Thus, the overall time complexity is
$O^*\left(\frac{(2^n/p_{\max})^{3/2}}{2^{\alpha n/2}}\right)$.

By the definition of $p_{\max}$, we have
\[
p_{\max}
=
\begin{cases}
\Theta^*(2^{\min\{\alpha,1-\ell'\}n})
& \text{if } \ell'>1/2, \\[1mm]
\Theta^*(2^{\min\{\alpha,\mathrm{phi}(\ell')\}n})
& \text{otherwise.}
\end{cases}
\]
Here, we used
$\binom{n-\ell}{\left\lfloor(\ell-1)/2\right\rfloor}
=
\Theta^*(2^{\mathrm{phi}(\ell')n})$,
which follows from
$\left|
\frac{\left\lfloor(\ell-1)/2\right\rfloor}{n-\ell}
-
\frac{\ell'}{2(1-\ell')}
\right|
\le O(1/n)$
and \cref{fact:H(x+1/n)-H(x)}, using the assumption
$\ell'\in[\epsilon,1-\epsilon]$, which implies $\ell=\Theta(n)$ and
$n-\ell=\Theta(n)$.
Therefore, we obtain
$O^*\left(\frac{(2^n/p_{\max})^{3/2}}{2^{\alpha n/2}}\right) = O^*(2^{\tau_{\mathrm{sbb}}(\ell',\alpha)n})$, which completes the proof.
\end{proof}

\subsection{Space Bounded Modificiation of the Unbalanced ESS Algorithm} \label{sec:5-4}

\begin{algorithm}[htbp]
\caption{SpaceBoundedUnbalancedESS}
\DontPrintSemicolon
\SetKwInput{KwSpace}{Space bound}
\KwIn{Input set $S$ and minimum solution size $\ell$}
\KwOut{An ESS solution for $S$ if one exists; otherwise, \textnormal{NO}}
\KwSpace{$O^*(2^{\alpha n})$}

Let $\ell'=\ell/n$.
If $\frac{1}{4}(H(\ell')+\ell')\le \alpha$, then apply
\cref{lm:UnbalancedESS} to the whole input set $S$ and return its output.
Otherwise, set $\beta=\frac{4\alpha}{H(\ell')+\ell'} < 1$.
Randomly split the input set $S$ into two disjoint sets
$S_1, S_2 \subseteq S$ such that
$|S_1|=\lfloor\beta n\rfloor$, and
$|S_2|=n-\lfloor\beta n\rfloor$.\;

Generate the triples $(c,X_2,Y_2)$ one by one, where
$X_2,Y_2\subseteq S_2$, $X_2\cap Y_2=\emptyset$,
$|X_2|+|Y_2|=\ell-\lfloor\beta\ell\rfloor$, and
$c=\sum(X_2)-\sum(Y_2)$.
We do not store all such triples simultaneously; instead, each generated
triple is processed immediately.
For each generated triple $(c,X_2,Y_2)$, apply \cref{lm:TargetESS} to
the input set $S_1$ with target integer $\kappa=-c$ and minimum solution
size parameter $\lfloor\beta\ell\rfloor$.
If \cref{lm:TargetESS} finds subsets $X_1,Y_1\subseteq S_1$ such that
$\sum(X_1)-\sum(Y_1)=-c$, then return
$(X_1\cup X_2,\,Y_1\cup Y_2)$.\;

Return \textnormal{NO}.\;

\label{alg:SpaceBoundedUnbalancedESS}
\end{algorithm}

We introduce \cref{alg:SpaceBoundedUnbalancedESS}, a space-bounded
variant of the algorithm used in \cref{lm:UnbalancedESS}.
This algorithm uses \cref{lm:TargetESS} as a subroutine.
The \emph{Target Equal-Subset-Sum Problem} is a slightly generalized
version of ESS.
As stated in \cref{lm:TargetESS}, it can be solved within the same time
and space bounds as those in \cref{lm:UnbalancedESS}
\cite[Appendix~E of the full version]{mucha_et_al:LIPIcs.ESA.2019.73}.
We provide the details in Appendix~\ref{app:Target-ESS}.

\begin{restatable}[{\cite[Appendix~E of the full version]{mucha_et_al:LIPIcs.ESA.2019.73}}]{lemma}{targetess}
\label{lm:TargetESS}
The \emph{Target Equal-Subset-Sum (Target ESS) Problem} is defined as
follows.
Given an input set $S$ of $n$ integers and a target integer $\kappa$, the
task is to find two distinct subsets $A,B\subseteq S$ such that
$\sum(A)-\sum(B)=\kappa$, or to report that no such pair of subsets
exists.
If such subset pairs exist, let $\ell$ denote the minimum value of
$|A|+|B|$ over all pairs of distinct subsets $A,B\subseteq S$ satisfying
$\sum(A)-\sum(B)=\kappa$.
We define a minimum solution of the Target ESS instance to be a pair
$(A,B)$ attaining this minimum, and let $\ell'=\ell/n$ denote the minimum
solution ratio of the Target ESS instance.
Then, for any Target ESS instance $(S,\kappa)$ with minimum solution ratio
$\ell'$, there is a Monte Carlo algorithm that solves the
\emph{Target ESS Problem} in
$O^*(2^{\frac{n}{2}(H(\ell')+\ell')})$ time and
$O^*(2^{\frac{n}{4}(H(\ell')+\ell')})$ space.
\end{restatable}

\cref{alg:SpaceBoundedUnbalancedESS} relies on the assumption that
$S_1$ contains roughly a $\beta$-fraction of the elements of a minimum
solution, just as it contains a $\beta$-fraction of the input set.
We show that such a split occurs with probability at least
$\Omega(1/\mathrm{poly}(n))$.

\begin{lemma} \label{lm:Split-Spacebounded-UnbalancedESS}
Let $A,B\subseteq S$ be a minimum solution with $|A\cup B|=\ell$.
Then
$|(A\cup B)\cap S_1|=\lfloor\beta\ell\rfloor$ and
$|(A\cup B)\cap S_2|=\ell-\lfloor\beta\ell\rfloor$
hold with probability at least $\Omega(1/\mathrm{poly}(n))$.
\end{lemma}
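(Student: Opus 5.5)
The random split puts $S_1$ as a uniformly random subset of $S$ of size $m=\lfloor\beta n\rfloor$. So the intersection size $|(A\cup B)\cap S_1|$ follows a hypergeometric distribution. I will compute the probability of the target value exactly and lower-bound it with the binomial estimates from the first Fact in the preliminaries.

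Fix the minimum solution $(A,B)$ and write $U=A\cup B$, so $|U|=\ell$. Set $j=\lfloor\beta\ell\rfloor$. The event $|U\cap S_1|=j$ automatically gives $|U\cap S_2|=\ell-j$, since $S_2=S\setminus S_1$. Its probability is
\[
P=\frac{\binom{\ell}{j}\binom{n-\ell}{m-j}}{\binom{n}{m}}.
\]
We need $0\le j\le \ell$ and $0\le m-j\le n-\ell$ for this to be positive. The first holds since $\beta<1$. For the second, note $m-j\ge \beta n-1-\beta\ell=\beta(n-\ell)-1$, and $m-j\le \beta n-\beta\ell+1$. Up to an additive $\pm1$ (handled by adjusting boundary cases), both quantities lie in the valid range.

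The key step is to show $P\ge \Omega(1/\mathrm{poly}(n))$. There are two routes.

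\textbf{Unimodality route.} The hypergeometric distribution is unimodal. Its mode lies within distance $1$ of the mean $m\ell/n$, and $m\ell/n$ is within $1$ of $\beta\ell$. The point mass therefore changes by at most a bounded ratio over $O(1)$ steps near the mode. The mode itself has mass at least $1/(\ell+1)$, because the support has at most $\ell+1$ values. The ratio of consecutive probabilities near the mean is $1+O(1/\min)$ in general, and for extreme $\ell$ the factor $\ell$ itself bounds it polynomially. This gives $P\ge 1/\mathrm{poly}(n)$ directly.

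\textbf{Entropy route.} Alternatively, apply $\binom{a}{b}\ge 2^{H(b/a)a}/(a+1)$ to the two numerator factors and $\binom{n}{m}\le 2^{H(m/n)n}$ to the denominator. Since $j/\ell$, $(m-j)/(n-\ell)$ and $m/n$ are all within $O(1/\ell)$ or $O(1/(n-\ell))$ of $\beta$, Fact~\ref{fact:H(x+1/n)-H(x)} brings the exponent to
\[
\ell H(\beta)+(n-\ell)H(\beta)-nH(\beta)=0,
\]
up to error terms of order $O(\log n)$. This again gives $P\ge \Omega(1/\mathrm{poly}(n))$.

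\textbf{Main obstacle.} The difficulty is handling the rounding carefully when $\ell$ or $n-\ell$ is small, where the entropy error term is $O(\log\ell)$ rather than $o(1)$. Multiplied by $\ell$, this error is still only $O(\log n)$, so it is harmless. For this reason I would present the unimodality argument as the cleaner route: it avoids the rounding issue entirely and yields the bound $P\ge 1/((\ell+1)\cdot O(n^2))$.
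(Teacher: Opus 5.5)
Your proposal is correct. Your entropy route is essentially the paper's proof. The paper writes the same hypergeometric probability and shows $\binom{n}{\lfloor\beta n\rfloor}=\Theta^*(2^{H(\beta)n})$. It then shows the numerator is $\Theta^*(2^{H(\beta)n})$ via the continuity of $H$, treating $\ell=O(1)$ and $\ell=n-O(1)$ as separate cases. Your observation that rounding is harmless at every scale removes that split, but state it as $\ell\,|H(j/\ell)-H(\beta)|\le \ell H(1/\ell)\le\log(e\ell)$. The entropy gap itself is $O(\log\ell/\ell)$; it is not $O(\log\ell)$ before multiplying by $\ell$, as you wrote.

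Your preferred unimodality route is genuinely different and more elementary, and it works once made precise:
\begin{itemize}
\item The mode is $M=\lfloor (m+1)(\ell+1)/(n+2)\rfloor$. Since $\frac{(m+1)(\ell+1)}{n+2}-\frac{m\ell}{n}=\frac{m(n-\ell)+\ell(n-m)+n}{n(n+2)}\in(0,1)$ and $m\ell/n\in(\beta\ell-1,\beta\ell]$, you get $|M-j|\le 1$.
\item The mode has mass at least $1/(\ell+1)$.
\item For adjacent support points $k,k+1$, the ratio $\frac{(\ell-k)(m-k)}{(k+1)(n-\ell-m+k+1)}$ is trivially in $[n^{-2},n^2]$.
\end{itemize}
Together these give $P\ge 1/((\ell+1)n^2)$, with no need for the ``$1+O(1/\min)$'' heuristic.

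Both routes need $j$ to lie in the support. You cannot ``adjust by $\pm1$'' here, because the statement fixes $j=\lfloor\beta\ell\rfloor$. Fortunately the check is exact: $0\le\lfloor\beta n\rfloor-\lfloor\beta\ell\rfloor\le\lceil\beta(n-\ell)\rceil\le n-\ell$ and $j\le\min\{\ell,m\}$.

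As for what each route buys: the paper's route reuses the entropy estimates used throughout and gives a two-sided $\Theta^*(1)$ bound. Yours gives an explicit polynomial lower bound with no asymptotic case analysis, at the cost of importing the standard hypergeometric mode formula.
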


\begin{proof}
The probability that such a good split occurs is
$\frac{
    \binom{\ell}{\lfloor\beta\ell\rfloor}
    \binom{n-\ell}{\lfloor\beta n\rfloor-\lfloor\beta\ell\rfloor}
}{
    \binom{n}{\lfloor\beta n\rfloor}
}$.
The denominator satisfies
$\binom{n}{\lfloor\beta n\rfloor}=\Theta^*(2^{H(\beta)n})$, since
$\left|\frac{\lfloor\beta n\rfloor}{n}-\beta\right| \le O(1/n)$, together
with \cref{fact:H(x+1/n)-H(x)}.

We show that the numerator is also $\Theta^*(2^{H(\beta)n})$.
If $\ell=O(1)$, then $\binom{\ell}{\lfloor\beta\ell\rfloor}$ is
polynomially bounded, and
$
    \binom{n-\ell}{\lfloor\beta n\rfloor-\lfloor\beta\ell\rfloor}
    =
    \Theta^*(2^{H(\beta)n}),
$
since
$
\left|
\frac{\lfloor\beta n\rfloor-\lfloor\beta\ell\rfloor}{n-\ell}
-\beta
\right|
\le
O(1/n),
$
together with \cref{fact:H(x+1/n)-H(x)}.
Similarly, if $\ell=n-O(1)$, then
$\binom{n-\ell}{\lfloor\beta n\rfloor-\lfloor\beta\ell\rfloor}$ is
polynomially bounded, and
$
    \binom{\ell}{\lfloor\beta\ell\rfloor}
    =
    \Theta^*(2^{H(\beta)n}),
$
since
$\left|\frac{\lfloor\beta\ell\rfloor}{\ell}-\beta\right| \le O(1/n)$,
together with \cref{fact:H(x+1/n)-H(x)}.
Otherwise, we have
\begin{align}
\binom{\ell}{\lfloor\beta\ell\rfloor}
    \binom{n-\ell}{\lfloor\beta n\rfloor-\lfloor\beta\ell\rfloor}
&=
\Theta^*(2^{H(\frac{\lfloor\beta\ell\rfloor}{\ell})\ell})
    \Theta^*(2^{H(\frac{\lfloor\beta n\rfloor-\lfloor\beta\ell\rfloor}{n-\ell})(n-\ell)}) \\
&=
\Theta^*(2^{H(\beta)\ell})
    \Theta^*(2^{H(\beta)(n-\ell)})
=
\Theta^*(2^{H(\beta)n}).
\end{align}
The second equality follows from
$\left|\frac{\lfloor\beta\ell\rfloor}{\ell}-\beta\right| \le O(1/\ell)$
and
$\left|\frac{\lfloor\beta n\rfloor-\lfloor\beta\ell\rfloor}{n-\ell}
-\beta\right| \le O(1/(n-\ell))$, together with
\cref{fact:H(x+1/n)-H(x)}.

Thus, in every case, the numerator is
$\Theta^*(2^{H(\beta)n})$.
Therefore, the good split occurs with probability $\Theta^*(1)$, and in
particular with probability at least $1/\mathrm{poly}(n)$.
This completes the proof.
\end{proof}

We now turn to the time and space complexity analysis of
\cref{alg:SpaceBoundedUnbalancedESS}.

\begin{theorem} \label{thm:SpaceBoundedUnbalancedESS}
Let $\ell'$ be the minimum solution ratio.
For every constant $\alpha>0$, there is a Monte Carlo algorithm for ESS
that uses $O^*(2^{\alpha n})$ space and runs in
$O^*(2^{\tau_{\mathrm{sbu}}(\ell',\alpha)n})$ time, where
$\tau_{\mathrm{sbu}}(\ell',\alpha)$ is defined by
\begin{equation} \label{eq:tau_sbu}
\tau_{\mathrm{sbu}}(\ell',\alpha)
=
\left(
1-\frac{\min\left\{\frac{4\alpha}{H(\ell')+\ell'},1\right\}}{2}
\right)
(H(\ell')+\ell').
\end{equation}
\end{theorem}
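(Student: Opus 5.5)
We analyze \cref{alg:SpaceBoundedUnbalancedESS}, splitting on whether $\frac14(H(\ell')+\ell')\le\alpha$. In that case the algorithm simply calls \cref{lm:UnbalancedESS}. This uses $O^*(2^{\frac n4(H(\ell')+\ell')})\le O^*(2^{\alpha n})$ space and $O^*(2^{\frac n2(H(\ell')+\ell')})$ time. Here $\min\{4\alpha/(H(\ell')+\ell'),1\}=1$, so $\tau_{\mathrm{sbu}}(\ell',\alpha)=\frac12(H(\ell')+\ell')$, as claimed. Otherwise $\beta=4\alpha/(H(\ell')+\ell')<1$, and the target exponent is $(1-\beta/2)(H(\ell')+\ell')$. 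We verify correctness, space and time for this case.

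\textbf{Correctness.} By \cref{lm:Split-Spacebounded-UnbalancedESS}, with probability $\Omega(1/\mathrm{poly}(n))$ the minimum solution $(A,B)$ has exactly $\lfloor\beta\ell\rfloor$ elements in $S_1$ and $\ell-\lfloor\beta\ell\rfloor$ in $S_2$. Put $X_i=A\cap S_i$ and $Y_i=B\cap S_i$. The loop eventually generates $(c,X_2,Y_2)$. For this triple, $(X_1,Y_1)$ solves Target ESS on $S_1$ with $\kappa=-c$, and $|X_1|+|Y_1|=\lfloor\beta\ell\rfloor$.

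We need the minimum Target ESS solution size to be exactly $\lfloor\beta\ell\rfloor$, not smaller. Suppose $(X_1',Y_1')$ is a smaller solution for that target. Then $(X_1'\cup X_2,Y_1'\cup Y_2)$ is a pair of distinct sets with equal sums. After removing common elements it is a disjoint ESS solution of size $<\ell$, contradicting minimality. This step needs $X_2\cup Y_2\ne\emptyset$ or $X_1'\ne Y_1'$. Because the target-ESS pair is required to be distinct, the combined pair is distinct. Hence \cref{lm:TargetESS} applies with ratio $\lfloor\beta\ell\rfloor/\lfloor\beta n\rfloor\approx\ell'$.

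Any pair returned by the algorithm is a valid ESS solution, since $\sum(X_1\cup X_2)=\sum(Y_1\cup Y_2)$ and the two sets are distinct. Repeating $\mathrm{poly}(n)$ times amplifies the success probability, both for the split and for the Monte Carlo subroutine. To make the subroutine's constant failure probability small across all iterations, we repeat each call $O(n)$ times. Only the correct triple actually matters, so it suffices to amplify success on that triple.

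\textbf{Space and time.} The triples are streamed, so storing them costs only polynomial space. The Target ESS call on $|S_1|\approx\beta n$ elements uses $O^*(2^{\frac{\beta n}{4}(H(\ell')+\ell')})=O^*(2^{\alpha n})$ space and $O^*(2^{\frac{\beta n}{2}(H(\ell')+\ell')})$ time. The floors are handled by \cref{fact:H(x+1/n)-H(x)}, as in \cref{lm:Split-Spacebounded-UnbalancedESS}. The number of triples is $\binom{|S_2|}{\ell-\lfloor\beta\ell\rfloor}2^{\ell-\lfloor\beta\ell\rfloor}=O^*(2^{(1-\beta)n(H(\ell')+\ell')})$. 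Multiplying the count of triples by the per-call time gives exponent $(1-\beta)(H(\ell')+\ell')+\frac\beta2(H(\ell')+\ell')=(1-\beta/2)(H(\ell')+\ell')=\tau_{\mathrm{sbu}}$.

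\textbf{Main obstacle.} The delicate point is the argument that the Target ESS subinstance has minimum ratio $\approx\ell'$. Since the running time of \cref{lm:TargetESS} grows with the ratio, the worry is that for the correct triple it could be larger than $\ell'$. The minimality argument above gives exactly $\lfloor\beta\ell\rfloor$ for the correct triple. For the other triples, the subroutine may exceed its budget. We therefore run each call with a time cutoff of $O^*(2^{\frac{\beta n}2(H(\ell')+\ell')})$ and abort if it is exceeded. This keeps the total time bound without affecting success on the correct triple. Degenerate cases such as $\ell=O(1)$ or $\ell'$ near $1$ are absorbed as in \cref{lm:Split-Spacebounded-UnbalancedESS}.
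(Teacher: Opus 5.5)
Your proposal is correct and follows essentially the same route as the paper: the case split on whether $\frac14(H(\ell')+\ell')\le\alpha$, the good-split guarantee of \cref{lm:Split-Spacebounded-UnbalancedESS}, and multiplying the $O^*(2^{(1-\beta)(H(\ell')+\ell')n})$ cost of streaming the pairs $(X_2,Y_2)$ by the $O^*(2^{\frac{\beta}{2}(H(\ell')+\ell')n})$ cost of each \cref{lm:TargetESS} call. Your additions are sound refinements of that same argument rather than a different approach: the exchange argument showing that $(A\cap S_1,B\cap S_1)$ is a minimum Target ESS solution (which the paper only asserts), the explicit check that $\beta=4\alpha/(H(\ell')+\ell')$ keeps the Target ESS call within $O^*(2^{\alpha n})$ space, and the time cutoff on calls for the other triples.
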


\begin{proof}
If $(H(\ell')+\ell')/4\le \alpha$, then by
\cref{lm:UnbalancedESS}, the algorithm uses
$O^*(2^{\frac{n}{4}(H(\ell')+\ell')})\le O^*(2^{\alpha n})$ space and
runs in $O^*(2^{\frac{n}{2}(H(\ell')+\ell')})$ time.
This is consistent with the definition of
$\tau_{\mathrm{sbu}}(\ell',\alpha)$, since
$\min\left\{\frac{4\alpha}{H(\ell')+\ell'},1\right\}=1$ in this case.
Thus, it remains to consider the case $(H(\ell')+\ell')/4>\alpha$.

By \cref{lm:Split-Spacebounded-UnbalancedESS}, repeating the
algorithm $\mathrm{poly}(n)$ times amplifies to a constant the probability
that
\[
    |(A\cup B)\cap S_1|=\lfloor\beta\ell\rfloor
    \quad\text{and}\quad
    |(A\cup B)\cap S_2|=\ell-\lfloor\beta\ell\rfloor,
\]
where $A,B\subseteq S$ are a minimum solution.
When this good split occurs, Line~2 of
\cref{alg:SpaceBoundedUnbalancedESS} eventually considers
$X_2=A\cap S_2$ and $Y_2=B\cap S_2$, since it enumerates all subsets
$X_2,Y_2\subseteq S_2$ such that $X_2\cap Y_2=\emptyset$ and
$|X_2|+|Y_2|=\ell-\lfloor\beta\ell\rfloor$.
This enumeration takes
\[
O\left(
    \binom{n-\lfloor\beta n\rfloor}
          {\ell-\lfloor\beta\ell\rfloor}
    2^{\ell-\lfloor\beta\ell\rfloor}
\right)
=
O^*(2^{(H(\ell')+\ell')(1-\beta)n})
\]
time and polynomial space, since the algorithm stores only one candidate pair $(X_2,Y_2)$ at a time, and $\binom{n-\lfloor\beta n\rfloor}{\ell-\lfloor\beta\ell\rfloor} =O^*(2^{H(\ell')(1-\beta)n})$.
To see this binomial-coefficient bound, if $n-\lfloor\beta n\rfloor=O(1)$, then the binomial coefficient is polynomially bounded. Otherwise, we have $\left| \frac{\ell-\lfloor\beta\ell\rfloor}{n-\lfloor\beta n\rfloor} - \ell' \right| \le O\left(\frac{1}{(1-\beta)n}\right)$, and the claim follows from \cref{fact:H(x+1/n)-H(x)}.

For each generated pair $X_2,Y_2$, we apply \cref{lm:TargetESS} to
$S_1$ with target integer $\kappa=\sum(Y_2)-\sum(X_2)$.
When the good split occurs and the enumeration reaches
$X_2=A\cap S_2$ and $Y_2=B\cap S_2$, the subsets
$A\cap S_1$ and $B\cap S_1$ form a minimum solution to this Target ESS
instance, with minimum solution size
$|A\cap S_1|+|B\cap S_1|=\lfloor\beta\ell\rfloor$.
Thus, applying \cref{lm:TargetESS} with this minimum solution size
parameter, we find, with constant probability, subsets
$X_1,Y_1\subseteq S_1$ such that
$\sum(X_1)-\sum(Y_1)=\sum(Y_2)-\sum(X_2)$ in
\[
O^*\left(
2^{
\frac{\lfloor\beta n\rfloor}{2}
\left(
H\left(\frac{\lfloor\beta\ell\rfloor}{\lfloor\beta n\rfloor}\right)
+
\frac{\lfloor\beta\ell\rfloor}{\lfloor\beta n\rfloor}
\right)
}
\right)
=
O^*\left(2^{\frac{\beta}{2}(H(\ell')+\ell')n}\right)
\]
time. The equality follows from $\left|\frac{\lfloor\beta\ell\rfloor}{\lfloor\beta n\rfloor} -\ell'\right|\le O(1/n)$ and \cref{fact:H(x+1/n)-H(x)}, since $\beta$ is lower-bounded by a constant $\alpha>0$.
Then $\sum(X_1)+\sum(X_2)=\sum(Y_1)+\sum(Y_2)$.
Since $S_1$ and $S_2$ are disjoint, we also have
$X_1\cap X_2=\emptyset$ and $Y_1\cap Y_2=\emptyset$.
Thus, $(X_1\cup X_2,\,Y_1\cup Y_2)$ is a valid ESS solution.

The overall time complexity is
\[
O^*(2^{(H(\ell')+\ell')(1-\beta)n})
\cdot
O^*\left(2^{\frac{\beta}{2}(H(\ell')+\ell')n}\right)
=
O^*(2^{(1-\frac{\beta}{2})(H(\ell')+\ell')n}).
\]
In this case, we have
$\beta=\frac{4\alpha}{H(\ell')+\ell'}<1$.
Thus, the above bound is consistent with the definition of
$\tau_{\mathrm{sbu}}(\ell',\alpha)$.
This completes the proof.
\end{proof}

\subsection{Space-Bounded Algorithm via Subset-Sum} \label{sec:5-5}
We obtain another space-bounded algorithm for ESS by replacing the
polynomial-space Subset-Sum algorithm used in
\cref{thm:Use-SubsetSum-As-Routine} with a space-bounded Subset-Sum
algorithm.
Specifically, we use the faster of the time-space tradeoff algorithm for
Subset Sum due to \cite{austrin2013space} and the polynomial-space
algorithm of \cite{Bansal2018-FastPolySpaceSubsetSum,chen2022truly}.

\begin{lemma}[\cite{austrin2013space,Bansal2018-FastPolySpaceSubsetSum,chen2022truly}]
\label{lm:Subset-Sum-Time-Space-Tradeoff}
Define $\rho(i)=1+i(i+1)/2$ for integers $i\ge1$.
For every constant $\alpha\in(0,1/2]$, let $i_\alpha$ be the integer such that $\frac{1}{\rho(i_\alpha+1)}<\alpha\le\frac{1}{\rho(i_\alpha)}$,
and define
\begin{equation}
\label{eq:tau-subset}
\tau_{\mathrm{subset}}(\alpha)
=
\min\left\{
1-\frac{1}{i_\alpha+1}
-\frac{\rho(i_\alpha)-2}{i_\alpha+1}\alpha,
0.86
\right\}.
\end{equation}
We also define $\tau_{\mathrm{subset}}(0)=0.86$.
Then there is a Monte Carlo algorithm for Subset-Sum that runs in
$O^*(2^{\tau_{\mathrm{subset}}(\alpha)n})$ time and uses
$O^*(2^{\alpha n})$ space.
\end{lemma}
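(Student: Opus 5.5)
This lemma is essentially a packaging of two known results, so the plan is to take, for each constant $\alpha$, whichever of the two algorithms is faster. The first is the polynomial-space $O^*(2^{0.86n})$ algorithm of \cref{lm:Bansal-Fast-PolySpace-SubsetSum}. The second is the time-space tradeoff of \cite{austrin2013space}. Since a polynomial-space algorithm trivially respects the space bound $O^*(2^{\alpha n})$ for every $\alpha\ge 0$, the term $0.86$ inside the minimum in \eqref{eq:tau-subset} is immediate. This also covers the convention $\tau_{\mathrm{subset}}(0)=0.86$. It remains to justify the first term of the minimum for $\alpha\in(0,1/2]$.

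For that term I would invoke the tradeoff of \cite{austrin2013space} directly. Their result states that for every integer $i\ge 1$ and every space exponent $\alpha$ with $1/\rho(i+1)\le\alpha\le 1/\rho(i)$, where $\rho(i)=1+i(i+1)/2$, Subset-Sum admits a Monte Carlo algorithm using $O^*(2^{\alpha n})$ space and time
\[
O^*\left(2^{\left(1-\frac{1}{i+1}-\frac{\rho(i)-2}{i+1}\alpha\right)n}\right).
\]
Their analysis uses a representation-style decomposition combined with Schroeppel--Shamir-type listing. It does not rely on the non-standard hashing assumptions that \cite{chen2022truly} later removed for the low-space Element Distinctness approach; if it did, I would cite \cite{chen2022truly} for removing them.

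Three bookkeeping checks remain. First, the intervals $(1/\rho(i+1),1/\rho(i)]$ for $i\ge 1$ partition $(0,1/\rho(1)]=(0,1/2]$, since $\rho$ is strictly increasing with $\rho(1)=2$ and $\rho(i)\to\infty$. Hence $i_\alpha$ is well defined and unique for every $\alpha\in(0,1/2]$. Second, $i_\alpha$ is a constant depending only on $\alpha$, so the stated exponent is a constant. Third, the two endpoint formulas agree at each breakpoint $\alpha=1/\rho(i+1)$, so it does not matter which adjacent interval an endpoint is assigned to.

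Finally, running the faster of the two algorithms gives time $O^*(2^{\tau_{\mathrm{subset}}(\alpha)n})$ and space $O^*(2^{\alpha n})$, because both components meet the space bound. The only real obstacle is matching the parametrization of \cite{austrin2013space} exactly to the formula in \eqref{eq:tau-subset}, including the indexing of $\rho$ and the treatment of the interval endpoints. I would verify this by checking that the formula gives the Schroeppel--Shamir point $T=2^{n/2}$ at $\alpha=1/4$ (where $i=2$ and $\rho(2)=4$), and that it is continuous at the breakpoints; everything else is citation.
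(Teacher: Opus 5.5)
Your proposal is correct and matches the paper, which obtains this lemma by running the faster of the tradeoff of \cite{austrin2013space} and the $O^*(2^{0.86n})$ polynomial-space algorithm of \cref{lm:Bansal-Fast-PolySpace-SubsetSum}; your checks that $i_\alpha$ is well defined and that the piecewise formula is continuous at each breakpoint $\alpha=1/\rho(i+1)$ also hold. One small descriptive slip: \cite{austrin2013space} is a worst-case version of the Dinur--Dunkelman--Keller--Shamir dissection method (using a random composite modulus and a bailout mechanism), not a representation-style decomposition, but this does not affect your citation-based argument.
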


We use \cref{lm:Subset-Sum-Time-Space-Tradeoff} to obtain the following
theorem.

\begin{theorem} \label{thm:SpaceBounded-UseSubset-ESS}
Let $\ell'$ be the minimum solution ratio.
For every constant $\alpha \in [0, 1/2]$, there is a Monte Carlo algorithm for ESS
that runs in $O^*(2^{(H(\ell')+\tau_{\mathrm{subset}}(\alpha)\ell')n})$
time and uses $O^*(2^{\alpha n})$ space.
\end{theorem}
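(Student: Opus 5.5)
The approach mirrors the proof of \cref{thm:Use-SubsetSum-As-Routine}, with the polynomial-space Subset-Sum routine replaced by the space-bounded one of \cref{lm:Subset-Sum-Time-Space-Tradeoff}. As before, we assume the minimum solution size $\ell=\ell'n$ is known, which costs only a polynomial overhead. We then enumerate all subsets $X\subseteq S$ with $|X|=\ell$ one at a time, in a fixed order (for example lexicographic), keeping only the current subset in memory. For each $X$ with $\sum(X)$ odd, we skip it. Otherwise, we run the Monte Carlo Subset-Sum algorithm of \cref{lm:Subset-Sum-Time-Space-Tradeoff} with space parameter $\alpha$ on the instance $X$ with target $\sum(X)/2$. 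If it returns some $A\subseteq X$ with $\sum(A)=\sum(X)/2$, we output $(A, X\setminus A)$.

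\textbf{Correctness.} $A$ and $X\setminus A$ are disjoint and have equal sums. They are distinct because all input integers are positive by \cref{lm:preprocess-inputs}, so $\sum(X)>0$ and neither part is empty. Hence no false positives occur. Conversely, a minimum solution $(A_{\min},B_{\min})$ is disjoint and has $|A_{\min}\cup B_{\min}|=\ell$. So the enumeration reaches $X=A_{\min}\cup B_{\min}$, whose sum $2\sum(A_{\min})$ is even and whose Subset-Sum instance with target $\sum(X)/2$ is a Yes-instance. On that instance the subroutine succeeds with constant probability, or with probability $1-2^{-\Omega(n)}$ after polynomially many repetitions.

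\textbf{Complexity.} Each subroutine call runs on an instance with $\ell$ elements, so the relevant exponent is measured in $\ell$, not $n$. Applying \cref{lm:Subset-Sum-Time-Space-Tradeoff} with space parameter $\alpha$ on $\ell$ elements gives time $O^*(2^{\tau_{\mathrm{subset}}(\alpha)\ell})$ and space $O^*(2^{\alpha\ell})\le O^*(2^{\alpha n})$. Strictly, one could use the larger parameter $\alpha n/\ell$ (capped at $1/2$), but using $\alpha$ itself suffices for the claimed bound, since $\alpha\ell\le\alpha n$. The space of a call is reused across iterations. The total time is
\[
O^*\left(\binom{n}{\ell}2^{\tau_{\mathrm{subset}}(\alpha)\ell}\right)\le O^*\left(2^{(H(\ell')+\tau_{\mathrm{subset}}(\alpha)\ell')n}\right).
\]
The case $\alpha=0$ is exactly \cref{thm:Use-SubsetSum-As-Routine}, using $\tau_{\mathrm{subset}}(0)=0.86$.

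\textbf{Main obstacle.} The only subtle point is applying \cref{lm:Subset-Sum-Time-Space-Tradeoff} to an instance of size $\ell$ rather than $n$. The $O^*$ notation there hides factors polynomial in the instance size, and these are polynomial in $n$. If $\ell$ is constant, the call is trivial and can be solved by brute force.
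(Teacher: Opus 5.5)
Your proposal is correct and is essentially the paper's proof. You keep the enumeration over all size-$\ell$ subsets $X$ from the proof of \cref{thm:Use-SubsetSum-As-Routine} and swap in the Subset-Sum routine of \cref{lm:Subset-Sum-Time-Space-Tradeoff}, whose per-call space $O^*(2^{\alpha\ell})\le O^*(2^{\alpha n})$ is reused across iterations; your remarks on instance size $\ell$ versus $n$ and on the case $\alpha=0$ only spell out details the paper leaves implicit.
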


\begin{proof}
We replace the $O^*(2^{0.86n})$-time polynomial-space Subset-Sum
algorithm used in the proof of \cref{thm:Use-SubsetSum-As-Routine} with
the Subset-Sum algorithm from
\cref{lm:Subset-Sum-Time-Space-Tradeoff}, which runs in
$O^*(2^{\tau_{\mathrm{subset}}(\alpha)n})$ time and uses
$O^*(2^{\alpha n})$ space.
Since the rest of the algorithm and its analysis remain unchanged, this
replacement gives the claimed time and space bounds.
\end{proof}

\section*{Acknowledgements}
The authors used Mathematica for numerical evaluations.
The authors also used ChatGPT for grammar correction and language
polishing.
All AI-assisted text was reviewed by the authors, who assume full
responsibility for the content of the paper.

\printbibliography

\appendix

\section{Proof of Lemma~\ref{lm:preprocess-inputs}}
\label{app:A}

We use the following lemma, which gives a lower bound on the number of
primes in an interval.
Recall that throughout the paper, all logarithms are base $2$.

\begin{lemma}[{\cite[p.371]{hardy1979introduction}}]
\label{lm:prime-density}
For all sufficiently large $x$, the interval $[x,2x]$ contains
$\Omega(x/\log x)$ primes.
\end{lemma}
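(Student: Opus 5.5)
The plan is a self-contained Chebyshev--Erdős argument built on the central binomial coefficient $N=\binom{2m}{m}$ with $m=\lfloor x\rfloor$. Since every prime in $(m,2m]$ satisfies $p\ge m+1>x$ and $p\le 2m\le 2x$, the set of such primes lies in $[x,2x]$. It therefore suffices to show that $(m,2m]$ contains $\Omega(m/\log m)$ primes for large integers $m$. The strategy is to bound $N$ from below, and then show that the primes outside $(m,2m]$ contribute too little to $N$ to account for it. What is left over is a large product of primes in $(m,2m]$, each of which is at most $2m$. This forces many such primes.

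The ingredients, in order, are as follows.
\begin{enumerate}
\item The lower bound $N\ge 4^m/(2m+1)$, since $N$ is the largest of the $2m+1$ terms in $(1+1)^{2m}$.
\item Legendre's formula. For every prime $p$, the exponent $v_p(N)=\sum_{j\ge1}\bigl(\lfloor 2m/p^j\rfloor-2\lfloor m/p^j\rfloor\bigr)$ satisfies $p^{v_p(N)}\le 2m$. Moreover, $v_p(N)\le 1$ when $p>\sqrt{2m}$, and $v_p(N)=0$ when $2m/3<p\le m$, because then $\lfloor 2m/p\rfloor=2$ and $\lfloor m/p\rfloor=1$.
\item The classical bound $\prod_{p\le n}p\le 4^n$. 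This is proved by induction on $n$: reduce to odd $n=2k+1$, and use that the primes in $(k+1,2k+1]$ divide $\binom{2k+1}{k}\le 4^k$.
\end{enumerate}

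Now factor $N$ according to the size of its prime divisors. The primes $p\le\sqrt{2m}$ contribute at most $(2m)^{\sqrt{2m}}$. The primes in $(\sqrt{2m},2m/3]$ contribute at most $\prod_{p\le 2m/3}p\le 4^{2m/3}$. The primes in $(2m/3,m]$ contribute nothing. Hence
\[
\prod_{m<p\le 2m}p\;\ge\;\frac{4^{m/3}}{(2m+1)(2m)^{\sqrt{2m}}}.
\]
Taking base-$2$ logarithms, and using that each factor on the left is at most $2m$, the number of primes in $(m,2m]$ is at least
\[
\frac{2m/3-\log(2m+1)-\sqrt{2m}\log(2m)}{\log(2m)},
\]
which is $\Omega(m/\log m)=\Omega(x/\log x)$ for all sufficiently large $x$.

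The main technical care lies in the Legendre-exponent case analysis, in particular the vanishing of $v_p(N)$ on $(2m/3,m]$. This gap is exactly what leaves a surplus of $4^{m/3}$ over the upper bounds. The remaining steps are routine estimates.
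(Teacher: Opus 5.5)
Your proof is correct, and it takes a genuinely different route from the paper. The paper gives no argument of its own: it imports the statement from Hardy and Wright, where it follows from the prime number theorem, which gives the sharp asymptotic $\pi(2x)-\pi(x)\sim x/\ln x$.

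You instead make Erdős's proof of Bertrand's postulate quantitative. The bound $\binom{2m}{m}\ge 4^m/(2m+1)$, Legendre's formula, the gap $(2m/3,m]$ and the primorial bound $\prod_{p\le n}p\le 4^n$ together give $\log_2\prod_{m<p\le 2m}p\ge 2m/3-O(\sqrt{m}\log m)$. This yields asymptotically at least $\tfrac{2}{3}\,m/\log_2(2m)\approx 0.46\,m/\ln m$ primes in $(m,2m]\subseteq[x,2x]$.

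You leave one detail implicit. For $p\in(2m/3,m]$, the Legendre terms with $j\ge 2$ must vanish, and $\sqrt{2m}<2m/3$ must hold so that your three ranges of primes are disjoint. Both follow from $(2m/3)^2>2m$, i.e.\ $m\ge 5$, which is harmless under ``sufficiently large $x$''.

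What each route buys: your argument is short, elementary and self-contained. It also counts primes in the interval directly, so it avoids the subtraction problem, since a bare $\pi(x)\asymp x/\log x$ with unspecified constants would not bound $\pi(2x)-\pi(x)$ from below. Its constant is weaker than optimal, but that is irrelevant for the paper. The lemma is used only in the preprocessing lemma and in the residue-counting claim for $\Phi$, and both need only the order $\Omega(x/\log x)$ at $x=2^{\Theta(n)}$. The citation delivers the optimal constant, but it rests on the much heavier prime number theorem.
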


We restate \cref{lm:preprocess-inputs}.
\PreprocessInputs*

We essentially repeat the argument from
\cite[Appendix~A of the full version]{mucha_et_al:LIPIcs.ESA.2019.73}.
If $0\in S$, then we can immediately return an ESS solution
$A=\{0\}$ and $B=\emptyset$.
If $m\ge 2^n$, then an algorithm running in time
$O(mn4^n)\le O(nm^3)$ is polynomial in $n$ and $m$.
Thus, we may assume that $0\notin S$ and $m<2^n$.

We write $S=\{w_1,\ldots,w_n\}$ and construct a new set
$S'=\{w'_1,\ldots,w'_{n+\lceil\log n\rceil}\}$ as follows.
We pick a random prime $p\in[2^{bn},2^{bn+1}]$ for a constant
$b\ge 6$.
For each $i\in[n]$, we set $w'_i = w_i \bmod p$.
We then add $\lceil\log n\rceil$ auxiliary elements by setting
$w'_{n+j}=2^{j-1}p$ for each
$j\in[\lceil\log n\rceil]$.

Since $p=2^{O(n)}$, we have $0\le s'\le 2^{O(n)}$ for all $s'\in S'$.
The bad event that either some constructed value is $0$ or two constructed
values coincide is included in the event that $S'$ has an ESS solution
that does not correspond to any ESS solution of $S$.
We will later show that the probability of this latter event is
$O(2^{-n})$.

For every $X,Y\subseteq[n]$, if
$\sum_{i\in X} w_i=\sum_{i\in Y} w_i$, then
$\sum_{i\in X} w'_i\equiv \sum_{i\in Y} w'_i \pmod p$.
Hence, $\left|\sum_{i\in X} w'_i-\sum_{i\in Y} w'_i\right| < np$
is a multiple of $p$.
Therefore, there exists an integer $r$ with $0\le r < n$ such that
\[
    \left|\sum_{i\in X} w'_i-\sum_{i\in Y} w'_i\right|=rp.
\]
Since the auxiliary elements are
$p,2p,4p,\dots,2^{\lceil\log n\rceil-1}p$, every multiple $rp$ with
$0\le r < n$ can be represented as the sum of a subset of the auxiliary
elements, by the binary representation of $r$.
Thus, if $S$ has an ESS solution, then $S'$ also has an ESS solution.

It remains to bound the probability that $S'$ contains an ESS solution
that does not correspond to any ESS solution of $S$.
First, there is no ESS solution of $S'$ that uses only auxiliary elements,
because all subset sums of the auxiliary elements are distinct.
Thus, it suffices to upper-bound the probability that there exist
$I,J\subseteq[n]$ such that
$\sum_{i\in I} w'_i\equiv \sum_{j\in J} w'_j \pmod p$ but
$\sum_{i\in I} w_i\neq \sum_{j\in J} w_j$.
This means that the prime $p$ divides
\[
    D(I,J)=\sum_{i\in I} w_i-\sum_{j\in J} w_j .
\]
We call such primes \emph{bad}.
Since $D(I,J)\neq0$ and $|D(I,J)|\le n2^m$, there are at most
$\log(n2^m)$ bad primes for each fixed pair $I,J\subseteq[n]$.
Since there are $2^{2n}$ possible pairs $I,J\subseteq[n]$, the total
number of bad primes is at most
\[
2^{2n}\log(n2^m)
=
2^{2n}(m+\log n)
\le
2^{4n},
\]
where the last inequality follows from $m<2^n$.

By \cref{lm:prime-density}, the interval $[2^{bn},2^{bn+1}]$ contains
$\Omega(2^{bn}/n)$ primes.
Therefore, the probability that we choose a bad prime from this interval
is upper-bounded by
\[
O\left(\frac{2^{4n}}{2^{bn}/n}\right)
=
O(n2^{-(b-4)n})
\le
O(2^{-n}),
\]
since $b\ge 6$.
Thus, with probability at least $1-O(2^{-n})$, every ESS solution of
$S'$ corresponds to an ESS solution of $S$.
Given an ESS solution of $S'$, we discard the auxiliary elements and
verify the equality in the original instance; if the verification
succeeds, the remaining elements form a corresponding ESS solution of
$S$.
This completes the proof.

\section{Omitted Proofs and Additional Explanations for Section~\ref{sec:FasterExpSpaceAlgorithms}}

Recall that the modified input set $S'$ generated by
\cref{alg:RandomReplace} still satisfies the property guaranteed by
\cref{lm:preprocess-inputs}: all input integers are positive and at most $2^{O(n)}$.

\subsection{Proof of Claim~\ref{clm:Lower-Bound-Phi-mod-p}}
\label{app:Repeat-Proof-Of-ESA-Lemma3.6}
We restate the definition of $\Phi$ and \cref{clm:Lower-Bound-Phi-mod-p}.
\[
\Phi =
\left\{
\sum(X) \;\middle|\;
\begin{aligned}
&X \subseteq S',\ \exists Y \subseteq S' \text{ such that }
X \neq Y,\ \sum(X)=\sum(Y),\\
&X \Delta Y \text{ contains exactly one of } e_{2i-1}
\text{ and } e_{2i}
\text{ for every } i \in [k]
\end{aligned}
\right\}.
\]

\lowerboundphimodp*

We follow the proof of \cite[Lemma~3.6]{mucha_et_al:LIPIcs.ESA.2019.73}.
Let $a_1,a_2\in\Phi$ be two distinct elements.
Then $a_1\equiv a_2\pmod p$ holds if and only if $p$ divides
$|a_1-a_2|$.
By \cref{lm:preprocess-inputs}, we have
$|a_1-a_2|\le 2^{O(n)}$, and hence $|a_1-a_2|$ has only $O(n)$
distinct prime divisors.
On the other hand, by \cref{lm:prime-density}, for sufficiently large
$n$, the interval $[p_{\max},2p_{\max}]$ contains at least
$
\Omega\left(\frac{p_{\max}}{\log p_{\max}}\right)
=
\Omega\left(\frac{p_{\max}}{n}\right)
$
primes, where we use $p_{\max} = 2^{\Theta(n)}$.
Since $p$ is chosen uniformly at random from the primes in
$[p_{\max},2p_{\max}]$, the collision probability for this fixed pair
$a_1,a_2$ is at most $O(n^2/p_{\max})$.

Let
\[
q = \sum_{r = 0}^{p-1} |\{ a \in \Phi \mid a \equiv r \pmod p\}|^2,
\]
which is the number of collisions, including trivial ones.
By linearity of expectation and $p_{\max} \le |\Phi|$, we have
\[
\mathbb{E}[q]
\le
O\left(|\Phi| + |\Phi|^2 \cdot \frac{n^2}{p_{\max}}\right)
\le
O\left(\frac{(|\Phi|n)^2}{p_{\max}}\right).
\]
By Markov's inequality, with constant probability, we have
$
q \le O\left(\frac{(|\Phi|n)^2}{p_{\max}}\right).
$
If this happens, then by the Cauchy--Schwarz inequality,
\[
|\{a \bmod p \mid a \in \Phi\}|
\ge
\frac{|\Phi|^2}{q}
\ge
\Omega(p_{\max}/n^2)
=
\Omega(p/n^2),
\]
where the last follows from $p_{\max} \le p \le 2p_{\max}$.

\subsection{Small-Space Generation of Modular Subset-Sum Lists}
\label{app:Modulo-Schroeppel-Shamir}

\begin{algorithm}[htbp]
\caption{ModuloSmallSpaceGenerator}
\DontPrintSemicolon

\KwIn{Integer lists $A$ and $B$, and modulus $p$}
\KwOut{The multiset
$C = \{\!\{(a+b) \bmod p \mid a \in A,\ b \in B\}\!\}$
generated in ascending order}

Replace each element of $A$ and $B$ by its residue modulo $p$.\;
Sort $B$ in ascending order.\;
Write $A=[a_1,\ldots,a_m]$ and $B=[b_1,\ldots,b_n]$.\;
Initialize an array $D=[d_1,\ldots,d_m]$.\;
Initialize a min-priority queue $Q$.\;

\For{$i=1,\ldots,m$}{
    Let $d_i$ be the smallest index $j$ such that $b_j\ge p-a_i$;
    if no such index exists, set $d_i=1$.\;
    Insert $\bigl((a_i+b_{d_i})\bmod p,\ i,\ 0\bigr)$ into $Q$.\;
}

\While{$Q$ is not empty}{
    Extract $\bigl(s,i,t\bigr)$ from $Q$.\;
    Output $s$.\;
    \If{$t+1<n$}{
        Let $j'=((d_i+t)\bmod n)+1$.\;
        Insert $\bigl((a_i+b_{j'})\bmod p,\ i,\ t+1\bigr)$ into $Q$.\;
    }
}

\label{alg:ModuloSmallSpaceGenerator}
\end{algorithm}

\Cref{lm:SmallSpaceGeneration} is used implicitly in the balanced 4-table algorithm
of Schroeppel and Shamir~\cite{SchroeppelAndShamir1981}.

\begin{lemma}[\cite{SchroeppelAndShamir1981}]
\label{lm:SmallSpaceGeneration}
Given two integer lists $A$ and $B$ of size at most $N$, the multiset
$C = \{\!\{a+b \mid a \in A,\ b \in B\}\!\}$ has size at most $N^2$ and
can be generated in ascending or descending order in $O(N^2\log N)$ time
using only $O(N)$ additional space.
\end{lemma}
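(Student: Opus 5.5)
The plan is the classical Schroeppel--Shamir heap-based merge. Sort $B$ once, then for each $a\in A$ keep a pointer into $B$. All candidate sums are held in a min-priority queue with one entry per element of $A$.

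Concretely, sorting $B$ costs $O(N\log N)$ time and $O(N)$ space. For each $a_i\in A$ we insert the pair $(a_i+b_1, i, 1)$ into the queue, where $b_1$ is the smallest element of $B$. We then repeat the following step. Extract the minimum $(s,i,j)$ and output $s$. If $j<|B|$, insert $(a_i+b_{j+1},i,j+1)$.

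The queue always holds at most one entry per index $i$, so it has size at most $|A|\le N$. Together with the sorted copy of $B$, the additional space is $O(N)$.

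The key step is correctness, i.e., that the outputs come out in nondecreasing order and that each pair $(i,j)$ is output exactly once. I would prove two invariants.

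\begin{enumerate}
\item For each $i$, the entry for $i$ currently in the queue is $a_i+b_j$, where $j$ is the smallest index not yet output for row $i$. This holds because row $i$ is advanced only when its current entry is extracted.
\item Every sum not yet output is at least the current queue minimum. This follows because $B$ is sorted, so within each row the sums are nondecreasing, $a_i+b_j\le a_i+b_{j+1}$. Hence every pending sum in row $i$ dominates row $i$'s queue entry, which in turn dominates the global minimum.
\end{enumerate}

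From these, the extracted values form a nondecreasing sequence. Each of the $|A||B|\le N^2$ pairs is inserted and extracted exactly once, so the multiset $C$ is output completely and $|C|\le N^2$.

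For the running time, each of the at most $N^2$ extractions and insertions costs $O(\log N)$ in a binary heap, giving $O(N^2\log N)$ in total.

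Descending order is symmetric. We either use a max-heap with pointers starting at $b_{|B|}$ and moving downward, or we negate all elements and run the ascending version.

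I expect no real obstacle here. The only point requiring care is stating the monotonicity invariant precisely. Ties are harmless because $C$ is a multiset: ties can be broken arbitrarily in the heap.
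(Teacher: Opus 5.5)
Your proposal is correct and follows essentially the same route as the paper. The paper cites the classical Schroeppel--Shamir heap-based merge: sort $B$, keep one priority-queue entry per element of $A$, and advance that row's pointer into $B$ on each extraction. Its pseudocode for the modular variant (\cref{alg:ModuloSmallSpaceGenerator}) is exactly your procedure plus a per-row cyclic starting offset $d_i$ into $B$, and your two invariants supply the correctness argument that the paper leaves implicit.
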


In \cref{lm:Enum-C-time-space}, we use a modular version of
\cref{lm:SmallSpaceGeneration}, which generates the values in ascending
or descending order of their residues modulo $p$.
We give pseudocode for the ascending-order version in
\cref{alg:ModuloSmallSpaceGenerator}, obtained by slightly modifying the
implementation of Schroeppel and Shamir~\cite{SchroeppelAndShamir1981}.

The main modification is that we use an additional array $D$ to specify
the cyclic rotation of $B$ for each element of $A$.
Although \cref{alg:ModuloSmallSpaceGenerator} outputs only the generated
values, we can also return the corresponding index pair $(i,j)\in[m]\times[n]$ together
with each value.
The descending-order version can be obtained analogously.
This gives \cref{lm:ModuloSmallSpaceGeneration}.

\begin{lemma}
\label{lm:ModuloSmallSpaceGeneration}
Given two integer lists $A$ and $B$ of size at most $N$, the multiset
$C = \{\!\{(a+b) \bmod p \mid a \in A,\ b \in B\}\!\}$ of size at most $N^2$
can be generated in ascending or descending order in $O(N^2\log N)$ time
using only $O(N)$ additional space.
\end{lemma}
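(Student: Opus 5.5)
The plan is to prove correctness of \cref{alg:ModuloSmallSpaceGenerator} directly, by reducing to the non-modular argument behind \cref{lm:SmallSpaceGeneration}. After Line~1 all values lie in $[0,p-1]$, so for fixed $a_i$ the map $b\mapsto(a_i+b)\bmod p$ equals $a_i+b$ if $b<p-a_i$ and $a_i+b-p$ otherwise. Since $B$ is sorted, the entries $b_j$ with $b_j\ge p-a_i$ form a suffix $b_{d_i},\ldots,b_n$, and their values $(a_i+b_j)\bmod p=a_i+b_j-p$ lie in $[0,a_i-1]$ and increase with $j$. The remaining prefix $b_1,\ldots,b_{d_i-1}$ gives values $a_i+b_j\in[a_i,p-1]$, also increasing in $j$. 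Hence the cyclic rotation of $B$ starting at index $d_i$ lists the residues $(a_i+b)\bmod p$ in nondecreasing order. This is exactly the order visited by $j'=((d_i+t)\bmod n)+1$ as $t$ runs from $0$ to $n-1$. The convention $d_i=1$ when no such index exists gives the unrotated list, which is correct in that case.

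With this in hand, the algorithm is a standard $m$-way merge of $m$ sorted streams, one per $a_i$, each of length $n$. The priority queue always holds the current head of every nonexhausted stream. I will argue by the usual invariant: every value already output is at most every value still in $Q$ or not yet inserted. The invariant holds because each stream is sorted and the extracted element is the minimum over all heads. Each of the $mn$ pairs $(i,j)$ is inserted exactly once, because $t$ goes from $0$ to $n-1$ and the rotation is a bijection on $[n]$. So the output is exactly the multiset $C$, produced in ascending order, and index pairs can be carried alongside the values.

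For the resources, sorting $B$ and computing each $d_i$ by binary search cost $O(N\log N)$. The $mn\le N^2$ extract/insert operations each cost $O(\log N)$, for $O(N^2\log N)$ total time. The only additional storage is $D$ and $Q$, each of size at most $m\le N$, so the extra space is $O(N)$. For the descending version I will run the same scheme with each rotated stream traversed backwards, starting at index $d_i-1$ cyclically, and with a max-priority queue.

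The only delicate step is the first one: showing that the rotation point $d_i$ is exactly where the wraparound modulo $p$ happens, and that equal values and the boundary case are handled correctly. Both are covered by the two-interval argument above.
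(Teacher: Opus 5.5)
Your proposal is correct and follows the same route as the paper. The paper justifies the lemma only by exhibiting \cref{alg:ModuloSmallSpaceGenerator}, the Schroeppel--Shamir heap merge with the array $D$ recording, for each $a_i$, a cyclic rotation of the sorted $B$. You supply the correctness argument the paper leaves implicit: $d_i$ is exactly the wraparound point, so each rotated stream is sorted, and then the standard $m$-way merge and the $O(N)$-space, $O(N^2\log N)$-time accounting finish the proof.

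One cosmetic slip: $j'=((d_i+t)\bmod n)+1$ is the index of the $(t+1)$-st element of the rotation, since the $t=0$ element $b_{d_i}$ is inserted in the initialization loop. So the rotation is traced by $b_{d_i}$ followed by $j'$ for $t=0,\ldots,n-2$, not by $j'$ for $t=0,\ldots,n-1$.
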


\subsection{Small-Space Algorithm for Unbalanced Equal Subset Sum}
\label{app:SmallSpaceUnbalancedESS}

\begin{algorithm}[t]
\caption{SmallSpaceUnbalancedESS}
\DontPrintSemicolon

\KwIn{Input set $S$, target integer $\kappa$, and minimum solution size $\ell$}
\KwOut{An ESS solution for S if one exists; otherwise, NO}

Randomly split $S$ into four disjoint sets
$S_1, S_2, S_3, S_4 \subseteq S$
such that $|S_i|=n/4$ for all $i \in [4]$.\;
For each $i\in [4]$, enumerate
\begin{minipage}[h]{0.9\linewidth}
\[
D_i =
\left\{\sum(X)-\sum(Y) \mid X,Y\subseteq S_i,\ X\cap Y=\emptyset,\ |X|+|Y|=\ell/4\right\}.
\]
\end{minipage}
\;
\begin{minipage}[t]{0.9\linewidth}
Define $C_1$ and $C_2$ as
\[
\begin{aligned}
C_1
&=
\left\{\sum(X)-\sum(Y) \mid
X,Y\subseteq S_1\cup S_2,\ X\cap Y=\emptyset,\ |X|+|Y|=\ell/2\right\},\\
C_2
&=
\left\{\sum(X)-\sum(Y) \mid
X,Y\subseteq S_3\cup S_4,\ X\cap Y=\emptyset,\ |X|+|Y|=\ell/2\right\}.
\end{aligned}
\]
\end{minipage}\;
Initialize a generator $G_1$ that generates the values of $C_1$ in
ascending order by applying \cref{lm:SmallSpaceGeneration} to $D_1$ and
$D_2$.\;
Initialize a generator $G_2$ that generates the values of $C_2$ in
descending order by applying \cref{lm:SmallSpaceGeneration} to $D_3$ and
$D_4$.\;
Set $x_1$ and $x_2$ to the first values generated by $G_1$ and $G_2$,
respectively.\;
\While{$x_1$ and $x_2$ are both defined}{
    \If{$x_1+x_2=0$}{
        Let $A_1, B_1 \subseteq S_1\cup S_2$ be such that
        $x_1 = \sum(A_1) - \sum(B_1)$.\;
        Let $A_2, B_2 \subseteq S_3\cup S_4$ be such that
        $x_2 = \sum(A_2) - \sum(B_2)$.\;
        \Return{$(A_1\cup A_2,\ B_1\cup B_2)$}\;
    }
    \ElseIf{$x_1+x_2<0$}{
        Advance $G_1$ and update $x_1$ to the next generated value.\;
    }
    \Else{
        Advance $G_2$ and update $x_2$ to the next generated value.\;
    }
}
\Return{\textnormal{NO}}\;
\label{alg:SmallSpaceUnbalancedESS}
\end{algorithm}

We can reduce the space usage of the \emph{UnbalancedEqualSubsetSum}
algorithm of \cite{mucha_et_al:LIPIcs.ESA.2019.73} from
$O^*(\binom{n/2}{\ell/2}2^{\ell/2})$
$=O^*(2^{\frac{n}{2}(H(\ell')+\ell')})$
to
$O^*(\binom{n/4}{\ell/4}2^{\ell/4})$
$=O^*(2^{\frac{n}{4}(H(\ell')+\ell')})$
by applying \cref{lm:SmallSpaceGeneration}.
Recall that $n$ is assumed to be divisible by $12$ in \cref{sec:prelim}.
For simplicity, we further assume here that $\ell$ is divisible by $4$.
The case where $\ell$ is not divisible by $4$ can be handled similarly,
although the floors and ceilings make the notation more cumbersome.
We show the pseudocode in \cref{alg:SmallSpaceUnbalancedESS}.

Let $A,B\subseteq S$ be a fixed minimum solution with $|A\cup B|=\ell$.
The probability that the random four-way split satisfies
$|(A\cup B)\cap S_i|=\ell/4$ for all $i\in[4]$ is
\[
\frac{
\binom{\ell}{\ell/4,\ell/4,\ell/4,\ell/4}
\binom{n-\ell}{(n-\ell)/4,(n-\ell)/4,(n-\ell)/4,(n-\ell)/4}
}{
\binom{n}{n/4,n/4,n/4,n/4}
}
\ge
\frac{4^\ell 4^{n-\ell}}{\binom{n+3}{3}^2 4^n}
=
\frac{1}{\binom{n+3}{3}^2}
\]
where we use
\[
\frac{4^N}{\binom{N+3}{3}}
\le
\binom{N}{N/4,N/4,N/4,N/4}
\le
4^N.
\]
Thus, by repeating the algorithm $\mathrm{poly}(n)$ times, we can amplify
the success probability to a constant without affecting the running time
in $O^*(\cdot)$ notation.
Compared to the \emph{UnbalancedEqualSubsetSum} algorithm of
\cite{mucha_et_al:LIPIcs.ESA.2019.73},
\cref{alg:SmallSpaceUnbalancedESS} stores $D_i$ for $i\in[4]$
instead of $C_i$ for $i\in[2]$.
Thus, the space usage is reduced from $O^*(|C_i|)=O^*(2^{\frac{n}{2}(H(\ell')+\ell')})$
to $O^*(|D_i|)=O^*(2^{\frac{n}{4}(H(\ell')+\ell')})$.
The elements of $C_i$ can be generated in the required ascending or descending
order using \cref{lm:SmallSpaceGeneration}.
This takes $O^*(|D_i|)$ space and $O^*(|C_i|)$ time.

\section{Generalization to Target Equal-Subset-Sum}
\label{app:Target-ESS}
We restate \cref{lm:TargetESS} here.
\targetess*
The meet-in-the-middle algorithm for ESS shown in
\cref{alg:SmallSpaceUnbalancedESS} also works for Target ESS.
The only difference is that we need to determine whether there exist
$x_1\in C_1$ and $x_2\in C_2$ such that $x_1+x_2=\kappa$, rather than
$x_1+x_2=0$.
By replacing the target value $0$ in Lines~8 and~12 of
\cref{alg:SmallSpaceUnbalancedESS} with $\kappa$, we obtain an algorithm solving Target ESS
with the same time and space complexities.

\end{document}